\def\BibTeX{{\rm B\kern-.05em{\sc i\kern-.025em b}\kern-.08em
    T\kern-.1667em\lower.7ex\hbox{E}\kern-.125emX}}
\newtheorem{theorem}{Theorem}%[section]
\newtheorem{corollary}{Corollary}%[section]
\newtheorem{lemma}{Lemma}%[section]
\newtheorem{remark}{Remark}%[section]
\newcommand{\be}{\begin{equation}}
\newcommand{\ee}{\end{equation}}
\newcommand{\beq}{\begin{equation}}
\newcommand{\eeq}{\end{equation}}
\newcommand{\beqy}{\begin{eqnarray}}
\newcommand{\eeqy}{\end{eqnarray}}
\newcommand{\beqynn}{\begin{eqnarray*}}
\newcommand{\eeqynn}{\end{eqnarray*}}
\newcommand{\al}[1]{\begin{align}#1\end{align}}
\newcommand{\aln}[1]{\begin{align*}#1\end{align*}}
\newcommand{\ba}{\begin{array}}
\newcommand{\ea}{\end{array}}
\newcommand{\bmx}{\begin{bmatrix}}
\newcommand{\emx}{\end{bmatrix}}
\newcommand{\bsmx}{\left[\begin{smallmatrix}}
\newcommand{\esmx}{\end{smallmatrix}\right]}
\newcommand{\bmxc}[1]{\left[\begin{array}{@{}#1@{}}}
\newcommand{\emxc}{\end{array}\right]}
\newcommand{\bt}[1]{\begin{tabular}{#1}}
\newcommand{\et}{\end{tabular}}
\newcommand{\bc}{\begin{center}}
\newcommand{\ec}{\end{center}}
\newcommand{\ben}{\begin{enumerate}}
\newcommand{\een}{\end{enumerate}}
\newcommand{\bi}{\begin{itemize}}
\newcommand{\ei}{\end{itemize}}
\renewcommand{\top}{{\mkern-1.5mu\mathsf{T}}}
\newcommand{\erf}{\mathrm{erf}}
\DeclareMathOperator*{\argmin}{argmin}
\DeclareMathOperator*{\argmax}{argmax}
\newcommand{\Rbb}{{\mathbb{R}}}
\newcommand{\Zbb}{{\mathbb{Z}}}
\newcommand{\Rn}{\Rbb^{n}}
\newcommand{\Rnbn}{\Rbb^{n \times n}}
\newcommand{\Rmbm}{\Rbb^{m \times m}}
\newcommand{\Rmbn}{\Rbb^{m \times n}}
\newcommand{\Zn}{\Zbb^{n}}
\newcommand{\Znbn}{\Zbb^{n \times n}}
\newcommand{\sOB}{{\scriptscriptstyle \mathrm{OB}}}
\newcommand{\sBB}{{\scriptscriptstyle \mathrm{BB}}}
\newcommand{\sRB}{{\scriptscriptstyle \mathrm{RB}}}
\newcommand{\calX}{{\mathcal{X}}}
\newcommand{\A}{\mathbf{A}}
\newcommand{\D}{\mathbf{D}}
\newcommand{\F}{\mathbf{F}}
\newcommand{\G}{\mathbf{G}}
\newcommand{\I}{\mathbf{I}}
\renewcommand{\P}{\mathbf{P}}
\newcommand{\Q}{\mathbf{Q}}
\newcommand{\R}{\mathbf{R}}
\newcommand{\U}{\mathbf{U}}
\newcommand{\V}{\mathbf{V}}
\renewcommand{\a}{\mathbf{a}}
\renewcommand{\d}{\mathbf{d}}
\newcommand{\e}{\mathbf{e}}
\newcommand{\f}{\mathbf{f}}
\newcommand{\p}{\mathbf{p}}
\newcommand{\boldr}{\mathbf{r}}
\renewcommand{\v}{{\mathbf{v}}}
\newcommand{\x}{{\mathbf{x}}}
\newcommand{\y}{{\mathbf{y}}}
\newcommand{\z}{{\mathbf{z}}}
\newcommand{\0}{{\boldsymbol{0}}}
\newcommand{\br}{{\bar{r}}}
\newcommand{\bE}{{\bar{E}}}
\newcommand{\blam}{{\bar{\lambda}}}
\newcommand{\bomega}{{\bar{\omega}}}
\newcommand{\hr}{\hat{r}}
\newcommand{\hbR}{\hat{\R}}
\newcommand{\hbv}{{\hat{\v}}}
\newcommand{\hby}{{\hat{\y}}}
\newcommand{\tv}{\tilde{v}}
\newcommand{\ty}{\tilde{y}}
\newcommand{\tbv}{{\tilde{\v}}}
\newcommand{\tby}{{\tilde{\y}}}
\begin{document}
% \title{On the $L_0$-regularized Babai Point}
\title{Success probability of the $L_0$-regularized box-constrained Babai point and column permutation strategies}
\author{%
\IEEEauthorblockN{Xiao-Wen Chang and Yingzi Xu}
\IEEEauthorblockA{School of Computer Science, McGill University \\
  3480 University Street, Montreal, H3A 0E9, Canada \\
  Email: xiao-wen.chang@mcgill.ca,  yingzi.xu@mail.mcgill.ca
  }
% \thanks{This paper was presented in part by Xiao-Wen Chang and Victor Jukic Pr\'evost at 2018 IEEE International Symposium on Information Theory (ISIT), Vail, Colorado, USA.}
}
\maketitle

\begin{abstract}
We consider the success probability 
of the $L_0$-regularized box-constrained Babai point, 
which is a suboptimal solution to the $L_0$-regularized 
box-constrained integer least squares problem
and can be used for MIMO detection.
First, we derive formulas for the success probability 
of both $L_0$-regularized and unregularized box-constrained Babai points. 
Then we investigate the properties of the $L_0$-regularized box-constrained Babai point, 
including the optimality of the regularization parameter, 
the monotonicity of its success probability,
and the monotonicity of the ratio of the two success probabilities.
A bound on the success probability of the $L_0$-regularized Babai point
is derived.
After that, we analyze the effect of the LLL-P  permutation strategy 
on the success probability of the $L_0$-regularized Babai point. 
Then we propose some success probability based 
column permutation strategies to increase the success probability 
of the $L_0$-regularized box-constrained Babai point. 
Finally, we present numerical tests to confirm our theoretical results and 
to show the advantage of the $L_0$ regularization
and the effectiveness of the proposed column permutation algorithms 
compared to existing strategies. 
\end{abstract}

\begin{IEEEkeywords}
%% copied from the conference paper
Detection, $L_0$-regularization, integer least squares, Babai point, sparse signals, success probability, column permutations
\end{IEEEkeywords}

\section{Introduction}
In many applications, the unknown parameter vector $\x^\ast$ and the observation vector $\y$ satisfy  
the following linear relation:
\be \label{eq:lm}
\y=\A\x^\ast+\v,  \ \ \v \sim N(\0, \sigma^2 \I),
\ee 
where $\A\in \Rmbn$ is a model matrix and $\v$ is an $m$ dimensional noise vector.
In this paper, we consider the case that $\x^\ast$ is random, 
and given integer $M\geq 1$, the entries of $\x^\ast$ are independently distributed over the set
\be \label{eq:setX}
 {\cal X} =\{0\}\cup \{-2M+1, -2M+3, \ldots, -1,  1, \ldots, 2M-3, 2M-1 \},
\ee
and the elements in the set are chosen independently
with probability
\be \label{eq:xprob}
\Pr(x^\ast_k=i)= \left\{ \ba{ll} p/(2M),  & i  \in {\cal X},  i \neq 0  \\
 1- p, & i = 0, \ea \right.
\ee
where  $p$ is a positive constant satisfying
\be
p \leq \frac{2M}{2M+1}. \label{eq:zeropr}
\ee 
The inequality \eqref{eq:zeropr} is equivalent to 
\be \label{eq:pinequ}
p/(2M) \leq 1-p,
\ee
i.e., the probability for $x_k^*$ to take any nonzero integer in the set $\calX$ is the same,
and it is smaller than or equal to the probability for $x_k^*$ to take 0.
If $p$ is small,  $\x^\ast$ is sparse. 
Note that all numbers except zero in ${\cal X}$ are odd numbers. 
This model has applications in multiple user detection in code-division multiple
access (CDMA)  communications, e.g., see \cite{ZhuG11},
in which ${\cal X}$ is a $2M$-ary constellation.
When $x_k^\ast$ takes 0, the $k$-th user is said to be inactive.

To recover $\x^\ast$ from the observation $\y$, one can apply the maximum a posteriori (MAP)  estimation method, which solves 
\be \label{eq:map}
\max_{\x \in {\cal X}^n}  \ln \left(\varphi ( \y |\x^\ast=\x) \Pr (\x^\ast =\x)\right),
\ee
where $\varphi( \y |\x^\ast=\x)$ is the probability density function of $\y$ when $\x^\ast=\x$
(here for the sake of notational simplicity, $\y$ is also used for its realized value)
and $\Pr (\x^\ast =\x)$ is the probability that $\x^\ast=\x$ for $\x\in {\cal X}^n$.
It is shown in \cite{ZhuG11} that \eqref{eq:map} is equivalent to
 the following regularized optimization problem:
\be \label{eq:l0rbils}
\min_{\x \in {\cal X}^n} \frac{1}{2} \|\y-\A\x\|_2^2  +  \lambda \|\x\|_0,
\ee
where  the $L_0$-``norm'' $\|\x\|_0$ is the number of nonzero elements of $\x$, and the regularization parameter  $\lambda = \lambda^*$ with 
\be \label{eq:lambda}
\lambda^*:= \sigma^2 \ln \frac{1-p}{p/(2M)} \geq 0.
\ee

In \cite{ZhuG11}, the inequality \eqref{eq:zeropr} is strict, then $\lambda^*>0$.
We allow \eqref{eq:zeropr} to be an equality so that $\lambda^*$ can be zero. 
This gives us more flexibility.
To make some of our results more applicable, in this paper we assume that the $\lambda$ in \eqref{eq:l0rbils} 
is only  a given nonnegative parameter, unless we state explicitly that it is defined by \eqref{eq:lambda}.
This unifies some results for $\lambda=\lambda^*$ and $\lambda=0$.
We refer to \eqref{eq:l0rbils} as the $L_0$-``norm'' regularized ``box'' constrained integer least squares ($L_0$-RBILS) problem.
(Although  the constraint set ${\cal X}^n$ is not a regular box, its geometric shape is still a box.)

If no \textit{a priori} information about $\x^\ast$ in \eqref{eq:lm} is available, to estimate  $\x^\ast$, one applies the maximum likelihood (ML) 
estimation method,
leading to the box constrained integer least squares (BILS) problem:
\be \label{eq:bils}
\min_{\x \in {\cal X}^n} \frac{1}{2} \|\y-\A\x\|_2^2.
\ee
Note that this problem is just \eqref{eq:l0rbils} without the regularization term.
If $\x^\ast$ is uniformly distributed over ${\cal X}^n$, then  \eqref{eq:zeropr} is an equality, and then $\lambda^*=0$
and \eqref{eq:l0rbils} with $\lambda=\lambda^*$ just becomes \eqref{eq:bils}. 
% Thus the information of the uniform distribution of $\x^\ast$ does not lead to a different detector,
% and in this situation the MAP detector is the same as the ML detector.
We can regard \eqref{eq:bils} as a special case of \eqref{eq:l0rbils}.
If we replace the constraint box ${\cal X}$ by $\Zn$, \eqref{eq:bils} becomes the ordinary integer least squares (OILS) problem
\be \label{eq:ils}
\min_{\x \in \Zn} \frac{1}{2} \|\y-\A\x\|_2^2.
\ee

% In this paper, we focus on the $L_0$-RBILS problem \eqref{eq:l0rbils}
% and assume that the matrix $\A$ is of full column rank.
%In applications $\A$ may not have full column rank or full row rank.

The OILS problem \eqref{eq:ils} is  NP-hard  \cite{Mic01,Van81} and there are a few approaches 
to  finding the optimal  solution (see, e.g., \cite{HanPS11,AnjCK14}).
The widely used approach in applications is the enumeration approach originally proposed in \cite{FinP85}
and later improved in \cite{SchE94},
which enumerates integer points in an ellipsoid and is called sphere decoding in communications.
This approach has been extended to constrained problems \cite{DamGC03,ChaH08,ChaG10}
and to the $L_0$-RBILS problem  \cite{ZhuG11,BarV14,MonBD13}.
To make the search faster, typically the original problem is transformed to a new problem
by lattice reduction or column permutations. 
Some examples of commonly used strategies include 
LLL reduction \cite{LenLL82}, 
V-BLAST \cite{FosGVW99} and sorted QR decomposition (SQRD) \cite{WubBRKK01}.

Since an ILS problem is NP-hard, often a suboptimal solution instead of the optimal one is found in some applications.
One often used suboptimal solution to the ILS problem is the Babai point \cite{Bab86}, 
which was originally introduced for the OILS problem \eqref{eq:ils} but
has been extended to the BILS problem  \cite{DamGC03,ChaH08} and to
the $L_0$-RBILS problem \cite{ZhuG11}.
It is the first integer point found by the Schnorr-Euchner type search methods
when the radius of the initial search ellipsoid is set to $\infty$
\cite{SchE94}.
To see how good the Babai point is as a detector,
one can use {\em success probability} (SP) as a measure, which is the probability that the detector
is equal to the true parameter vector. 
If the SP of the Babai point is high, one will not need to try to find 
the optimal detector.

In this paper we conduct systematic studies of the SP of the $L_0$-regularized Babai point $\x^\sRB$ (here the superscripts RB stands for ``regularized Babai'') for the $L_0$-RBILS problem \eqref{eq:l0rbils}.
The main contributions are summarized as follows.
\ben 
\item
We establish a formula for  the SP of the $L_0$-regularized
Babai point $\x^\sRB$ 
and a formula for the SP of the (unregularized) Babai point $\x^\sBB$ (here the superscripts BB standard for ``box-constrained Babai'')
for the BILS problem \eqref{eq:bils} in Section \ref{sec:spbabai}.

\item 
The properties of the SP of $\x^\sRB$ are investigated in Section \ref{sec:bproperties}.
We show that the regularization parameter $\lambda$ defined as $\lambda^*$ in 
\eqref{eq:lambda} makes the SP of $\x^\sRB$ the highest among all choices of $\lambda\geq 0$,
the SP of $\x^\sRB$ is monotonically increasing with respect to
the diagonal elements of the R-factor of the QR factorization of $\A$,
and the ratio of the SP of $\x^\sRB$ to SP of $\x^\sBB$ is monotonically decreasing
with respect to $p\in (0,2M/(2M+1))$.
We also derive a bound on the SP of $\x^\sRB$, 
which can be either a lower bound or an upper bound, depending on the conditions.
% Furthermore, we show that the LLL-P strategy is guaranteed 
% to increase or decrease the SP of $\x^\sRB$ under the same conditions.

\item 
We analyse the effect of the column permutation strategy LLL-P on the SP of $\x^\sRB$
in Section \ref{sec:permut}.
Specifically we show that the LLL-P strategy is guaranteed 
to increase or decrease the SP of $\x^\sRB$ under certain conditions.

\item 
Based on the formula of the SP of $\x^\sRB$, we propose 
three column permutation strategies 
to increase the SP of the Babai point in Section \ref{sec:spgalg}.
The LSP strategy uses the local information of the SP,
which turns out to be the well known V-BLAST strategy,
while the GSP strategy uses the global information of the SP.
The MSP strategy is a mixed strategy based on LSP and GSP.
\een

The $L_0$-regularized box-constrained Babai point $\x^\sRB$ considered 
in this paper can be regarded as an extension of 
the ordinary Babai point and the box-constrained Babai point,
whose SPs are studied in \cite{ChaWX13} and \cite{WenC17}, respectively.
% Becasue of the sparse distribution of $\x^*$  
% and the discontinuity issue with the $L_0$-regularization,
% the derivation of the formula for the SP of $\x^\sRB$  and its
% analyses are much more complicated than 
% those given in \cite{ChaWX13} and \cite{WenC17},
% and some new tools have to be used.
% Main properties of the SP of $\x^\sRB$ given in item 2 above
% are related to the regularization parameter $\lambda$ and the sparsity parameter $p$, which are not parts of the detectors 
% considered in \cite{ChaWX13} and \cite{WenC17}.
Some of the results in Section \ref{sec:spbabai}-\ref{sec:bproperties} resemble the results in \cite{ChaWX13} and \cite{WenC17}, which study the Babai point for OILS and BILS problems respectively. 
Table \ref{t:paper_comparison} gives a brief comparison between 
this work and \cite{ChaWX13,WenC17}.

\begin{table*}[]
\centering
\caption{Comparisons of the works of this paper and \cite{ChaWX13, WenC17}}
\label{t:paper_comparison}
\begin{tabular} 
{|p{0.15\linewidth} || p{0.25\linewidth} | p{0.25\linewidth}| p{0.25\linewidth}|}  
\hline
Paper & \multicolumn{1}{c|}{ \cite{ChaWX13}} 
  & \multicolumn{1}{c|}{ \cite{WenC17}}
 & \multicolumn{1}{c|}{This work}    \\ \hline
True parameter vector $\x^\ast$    & {$\x^\ast\in \Zn$}   
  & $\x^\ast\in \Zn$ uniformally distributed over 
  $[l_1,u_1]\times \cdots\times [l_n,u_n]$
  & $\x^\ast \in \calX^n$ is sparse with distribution \eqref{eq:xprob}  \\ \hline
Detector & Ordinary Babai point 
& Box-constrained Babai point &
$L_0$-regularized box-constrained Babai point \\ \hline
SP formula       &  \multicolumn{1}{c|}  \checkmark  &  
\multicolumn{1}{c|}\checkmark   &  \multicolumn{1}{c|} \checkmark   \\ \hline
Properties of the SP of the detector &
\hspace{18mm}  \ding{55} &
The SP is higher than that of the ordinary Babai point. & 
The SP reaches the maximum when $\lambda=\lambda^*$. \newline
The SP increases w.r.t. all  $r_{kk}$. \newline
The ratio of SP of $\x^\sRB$ to SP of $\x^\sBB$ decreases w.r.t. $p$.
\\ \hline
Bound on SP 
& A lower bound and upper bounds that hold at all times.   
%Tighter lower bound of the SP of optimal OILS solution than existing lower bound. 
& 
Lower/upper bound under some conditions. 
& 
Lower/upper bound  under some conditions.  \\ \hline
Effect of reduction or permutations  
& LLL  increases SP 
and reduces the cost of sphere decoding.  
& LLL-P increases/decreases SP under some conditions 
& LLL-P increases/decreases SP under some conditions \\ \hline
Permutation algorithms  &  \hspace{18mm}  \ding{55}
& 
 \hspace{18mm}  \ding{55} &  
  \hspace{18mm} \checkmark      \\ \hline
\end{tabular}
\end{table*}

% A very preliminary version of this work was presented in the conference paper \cite{ChaP18}.
% We will point out if a result given in this paper has been given in \cite{ChaP18}.

Here we would like to point out that the discrete sparse signal detection or recovery has recently attracted much attention \cite{ChoSDRK17}, and various methods have been proposed, see, e.g.,  \cite{ZhuG11,
SchD11,SchD12,KnoMBWPD13,PantLA14, BarV14,LeeCS16, AhnSL17,WenPYYL17,SelF17,SouL17,CirML18,FukNS19,Hay20, LanPST20, CheZCZ21}. 
Comparisons of the $L_0$-RBILS detector with other detectors are beyond the scope of this paper.

{\bf Notation.} We use boldface lowercase and uppercase letters to denote vectors and matrices, respectively. 
For a column vector $\x \in \Rn$,  
$x_i$ denotes the $i$th entry of $\x$, and
% $\x(i\!:\!j)$
$\x_{ij}$
denotes the subvector composed of
elements of $\x$ with indices from $i$ to $j$. 
For a matrix $\A \in \Rmbn$,  
% $\A(i\!:\!j,k\!:\!l)$ 
$\a_i$ denotes the $i$th column of $\A$, and
$\A_{ij,kl}$
denotes the submatrix containing all the elements of $\A$ whose row indices are
from $i$ to $j$ and column indices are from $k$ to $l$.
For a scalar $x \in \Rbb$,
$\lfloor x \rceil $ denotes the nearest integer to $x$
(if there is a tie, the one with smaller magnitude is chosen),
and similarly, 
$\lfloor x \rceil _{\calX}$ denotes the integer in 
the set $\calX$ of integers nearest to $x$.
We use $\I$ to denote an identity matrix  and $\e_k$ to
denote its $k$-th column. 
If a random vector $\x$ is normally distributed with zero mean and covariance matrix $\sigma^2\I$,
we write $\x \sim N(\0,\sigma^2\I)$.

\section{Preliminaries}
In this section, we first transform the $L_0$-RBILS problem \eqref{eq:l0rbils} to an equivalent one
by the QR factorization of $\A$ and define the corresponding Babai point.
Then we give some background for the column permutation strategies 
to be discussed in this paper.

\subsection{The Babai points}
We first transform  \eqref{eq:l0rbils} to an equivalent new problem by an orthogonal transformation.
Let the QR factorization of $\A$ be
\be \label{eq:qr}
\A=[\Q_1,\Q_2]\bmx \R \\ \0\emx = \Q_1 \R,
\ee
where $[\Q_1,\Q_2] \in \Rmbm$ is orthogonal and $\R\in \Rnbn$ is upper triangular with {\em positive} diagonal entries. 
Given the matrix $\A$, such $\R$ is unique and can be found easily and efficiently using algorithms such as the Householder transformations. 
% In particular, let $\A^{(i)} := [\a_1, \a_2, \ldots, \a_i]$, then we can write
% \aln{
% r_{ii} = \|(\I - \A^{(i)} \A^{(i)\dagger}) \a_i\|_2, 
% }
% where $\A^{(i)\dagger}$ denotes the Moore-Penrose generalized inverse of $\A^{(i)\dagger}$. 
% In other words, $r_{ii}$ is the length of the orthogonal projection of the column $\a_i$ onto $\calR(\a_1, \a_2, \ldots, \a_{i-1})^\perp$, which is the orthogonal complement of the subspace spanned by the columns $\a_1,\ldots, \a_{i-1}$. 

With $\tby=\Q_1^\top \y$ and $\tbv=\Q_1^\top \v$, it is easy to show the model \eqref{eq:lm} can be transformed to
\be \label{eq:rlm}
\tby = \R\x^\ast + \tbv, \  \ \tbv \sim N(\0,\sigma^2\I),
\ee
and the $L_0$-RBILS problem \eqref{eq:l0rbils} is  equivalent  to
\be \label{eq:l0rbils1}
\min_{\x \in {\cal X}^n} \frac{1}{2} \|\tby-\R\x\|_2^2  +  \lambda \|\x\|_0.
\ee

Here we define the $L_0$-regularized Babai point $\x^\sRB=[x_1^\sRB, \ldots, x_n^\sRB]^\top $ 
corresponding  to \eqref{eq:l0rbils1}.
Note that the objective function in \eqref{eq:l0rbils1} can be rewritten as
$$
\min_{\x \in {\cal X}^n}
\sum_{k=1}^n \bigg(\frac{1}{2} \Big(\ty_k- r_{kk}x_k-\sum_{j=k+1}^n r_{kj} x_j  \Big)^2  +  \lambda \|x_k\|_0\bigg). 
$$
Suppose that $x_n^\sRB, \ldots, x_{k+1}^\sRB$ have been defined and 
we would like to define  $x_k^\sRB$ as an estimator of $x_k^*$.
The idea is to choose $x_k^\sRB$ to be the solution of the optimization problem:
\be \label{eq:opk}
\min_{x_k \in {\cal X} } \Big\{f_k(x_k) :
= \frac{1}{2} \Big(\ty_k- r_{kk}x_k-\sum_{j=k+1}^n r_{kj} x_j^\sRB \Big)^2  +  \lambda \|x_k\|_0\Big\},
\ee
where $\sum_{k+1}^n \cdot = 0$ if $k=n$.
To simplify the objective function in \eqref{eq:opk},
write 
\be \label{eq:ck}
c_k:=\Big(\ty_k-\sum_{j=k+1}^n r_{kj} x_j^\sRB \Big)/r_{kk}.
\ee
Then we have
\be \label{eq:fxk}
f_k(x_k)= \frac{1}{2}r_{kk}^2(x_k-c_k)^2 +  \lambda \|x_k\|_0.
\ee
From \eqref{eq:fxk} we observe that the solution to \eqref{eq:opk} is either 0 or $\lfloor c_k \rceil_{\cal X}$.
Of course the two are identical if $\lfloor c_k \rceil_{\cal X}=0$. 
To see which one is the solution.
Define
\be \label{eq:gck}
g_k:= \frac{1}{2} r_{kk}^2\lfloor c_k \rceil_{\cal X}^2-  r_{kk}^2 \lfloor c_k \rceil_{\cal X}  c_k + \lambda.
\ee
Observe that $g_k = f_k(\lfloor c_k \rceil_{\cal X}) - f_k(0) $ when $\lfloor c_k \rceil_{\cal X} \neq 0$, and $g_k \geq 0$ when $\lfloor c_k \rceil_{\cal X} = 0$.
Thus, the solution to \eqref{eq:opk} is
\be \label{eq:xkrb}
x_k^\sRB = 
\left\{\begin{array}{@{}cl}
  0, & \text{if } g_k  \geq  0, \\
  \lfloor c_k \rceil_{\cal X}, & \text{if } g_k  <  0.
\end{array} \right.
\ee
If we set $\lambda=0$ in \eqref{eq:l0rbils},    $\x^\sRB$ just becomes 
the (unregularized) box-constrained Babai point $\x^\sBB$ of the BILS problem \eqref{eq:bils}:
$$
x_k^\sBB =  \Big\lfloor \Big(\ty_k-\sum_{j=k+1}^n r_{kj} x_j^\sBB \Big)/r_{kk} \Big\rceil_{\cal X}, 
\ \ k = n,n-1,\ldots,1,
$$
which is an extension of the ordinary Babai point for the OILS problem \eqref{eq:ils}:
$$
x_k^\sOB =  \Big\lfloor \Big(\ty_k-\sum_{j=k+1}^n r_{kj} x_j^\sOB \Big)/r_{kk} \Big\rceil, \ \ k = n,n-1,\ldots,1.
$$

Here  the $L_0$-regularized Babai point $\x^\sRB$ is a further extension of the ordinary Babai point $\x^\sOB$.
 % and we sometimes simply refer to it as the regularized Babai point or detector.
In \cite{ZhuG11} an equivalent point is defined, and it is referred to as the decision-directed detector.

\subsection{Column permutation strategies} \label{sec:perm}
In OILS problems, a commonly used lattice reduction strategy is LLL reduction, which consists of two types of operations: size reduction and column permutation. 
In BILS problems, applying size reduction would make the transformed constraints difficult to handle in the search process. 
However we can still use the permutation strategy in the reduction process, 
referred to as LLL-P \cite{ChaWX13}.
The resulting matrix $\R$ satisfies the Lov\'{a}sz condition 
\begin{equation} \label{eq:lovasz}
    \delta\, r_{k-1,k-1}^2 \leq r_{k-1,k}^2 + r_{kk}^2, \quad k=2,3,\ldots,n,
\end{equation}
where $\delta\in (\frac{1}{4},1]$ is a parameter. 

The reduction process works with the upper triangular matrix $\R$ obtained by the QR factorization \eqref{eq:qr}.
If the Lov\'{a}sz condition \eqref{eq:lovasz} is violated 
for some $k$, then the algorithm interchanges columns $k-1$ and $k$ of $\R$
and uses a Givens rotation to restore the upper-triangular structure of $\R$:
\begin{equation}\label{eq:llltri}
    \hat{\R} := \G^\top _{k-1,k}\R\P_{k-1,k},
\end{equation}
where $\G_{k-1,k}$ is a Givens rotation matrix and $\P_{k-1,k}$ 
is a permutation matrix. 
Then it is easy to show that 
\begin{equation}\label{eq:lllprop}
\begin{aligned}
    \hr_{k-1,k-1}^2 &= r_{k-1,k}^2 + r_{kk}^2,   \\
    \hr_{k-1,k}^2 + \hr_{kk}^2 &= r_{k-1,k-1}^2, \\
    \hr_{k-1,k-1} \hr_{kk} &= r_{k-1,k-1}r_{kk},
\end{aligned}
\end{equation}
which will be used in our later analysis.
Then it is easy to see that $\hbR$ satisfies the Lov\'{a}sz condition for that $k$.

For the sake of readability, we give the reduction algorithm which implements the LLL-P
strategy in Algorithm \ref{alg:lll-p}.

\begin{algorithm}
\caption{LLL-P}
\label{alg:lll-p}
\begin{algorithmic}[1]
\STATE  Initialize $\p = [1,2,\ldots,n],k=2$

\WHILE{$k\leq n$}
    \IF{$\delta\, r_{k-1,k-1}^2 > r_{k-1,k}^2 + r_{kk}^2$}
    \STATE Interchange $p_{k-1}$ and $p_k$ \\
    and interchange $\boldr_{k-1}$ and $\boldr_k$.
    \STATE Triangularize $\R$ by Givens rotations: $\R = \G^\top _{k-1,k}\R$ 
    \STATE $k=k-1$ when $k>2$
    \ELSE
    \STATE $k=k+1$
    \ENDIF
\ENDWHILE
\end{algorithmic}
\end{algorithm}

The V-BLAST strategy determines the columns from the last to the first. 
Like LLL-P, it starts with the factor $\R$ of the QR factorization of $\A$ 
(see \eqref{eq:qr}). 
For $k=n,n-1,\ldots,2$, the $k$-th column is chosen from 
the first $k$ columns in the permuted matrix $\R$ 
such that the new $r_{kk}$ is the largest. 
An efficient implementation of the V-BLAST strategy 
is given in Algorithm \ref{alg:vblast}
(see \cite{ChaP07}), which costs $O(n^3)$ flops.

\begin{algorithm}
\caption{V-BLAST}\label{alg:vblast}
\begin{algorithmic}[1]
\STATE  Initialize $\p = [1,2,\ldots,n]$, $\F = \R^{-\top}$.
% \STATE Set $\R_0 = \R$.
\FOR{$k=n,n-1,\ldots,2$}
    \STATE Compute $\d$: 
    % $d_j = \|\F(j:k,j)\|_2$ 
    $d_j = \| (\f_j)_{jk} \|_2$
    for $j=1,2,\ldots,k$.
    \STATE Set $j_k = \argmin_{1\leq j\leq k} d_j$.
    \STATE Interchange $p_k$ and $p_{j_k}$, and $\boldr_k$ and $\boldr_{j_k}$. 
    \STATE Triangularize $\R$ by Givens rotations: $\R := \G^\top \R$. 
    \STATE Interchange $\f_k$ and $\f_{j_k}$, and update $\F = \G^\top \F$.
    % \STATE Update $\G$: $\G = \G\G_{k,i_k}$.
    % \STATE Remove the $k$-th column and the $k$-th row of $\R$ and $\F$.
\ENDFOR
% \STATE Permute $\R_0$ based on $\p$: $\R:=\R_0(:,\p)$ and set $\R = \G^\top\R_0$.
\end{algorithmic}
\end{algorithm}

% \red{It appears lines 2, 9, 10 and 12 are not needed.
% If we want to update $\bby$, we can add a line $\bby:=\G_{k,i_k}^\top\bby$.}
% We  use $\p$ produced by Algorithm \ref{alg:vblast} to perform column permutations of $\R$
% and then compute its QR factorization, leading to the final
% upper triangular matrix $\hbR$.

% The SQRD strategy uses the modified Gram-Schmidt method 
% to determine the columns of the permuted $\R$ from the first to the last
% (for efficiency and numerical stability we prefer to use the Householder transformations to do it).
% For $k=1:n-1$, the $k$-th column is chosen from 
% the last $n-k+1$ columns in the permuted matrix $\R$ 
% such that $r_{kk}$ is the smallest. 
On the other hand, the SQRD strategy acts on $\A$ directly 
and the original implementation  is built on the basis of the modified Gram-Schmidt process 
for the QR factorization,
which determines the columns of the permuted $\R$ from the first to the last. 
For $k=1,2,\ldots,n-1$, the $k$-th column of the permuted $\R$ is decided by 
exchanging column $k$ with the column with the smallest norm 
in the last $n-k+1$ columns in $\A$, then we orthogonalize $\A$ and update $\R$
the same way as in the modified Gram-Schmidt process.
This gives us the smallest $r_{kk}$ among the available candidates. 
A more efficient and numerically stable implementation is to use 
the Householder transformations, very similar to 
the QR factorization with column pivoting by Householder
transformations, see, e.g., \cite[p278]{GolV13}.
% the V-BLAST strategy determines the columns of the permuted $\R$ from the last to the first.
% For $k=n:-1:2$, the $k$-th column is chosen from 
% the first $k$ columns in the permuted matrix $\R$ 
% such that $r_{kk}$ is the largest. 
% Let $\G = \R^{-1}$. For $k=n:-1:2$, we interchange columns $k$ and $j$ where 
% \begin{equation*}
%     j = \argmax_{1\leq i \leq k} \|\g_i\|_2
% \end{equation*}
% We then remove the $j$-th column of $\G$ and $\R$, 
% and we use Givens rotations to restore the triangular structure of these two matrices before moving on to the next iteration. 

Regardless of the reduction strategy used,
after reduction we obtain the upper triangular matrix. 
We denote it by $\hat{\R}$ in order to distinguish it 
from $\R$ in \eqref{eq:qr} and can write
\begin{equation}\label{eq:pqr}
    % \hat{\Q}^\top \R\P = \hat{\R}
    \A\P = \hat{\Q}_1\hat{\R},
\end{equation}
where $\P \in \Znbn$ is a permutation matrix,
$\hat{\Q}_1\in \Rbb^{m\times n}$ is column orthonormal.
We then define
\begin{equation*}
        \hby = \hat{\Q}_1^\top  \y, \quad \z^\ast = \P^\top \x^\ast, \quad \hbv = \hat{\Q}_1^\top \v, \quad \z = \P^\top \x.
\end{equation*}
Then the linear model \eqref{eq:lm} is transformed to 
\begin{equation}\label{eq:permuted-lm}
    \hby = \hat{\R}\z^\ast+\hbv, \quad \hbv \sim N(\mathbf{0}, \sigma^2\I)
\end{equation}
and the $L_0$-RBILS problem \eqref{eq:l0rbils} is equivalent to 
\begin{equation}\label{eq:l0rbils-p}
    \min_{\z\in {\cal X}^n} \frac{1}{2}\|\hby - \hat{\R}\z\|_2^2 + \lambda\|\z\|_0.
\end{equation}
Furthermore, we let the $L_0$-regularized Babai point $\z^\sRB = [z_1^\sRB, \ldots, z_n^\sRB]^\top $ corresponding to \eqref{eq:l0rbils-p}.
Then we define $\x^\sRB := \P\z^\sRB$.

\section{SP of the $L_0$-regularized Babai point $\x^\sRB$}\label{sec:spbabai}

We would like to see how good the Babai detector $\x^\sRB$ is.
One measure is the SP $\Pr(\x^\sRB=\x^\ast)$.
In this section  we derive a formula for it.

In our probability analysis, we need to use the well-known error function and related probability formulas.
The  error function is defined as 
\be \label{eq:ef}
\erf(\zeta) =  \frac{2}{\sqrt{\pi}} \int_0^\zeta \exp(-t^2) dt.
\ee
Note that $\erf(\infty)=1$.
Given $a$ and $b$, % with $a\leq b$, 
if  $x \sim N(0,\sigma^2)$, then
% \beqy
\be 
\Pr(x \leq a)  
%= \frac{1}{\sqrt{2\pi}\sigma} \int_{-\infty}^a e^{-\frac{t^2}{2\sigma^2}} dt 
= \frac{1}{2}\left( 1+ \mbox{erf}\left(\frac{a}{\sqrt{2}\sigma}\right) \right), \label{eq:prxla}
\ee 
\be
\Pr(x \geq a) 
%= \frac{1}{\sqrt{2\pi}\sigma} \int_{a}^\infty  e^{-\frac{t^2}{2\sigma^2}} dt  
= \frac{1}{2}\left( 1- \mbox{erf}\left(\frac{a}{\sqrt{2}\sigma}\right) \right), \label{eq:prxga}
\ee
% & % = \frac{1}{2}\left( \mbox{erf}\left(\frac{b }{\sqrt{2}\sigma}\right)-\mbox{erf}\left(\frac{a}{\sqrt{2}\sigma}\right) \right).
% \begin{split}
\begin{align}
 \Pr(a \leq x \leq b)
  &  =  \frac{1}{2}\left(\! \mbox{erf}\left(\frac{(b-a)_+ +a }{\sqrt{2}\sigma}\right)
\!\!-\!\!  \mbox{erf}\left(\frac{a}{\sqrt{2}\sigma}\right) \right) \nonumber \\
 &  =  \frac{1}{2}\left(\! \mbox{erf}\left(\frac{b}{\sqrt{2}\sigma}\right)
\!\!-\!\!   \mbox{erf}\left(\frac{b-(b-a)_+ }{\sqrt{2}\sigma}\right) \right).
% \end{split}
\label{eq:praxb}  
\end{align}
where $(b-a)_+ = b-a$ when $b-a>0$, otherwise $(b-a)_+ = 0$.

From \eqref{eq:rlm}, we have
$$
\ty_k = r_{kk}x^\ast_k + \sum_{j=k+1}^n r_{kj} x^\ast_j + \tv_k,  \ \ \tv_k \sim N(0,\sigma^2).
$$
Substituting it into \eqref{eq:ck}, we obtain
$$
c_k = x^\ast_k +  \sum_{j=k+1}^n \frac{r_{kj}}{r_{kk}} (x^\ast_j - x_j^\sRB) + \frac{\tv_k}{r_{kk}}.
$$
If $x_j^\sRB=x^\ast_j$ for $j=k+1,k+2,\ldots,n$, then 
\be \label{eq:ckdis}
c_k-x^\ast_k \sim N(0, \sigma^2/r_{kk}^2).
\ee
The key to finding $\Pr(\x^\sRB=\x^\ast)$ is to find the the range of $c_k-x_k^\ast$ 
% for which $x_k^\sBB=x_k^\ast$ for $k=n:-1:1$.
for which $x_k^\sRB=x_k^\ast$ for $k=n,n-1,\ldots,1$.
% When $x_k^\sBB=x_k^\ast$, $c_k-x_k^\ast=c_k-x_k^\sBB$.
When $x_k^\sRB=x_k^\ast$, $c_k-x_k^\ast=c_k-x_k^\sRB$.
% Thus we will find the range of $c_k-x_k^\sBB$ for which $x_k^\sBB=x_k^\ast$.
Thus we will find the range of $c_k-x_k^\sRB$ for which $x_k^\sRB=x_k^\ast$.

Since the elements of the constraint set ${\cal X}$ are symmetric with respect to 0
and the distribution of $x_k^\ast$ is symmetric with respect to 0, 
we only focus on the nonnegative case  that  $x_k^\ast \geq 0$ and $c_k \geq 0$  in our following analysis,
and we can simply double the relevant probabilities when we also take the nonpositive case into account.

If $\lfloor c_k \rceil_{\cal X}=0$, i.e., $0\leq c_k \leq 1/2$, obviously in this case from \eqref{eq:xkrb} $x_k^\sRB=0$. 
In the rest of this paragraph we assume that $\lfloor c_k \rceil_{\cal X}> 0$.
From \eqref{eq:gck},
\be \label{eq:gck1}
g_k = - \ r_{kk}^2\lfloor c_k \rceil_{\cal X}\left(  c_k-\lfloor c_k \rceil_{\cal X} 
- \frac{ \lambda }{r_{kk}^2\lfloor c_k \rceil_{\cal X}} + \frac{1}{2}\lfloor c_k \rceil_{\cal X}\right).
\ee
%where 
%\be
%\alpha_k(\lfloor c_k \rceil_{\cal X}):=  \frac{ \lambda }{r_{kk}^2\lfloor c_k \rceil_{\cal X}} - \frac{1}{2}\lfloor c_k \rceil_{\cal X}.
%\label{eq:alphak}
%\ee
 Then from  \eqref{eq:gck}, \eqref{eq:xkrb}, and  \eqref{eq:gck1},
\be\label{eq:xksoln}
\begin{split}
& x_k^\sRB=  \lfloor c_k \rceil_{\cal X}  \Leftrightarrow g_k < 0   
  \Leftrightarrow   c_k- \lfloor c_k \rceil_{\cal X}  
>  \frac{ \lambda }{r_{kk}^2\lfloor c_k \rceil_{\cal X}} - \frac{1}{2}\lfloor c_k \rceil_{\cal X},  \\
& x_k^\sRB= 0  \Leftrightarrow
g_k \geq 0  
  \Leftrightarrow   c_k- \lfloor c_k \rceil_{\cal X} 
\leq  \frac{ \lambda }{r_{kk}^2\lfloor c_k \rceil_{\cal X}} - \frac{1}{2}\lfloor c_k \rceil_{\cal X}.
\end{split}
\ee
In addition, from the definition of  $\lfloor c_k \rceil_{\cal X}$,   if $M\geq 2$, $c_k-\lfloor c_k \rceil_{\cal X}$ has to satisfy
\be \label{eq:ckrange}
\begin{cases}
 -1/2 < c_k - \lfloor c_k \rceil_{\cal X} \leq 1,  &   \lfloor c_k \rceil_{\cal X}=1, \\
 -1 < c_k - \lfloor c_k \rceil_{\cal X} \leq 1,  & \lfloor c_k \rceil_{\cal X}=2j-1, 
%  \text{ for } 
 j=2,\ldots,M-1, \\
 -1 < c_k - \lfloor c_k \rceil_{\cal X}, &  \lfloor c_k \rceil_{\cal X}=2M-1,
\end{cases}
\ee
and if $M=1$, it has to satisfy
\be \label{eq:ckrange1}
-1/2 < c_k - \lfloor c_k \rceil_{\cal X}.
\ee

To derive the formula for the SP of $\x^\sRB$, 
we define 
\be  \label{eq:alphakj}
\alpha_k^{(j)}  := \frac{\lambda}{r_{kk}^2(2j-1)}-\frac{1}{2} (2j-1), \ \ j=1,\ldots,M,
\ee
which is the rightmost hand side of \eqref{eq:xksoln}
when $\lfloor c_k \rceil_{\cal X}=2j-1$. 
Additionally, we need the following lemma.

\begin{lemma} \label{le:akj}
Suppose that $\lambda\geq 0$ and  $M\geq 2$.
Let $\alpha_k^{(j)}$ be defined by \eqref{eq:alphakj}. 
Then 
$$
\alpha_k^{(1)} \geq -1/2,
$$
and $\alpha_k^{(j)}$ is strictly decreasing when $j$ is increasing from 1 to $M$.
Let 
\be \label{eq:j0}
% j_k:=\argmax_{j} \{\alpha_k^{(j)} \geq -1\}.
j_k := \underset{\alpha_k^{(j)} \geq -1}{\max} j.
\ee
If $j_k \geq 2$, then
\be \label{eq:akjg1}
\alpha_k^{(j)} \geq 1, \ \  j=1:j_k-1,
\ee
and if $j_k \leq M-1$, then
\be \label{eq:akjl1}
\alpha_k^{(j_k)} \leq 1, \ \ \alpha_k^{(j)} < -1, \ \  j=j_k+1,\ldots, M.
\ee
\end{lemma}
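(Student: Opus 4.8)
The plan is to reduce everything to the behaviour of a single monotone function together with one elementary algebraic identity. I would write $\mu := \lambda/r_{kk}^2 \geq 0$ and regard $\alpha_k^{(j)} = \phi(2j-1)$, where $\phi(t) := \mu/t - t/2$ for $t>0$. The first claim is then immediate: $\alpha_k^{(1)} = \mu - 1/2 \geq -1/2$ since $\mu \geq 0$. For the monotonicity claim I would note that $\phi'(t) = -\mu/t^2 - 1/2 < 0$ for all $t>0$, so $\phi$ is strictly decreasing; composing with the strictly increasing map $j \mapsto 2j-1$ shows that $\alpha_k^{(j)}$ is strictly decreasing in $j$. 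This in turn guarantees that the set $\{j : \alpha_k^{(j)} \geq -1\}$ has the form $\{1,\ldots,j_k\}$, so that $j_k$ in \eqref{eq:j0} is well defined (the set is nonempty because $\alpha_k^{(1)} \geq -1/2 \geq -1$).

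The crux of the lemma is the following equivalence, which I would establish by a direct computation: for every $j$ with $2 \leq j \leq M$,
\be
\alpha_k^{(j)} \geq -1 \quad \Longleftrightarrow \quad \alpha_k^{(j-1)} \geq 1,
\ee
because each of these inequalities reduces to the single threshold condition $\mu \geq (2j-1)(2j-3)/2$. Indeed, multiplying $\mu/(2j-1) - (2j-1)/2 \geq -1$ by $2j-1>0$ gives $\mu \geq (2j-1)(2j-3)/2$, while multiplying $\mu/(2j-3) - (2j-3)/2 \geq 1$ by $2j-3>0$ (valid since $j\geq 2$) gives exactly the same bound. This collapse of the two thresholds to one quantity is the one place where the specific arithmetic of the odd spacing $2j-1$ is essential, and I expect it to be the main, if short, obstacle: the whole argument hinges on the two bounds being identical rather than merely comparable.

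With this equivalence in hand the remaining claims follow with no further computation. For \eqref{eq:akjg1}, suppose $j_k \geq 2$. By definition of $j_k$ we have $\alpha_k^{(j_k)} \geq -1$, so the equivalence with $j=j_k$ yields $\alpha_k^{(j_k-1)} \geq 1$; the strict monotonicity established above then gives $\alpha_k^{(j)} \geq \alpha_k^{(j_k-1)} \geq 1$ for all $j=1,\ldots,j_k-1$. For \eqref{eq:akjl1}, suppose $j_k \leq M-1$. The inequalities $\alpha_k^{(j)} < -1$ for $j = j_k+1,\ldots,M$ follow directly from the maximality of $j_k$ in \eqref{eq:j0}. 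Finally, since $j_k+1 \leq M$ the value $\alpha_k^{(j_k+1)}$ is defined and satisfies $\alpha_k^{(j_k+1)} < -1$; applying the contrapositive of the equivalence with $j=j_k+1$ gives $\alpha_k^{(j_k)} < 1$, hence $\alpha_k^{(j_k)} \leq 1$, which completes the proof.
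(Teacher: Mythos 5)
Your proof is correct and follows essentially the same route as the paper's: both rest on the same algebraic coincidence that the conditions $\alpha_k^{(j)}\geq -1$ and $\alpha_k^{(j-1)}\geq 1$ reduce to one and the same threshold on $\lambda/r_{kk}^2$ (the paper expresses this through the affine relation $\alpha_k^{(j-1)}=\tfrac{2j-1}{2j-3}\alpha_k^{(j)}+\tfrac{4j-4}{2j-3}$, you through the single inequality $\mu\geq (2j-1)(2j-3)/2$). Your packaging of that computation as a two-sided equivalence is marginally tidier, but it is not a genuinely different argument.
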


\begin{proof} See Appendix \ref{app:le:akj}.
\end{proof}

Based on the discussion above and Lemma \ref{le:akj}, 
we list the range of $c_k -x_k^\sRB $ for different cases of $\lfloor c_k \rceil_\calX$ in Table \ref{t:ckxk} for $M\geq 2$ and in Table \ref{t:ckxkm1}
for $M=1$, in order to assist the derivation of the SP of $\x^\sRB$ in the following theorem. 

\begin{table*}[h] 
\caption{Range of $c_k-x_k^\sRB$ for $\lambda\neq 0$ and $M\geq 2$}\label{t:ckxk}
\centering
\begin{tabular}{|c|c|c|c|c|} \hline
Case & Range of $c_k$ & $\lfloor c_k \rceil_{\cal X}$ 
  & $x_k^\sRB$ & Range of $c_k -x_k^\sRB$  \\ \hline \hline
0 & $ 0 \leq c_k \leq 1/2$ & 
   0    &   0  & 
   $ 0 \leq c_k  \leq 1/2$  \\ \hline
\multirow{2}{*}{1} & \multirow{2}{*}{$1/2 < c_k \leq 2$} & \multirow{2}{*}{ $1$} & 
      $1$ &  $ \alpha_k^{(1)}=\max\{ \alpha_k^{(1)},-1/2\} < c_k - 1 \leq1$   \\ \cline{4-5}
     &   &  & $0$ &$ 1/2 < c_k \leq 1+\min \{ \alpha_k^{(1)}, 1\}$   \\ \hline
$j$ & \multirow{2}{*}{$2j-2 < c_k \leq 2j$} & 
\multirow{2}{*}{ $2j-1$ } & 
      $2j-1$ &  $  \max\{\alpha_k^{(j)},-1 \} < c_k -(2j-1) \leq1$ \\ \cline{4-5}
($j \!=\!2\!:\!M\!-\!1$) &  & & 
    $0 $ &  $2j-2 < c_k \leq  2j-1+\min \{ \alpha_k^{(j)}, 1\} $ \\ \hline
\multirow{2}{*}{ $M$} &   \multirow{2}{*}{ $2M-2\leq c_k  $}  & 
\multirow{2}{*}{  $2M-1$} &   
     $2M-1$ & $\max\{\alpha_k^{(M)},-1 \} < c_k -(2M-1)$     \\ \cline{4-5}
    &  & &  
    0 & $2M-2  < c_k \leq 2M-1 +\alpha_k^{(M)}$ \\ \hline
\end{tabular}
\end{table*}

\begin{table*}[h] 
\caption{Range of $c_k-x_k^\sRB$ for $\lambda\neq 0$ and $M=1$}\label{t:ckxkm1}
\centering
\begin{tabular}{|c|c|c|c|c|} \hline
Case & Range of $c_k$ & $\lfloor c_k \rceil_{\cal X}$ 
  & $x_k^\sRB$ & Range of $c_k -x_k^\sRB$  \\ \hline \hline
0 & $ 0 \leq c_k \leq 1/2$ & 
   0    &   0  & 
   $ 0 \leq c_k  \leq 1/2$  \\ \hline
\multirow{2}{*}{ 1} & \multirow{2}{*}{ $1/2 < c_k $} & \multirow{2}{*}{  $1$} & 
         $1$ & $ \alpha_k^{(1)}=\max\{ \alpha_k^{(1)},-1/2\} < c_k - 1 $    \\ \cline{4-5}
    &   &  & 
      $0$ & $ 1/2  < c_k \leq 1+ \alpha_k^{(1)}$  \\ \hline
\end{tabular}
\end{table*}

\begin{theorem} \label{th:babaisp}
Given the regularization parameter $\lambda \in [0,\infty)$, denote \be \label{eq:bardef}
\blam:=\lambda/\sigma^2, \ \ 
\br_{kk}:= r_{kk}/(\sqrt{2}\sigma).
\ee
Let the index $j_k$ be defined by \eqref{eq:j0}, then
\be \label{eq:j0exp}
j_k = \min\left \{M, \left\lfloor \frac{1}{2}\sqrt{1+\frac{\blam}{\br_{kk}^2}} \right\rfloor +1 \right\}.
\ee
The SP of the $L_0$-regularized Babai point $\x^{\sRB}$ for 
the $L_0$-RBILS problem \eqref{eq:l0rbils} satisfies
\be \label{eq:rbsp}
P^\sRB(\R):=\Pr(\x^\sRB=\x^\ast)
=\prod_{k=1}^{n} \rho^{\sRB}_k, 
\ee
where 
\al{
\rho_k^\sRB  
=\ &   \frac{p}{2M}  +  \frac{(M-j_k)p}{M} \erf\left(\br_{kk}\right)   
%  - \frac{p}{2M} \erf\left(\left(\frac{1}{2j_k-1}\frac{\lambda}{r_{kk}^2}-\frac{1}{2}(2j_k-1)\right)
%   \frac{r_{kk}}{\sqrt{2}\sigma}\right)  \notag  \\
%  & + (1-p) \erf\left(\left(\frac{1}{2j_k-1}\frac{\lambda}{r_{kk}^2}+\frac{1}{2}(2j_k-1)\right)\br_{kk}\right). \\
 -\frac{p}{2M} \erf\left(\alpha_k^{(j_k)} \br_{kk}\right)  
+ (1-p) \erf\left( (2j_k-1+ \alpha_k^{(j_k)}) \br_{kk}\right)  \label{eq:rbksp}
}
with \textup{(}see \eqref{eq:alphakj}\textup{)}
\be  \label{eq:alphakjk}
\alpha_k^{(j_k)}=\frac{\bar\lambda}{2\br_{kk}^2(2j_k-1)}-\frac{1}{2} (2j_k-1);
\ee
and the SP of the unregularized Babai point $\x^\sBB$ for the BILS problem \eqref{eq:bils} satisfies
\be \label{eq:bbsp}
P^\sBB(\R):=\Pr(\x^\sBB=\x^\ast)
=\prod_{i=1}^{n} \rho_k^\sBB, 
\ee
where 
\begin{align}
\rho_k^\sBB
= \frac{p}{2M}  + \frac{(M-1)p}{M} \ \erf\left(\br_{kk} \right)   
 + \left(1- \frac{2M-1}{2M} p \right)\ \erf \left(\frac{1}{2}\br_{kk} \right).
\label{eq:bbksp}
\end{align}
\end{theorem}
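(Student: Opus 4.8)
The plan is to factor the joint success event into a product of per-stage conditional events, to evaluate the closed form for $j_k$ by solving its defining inequality, and then to compute each conditional factor by reading the correct-decision region off Tables~\ref{t:ckxk} and \ref{t:ckxkm1} and integrating the Gaussian density with \eqref{eq:prxla}--\eqref{eq:praxb}.

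First I would set $A_k = \{x_k^\sRB = x_k^\ast\}$ and apply the chain rule $\Pr\big(\bigcap_{k=1}^n A_k\big) = \prod_{k=1}^n \Pr\big(A_k \mid \bigcap_{j>k} A_j\big)$. The key structural observation is that each $x_j^\sRB$ is a deterministic function of $(x_l^\ast,\tv_l)_{l\geq j}$, so the conditioning event $\bigcap_{j>k}A_j$ is measurable with respect to $(x_l^\ast,\tv_l)_{l>k}$; moreover, on this event the cross terms in the expression for $c_k$ vanish and $c_k-x_k^\ast = \tv_k/r_{kk}$, which together with $x_k^\ast$ is independent of the conditioning $\sigma$-algebra. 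Hence $\Pr\big(A_k \mid \bigcap_{j>k}A_j\big)$ equals a fixed number $\rho_k^\sRB$ --- the SP computed as if $c_k-x_k^\ast \sim N(0,\sigma^2/r_{kk}^2)$ via \eqref{eq:ckdis} --- which yields \eqref{eq:rbsp}. The closed form \eqref{eq:j0exp} for $j_k$ then comes from solving $\alpha_k^{(j)}\geq -1$: writing $t=2j-1$, \eqref{eq:alphakj} turns this into $t^2-2t-2\lambda/r_{kk}^2\leq 0$, so $t\leq 1+\sqrt{1+2\lambda/r_{kk}^2}$; substituting $\br_{kk}^2=r_{kk}^2/(2\sigma^2)$ and $\blam=\lambda/\sigma^2$, taking the floor, and capping at $M$ gives \eqref{eq:j0exp}.

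To compute $\rho_k^\sRB$, I would condition on the value of $x_k^\ast\in\calX$ and use the symmetry of both $\calX$ and the distribution \eqref{eq:xprob} about $0$ to write $\rho_k^\sRB = (1-p)\,q_0 + (p/M)\sum_{j=1}^M q_j$, where $q_0=\Pr(x_k^\sRB=0\mid x_k^\ast=0)$ and $q_j=\Pr(x_k^\sRB=2j-1\mid x_k^\ast=2j-1)$; the factor $p/M$ absorbs the doubling from the two signs $\pm(2j-1)$. Reading the range of $w:=c_k-x_k^\ast$ off Tables~\ref{t:ckxk}/\ref{t:ckxkm1} and Lemma~\ref{le:akj}: the $x_k^\sRB=0$ region for $x_k^\ast=0$ is the symmetric interval $|w|\leq 2j_k-1+\alpha_k^{(j_k)}$, so $q_0=\erf\big((2j_k-1+\alpha_k^{(j_k)})\br_{kk}\big)$; for the true value $2j-1$ the correct-decision range in $w$ is empty when $j<j_k$ (so $q_j=0$, by \eqref{eq:akjg1}), equals $(\alpha_k^{(j_k)},1]$ when $j=j_k\leq M-1$, equals $(-1,1]$ when $j_k<j\leq M-1$, and is right-unbounded ($w>\alpha_k^{(M)}$ or $w>-1$) when $j=M$, by \eqref{eq:akjl1}. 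Evaluating these with \eqref{eq:prxla}--\eqref{eq:prxga} and $\erf(-x)=-\erf(x)$ gives $q_{j_k}=\tfrac12\big(\erf(\br_{kk})-\erf(\alpha_k^{(j_k)}\br_{kk})\big)$ (for $j_k\leq M-1$), $q_j=\erf(\br_{kk})$ for $j_k<j\leq M-1$, and $q_M=\tfrac12\big(1+\erf(\br_{kk})\big)$.

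Finally I would sum the $q_j$: the $(M-1-j_k)$ interior terms and the two half-terms collapse to $(M-j_k)\erf(\br_{kk})-\tfrac12\erf(\alpha_k^{(j_k)}\br_{kk})+\tfrac12$, and multiplying by $p/M$ and adding $(1-p)q_0$ produces exactly \eqref{eq:rbksp}; I would then verify that this same compact expression survives the edge cases $j_k=M$ (where the $j=M$ term instead reads $\tfrac12(1-\erf(\alpha_k^{(M)}\br_{kk}))$ and there are no interior terms) and $M=1$. The formula \eqref{eq:bbksp} for $\x^\sBB$ follows by specializing to $\lambda=0$, for which \eqref{eq:j0exp} gives $j_k=1$ and \eqref{eq:alphakj} gives $\alpha_k^{(1)}=-1/2$; substituting into \eqref{eq:rbksp} and combining the last two $\erf(\tfrac12\br_{kk})$ terms via $\tfrac{p}{2M}+1-p=1-\tfrac{(2M-1)p}{2M}$ yields \eqref{eq:bbksp}. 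The hard part will be the bookkeeping in the case analysis --- keeping the boundary cases $j=M$ and $j=j_k$ straight and confirming the sum telescopes into the uniform $(M-j_k)$ form across all ranges of $j_k$ --- together with the measure-theoretic justification that each conditional factor is genuinely independent of the later-stage realizations.
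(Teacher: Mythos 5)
Your proposal is correct and follows essentially the same route as the paper's proof: the chain-rule factorization into conditional per-level probabilities, the derivation of \eqref{eq:j0exp} by solving the quadratic inequality $\alpha_k^{(j)}\geq -1$ in $t=2j-1$, the symmetry-based doubling, and the case analysis read off Tables \ref{t:ckxk}--\ref{t:ckxkm1} together with Lemma \ref{le:akj}. The only (cosmetic) difference is that you partition by the value of $x_k^\ast$ (so the $x_k^\ast=0$ contribution collapses into a single symmetric interval $|c_k|\leq 2j_k-1+\alpha_k^{(j_k)}$), whereas the paper partitions by the value of $\lfloor c_k\rceil_\calX$ and sums the two contributions per cell; both summations yield \eqref{eq:rbksp} identically.
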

% \begin{proof} 
% See Appendix \ref{app:th:babaisp}.

\begin{proof}
From \eqref{eq:alphakj} it is easy to verify that the inequality $\alpha_k^{(j)}\geq -1$ is equivalent to
% $$
% j \leq \frac{1}{2r_{kk}}\sqrt{2\lambda+r_{kk}^2}+1 =  \frac{1}{2}\sqrt{1+\frac{\blam}{\br_{kk}^2}}  +1.
% $$
$$
j \leq \frac{1}{2}\sqrt{1 + \frac{\lambda}{2r_{kk}^2}}+1 =  \frac{1}{2}\sqrt{1+\frac{\blam}{\br_{kk}^2}}  +1.
$$

Then by the definition of $j_k$ in \eqref{eq:j0}, we obtain  \eqref{eq:j0exp}.

For $i=1,2,\ldots,n$, denote the events
\be \label{eq:erb}
E_i^\sRB := (x_{i}^\sRB=x^\ast_i, x_{i+1}^\sRB=x^\ast_{i+1}, \ldots, x_n^\sRB = x^\ast_n).
% , \quad i=1:n.
\ee
Then, applying the chain rule of conditional probabilities yields
\beq\label{eq:chain}
\Pr(\x^\sRB=\x^\ast)
 =\Pr(E_1^\sRB) =\prod_{k=1}^{n}\Pr(x_k^\sRB=x^\ast_k|E_{k+1}^\sRB),
\eeq
where $E_{n+1}^\sRB$ is the sample space $\Omega$, so $\Pr(x_n^\sRB=x^\ast_n|E_{n+1}^\sRB)=\Pr(x_n^\sRB=x^\ast_n)$.

In the following we derive a formula for $\Pr(x_k^\sRB=x^\ast_k|E_{k+1}^\sRB)$.
To simplify notation, we write
\be \label{eq:rkrb}
\rho_k^\sRB:=\Pr(x_k^\sRB=x^\ast_k|E_{k+1}^\sRB).
\ee
%We will use the fact that the random $x^\ast_k$ can take any value in ${\cal X}$ independently
%and  the integers in ${\cal X}$ are symmetric with respect to zero.
%and the density function of $c_k-x^\ast_k$ is symmetric with
%respect to zero, we need only  consider the nonnegative numbers that $x^\ast_k$ can take
We consider the two cases that $M\geq 2$ and $M=1$ separately.

We first assume that $M\geq 2$.
Partition $\rho_k^\sRB$ into four parts, corresponding to $\lfloor c_k \rceil_{\cal X}=  
0,  1,  2j-1$ (for $j=2,\ldots,M-1$), and $2M-1$, respectively (see Table \ref{t:ckxk}):
\be \label{eq:xkrb4}
\rho_k^\sRB
= 2 \rho^\sRB_{(0)} +  2\rho^\sRB_{(1)} +\textstyle{ 2\sum_{j=2}^{M-1} \rho^\sRB_{( j)}}+ 2\rho^\sRB_{( M)},
\ee
where the factor 2 is due to the symmetry we mentioned before.
%The first term on the right hand side of the equality in \eqref{eq:xkrb4} 
%involves only the probability of $x_k^\sRB=0$.
%But each  of the rest terms  involve either the probability of $x_k^\sRB=\lfloor c_k \rceil_{\cal X}$
%or the probability of $x_k^\sRB=0$.

In the following we  derive a formula for each term on the right hand side of \eqref{eq:xkrb4}, which corresponds to a case in Table \ref{t:ckxk}.
In the derivation we consider different possible values of $j_k$
and   use the distributions \eqref{eq:xprob} and \eqref{eq:ckdis}, the probability
formulas \eqref{eq:prxla}--\eqref{eq:praxb} and Lemma \ref{le:akj}.
To simplify notation, we denote 
\be
\gamma:=\br_{kk}=r_{kk}/(\sqrt{2}\sigma).
\label{eq:gamma}
\ee

Case 0 (see Table \ref{t:ckxk}).  In this case,
\aln{
\rho^\sRB_{(0)} 
& = \Pr \big( \big(x^\ast_k= 0,  0  \leq c_k \leq  1/2  \big) \,|\, E_{k+1}^\sRB\big) \\
& =  \Pr( x^\ast_k= 0) \Pr\big(0  \leq c_k \leq  1/2 \,|\, (x^\ast_k= 0, E_{k+1}^\sRB)\big),
}
where in the derivation of the second equality,  we used the following  simple result: 
if $\bE_1$, $\bE_2$ and $\bE_3$ are three events, and $\bE_1$ and $\bE_3$ are independent, then
$$
\Pr((\bE_1,\bE_2)|\bE_3)= \Pr(\bE_1) \Pr(\bE_2 |( \bE_1, \bE_3)) .
%\label{eq:probabc}
$$
Then
\al{
2\rho^\sRB_{(0)}= (1-p)\erf\left(\gamma/2\right).
\label{eq:rrb0}
}
% Similarly,
% \al{
% 2\rho^\sBB_{(0)}= (1-p) \erf\left(\gamma/2\right).
% \label{eq:rbb0}
% }

Case 1 (see Table \ref{t:ckxk}). 
In this case, with $\alpha_k^{(1)}\geq -1/2$,
\al{
2\rho^\sRB_{(1)} 
% = \ & \Pr \big( \big( x^\ast_k=1,   \alpha_k^{(1)}<  c_k -1  < 1\big)  \,|\, E_{k+1}^\sRB\big) \\
%    & + \Pr \big( \big(x^\ast_k= 0,  1/2 \leq  c_k \leq \alpha_k^{(1)}+1  \big)  \,|\, E_{k+1}^\sRB\big) \\
 = \ & 2\Pr( x^\ast_k=1) \Pr\left( \alpha_k^{(1)}<  c_k -1  \leq 1 \,|\, x^\ast_k=1, E_{k+1}^\sRB\right) \notag \\
   & + 2\Pr (x^\ast_k= 0)  
     \Pr\left( 1/2 <  c_k  \leq1+\min\{\alpha_k^{(1)},1\}    \,|\, x^\ast_k=0,E_{k+1}^\sRB\right) 
   \notag \\
= \ & \frac{p}{2M} \biggl \{  \erf\left( \gamma \right)
        - \erf\left( \left[ 1-\big(1- \alpha_k^{(1)} \big)_+ \right] \gamma \right) \notag \\
  & \! + (1-p)\left\{\erf\left( \big[1+\min\{\alpha_k^{(1)},1\} \big] \gamma \right)
   -   \erf\left(\gamma/2\right)\right\}.
   \label{eq:rrb1}
}
If $j_k=1$,  by Lemma \ref{le:akj}, $\alpha_k^{(1)}\leq 1$, then
from \eqref{eq:rrb1} we obtain
\al{
2\rho^\sRB_{(1)} 
 =\ & \frac{p}{2M} \left( \erf\left(\gamma\right) 
            - \erf\left(\alpha_k^{(1)}\gamma\right) \right)  
 + (1-p) \left( \erf\left(\Big(1+\alpha_k^{(1)}\Big)\gamma\right) 
            - \erf\left(\gamma/2\right)\right).
            \label{eq:rrb11}
}           
If $j_k \geq 2$,  by Lemma \ref{le:akj}, $\alpha_k^{(1)} \geq 1$,  then from \eqref{eq:rrb1} we obtain
\aln{
2 \rho^\sRB_{(1)} 
=(1-p) \left( \erf\left(2\gamma\right) 
            - \erf\left(\gamma/2\right)\right).
}
% If we take $\lambda=0$,   which implies that $j_k=1$ by \eqref{eq:j0exp} and $\alpha_k^{(1)}=-1/2$ by  \eqref{eq:alphakj}, then from \eqref{eq:rrb11}
% we obtain
% \al{
% 2\rho^\sBB_{(1)} = \frac{p}{2M} \left( \erf\left(\gamma\right) 
%             + \erf\left(\gamma/2\right)\right). 
% \label{eq:rbb1}
% }

Case $j$, $2\leq j \leq M-1$ (see Table \ref{t:ckxk}). 
To shorten the expressions,  denote the events 
\begin{equation*}
\begin{aligned}
    F_j^1 &:= (\max\{\alpha_k^{(j)},-1\} \leq c_k -(2j-1)  \leq 1),  \\
    F_j^2 &:= ( 2j-2 \leq c_k < 2j-1+ \min\{\alpha_k^{(j)},1\}).
\end{aligned}
\end{equation*}
In this case,
\al{
 2\rho^\sRB_{(j)}  
 =\ & 2\Pr ( 
x^\ast_k=2j-1) \Pr\left(
F_j^1
% \max\{\alpha_k^{(j)},-1\} \leq c_k -(2j-1)  \leq 1  
\,|\, 
    ( x^\ast_k=2j-1, E_{k+1}^\sRB)\right)   
    + 2\Pr( x^\ast_k=0) 
\Pr \left(
% 2j-2 \leq c_k < 2j-1+ \min\{\alpha_k^{(j)},1\}  
F_j^2
\,|\,  
     ( x^\ast_k=0, E_{k+1}^\sRB)\right)  \notag \\
=\ &  \frac{p}{2M} \biggl \{ \erf\left(\gamma \right) 
-   \erf\left(  \Big[ 1- \left(1-\max\{\alpha_k^{(j)},-1\}\right)_+\Big]
      \gamma \right) \biggr\}  \notag \\
  & + (1-p) \Big\{\erf\left( \Big[ 2j-2 + \left(1+\min\{\alpha_k^{(j)},1\}\right)_{+}  \Big]
      \gamma \right)  -  \erf\left( (2j-2) \gamma \right)\Big\}.  
      \label{eq:rrbj}
}      
If $j_k=1$,   by Lemma \ref{le:akj}, $\alpha_k^{(j)} < -1$ for $j=2,\ldots,M$, then from \eqref{eq:rrbj} we obtain
\al{
2\rho^\sRB_{(j)} = \frac{p}{M}  \, \erf\left(\gamma\right), 
\label{eq:rrbj1}
}
and then
\aln{
2\sum_{j=2}^{M-1} \rho^\sRB_{(j)}
= \sum_{j=2}^{M-1}\frac{p}{M}    \, \erf\left(\gamma\right) 
= \frac{M-2}{M}p\,\erf\left(\gamma\right).
}
If $ 2 \leq  j_k \leq M-1$, by Lemma \ref{le:akj},
$\alpha_k^{(j)} \geq 1$ for $j=2,\ldots,j_k-1$,
$-1 \leq \alpha_k^{(j_k)} \leq 1$, and $\alpha_k^{(j)} <-1$ for $j=j_k+1,\ldots,M$, then from \eqref{eq:rrbj} we obtain
\aln{
2\sum_{j=2}^{M-1}  \rho^\sRB_{(j)}
=\ &   \frac{p}{2M}\Big( \erf\left(\gamma\right) 
      -  \erf\left(\alpha_k^{(j_k)}\gamma\right)    
          +  2 (M-j_k-1)  \, \erf\left(\gamma\right)\Big)  \\
   & + (1-p)\Big( \erf\big((2j_k-1+\alpha_k^{(j_k)})\gamma\big)  
            -\erf\left(2\gamma\right) \Big).
}
If $j_k=M$, by Lemma \ref{le:akj}, $\alpha_k^{(j)} \geq 1$ for $j=2,\ldots,M-1$, then from \eqref{eq:rrbj} we obtain
\aln{
2\sum_{j=2}^{M-1} \rho^\sRB_{(j)}
= (1-p) \left( \erf\left( 2(M-1)\gamma\right) 
           -\erf\left(2\gamma\right) \right).
}
% If we take $\lambda=0$, then $j_k=1$ by \eqref{eq:j0exp} and from \eqref{eq:rrbj1} we obtain
% \al{
% 2\rho^\sBB_{(j)} = \frac{p}{M}  \, \erf\left(\gamma\right), 
% \label{eq:rbbj}
% }
% and then
% \aln{
% 2\sum_{j=2}^{M-1} \rho^\sBB_{(j)}
% = \frac{M-2}{M}p\,\erf\left(\gamma\right).
% }

Case $M$. 
To shorten the expressions, denote the events 
\begin{equation*}
\begin{aligned}
    F_M^1 &:= (\max\{\alpha_k^{(M)},-1\} < c_k-(2M-1)), \\
    F_M^2 &:= (2M -2 < c_k \leq 2M -1 +\alpha_k^{(M)}).
\end{aligned}
\end{equation*}
In this case,
\al{
  2\rho^\sRB_{(M)}  
=\ & 2\Pr(x^\ast_k= 2M-1) \Pr\big(F_M^1    \,|\, (x^\ast_k= 2M-1, E_{k+1}^\sRB)\big)   
+ 2\Pr \big(x^\ast_k= 0\big) \Pr\big( F_M^2  \,|\, E_{k+1}^\sRB\big)  \notag \\
=\ & \frac{p}{2M} \left\{1 - \erf\left(\max\{\alpha_k^{(M)},-1\} \gamma\right)\right\}  \notag \\
& + (1-p)\Big\{ \erf\left(\left[ 2M-2+ (\alpha_k^{(M)}+1)_+ )  \right] \gamma \right) -  \erf\left(\left[2M-2\right] \gamma \right) \Big\}.
 \label{eq:rrbM}
}
If $1\leq j_k \leq M-1$, by Lemma \ref{le:akj}, $\alpha_k^{(M)} < -1$, then
\al{
2\rho^\sRB_{(M)} = \frac{p}{2M} \left(1 + \erf\left(\gamma\right)\right).
\label{eq:rrbM1}
}
If $j_k=M$, by Lemma \ref{le:akj}, $-1 \leq \alpha_k^{(M)} \leq 1$, then
\aln{
2\rho^\sRB_{(M)} 
= \ & \frac{p}{2M}\left(1 - \erf\left(\alpha_k^{(M)}\gamma\right)\right) 
+(1-p) \Big\{ \erf\left(\Big(2M-1+\alpha_k^{(M)}\Big)\gamma\right) 
- \erf\left((2M-2)\gamma\right)\Big\}.
 }
% If we take $\lambda=0$, then $j_k=1$ by \eqref{eq:j0exp} and from \eqref{eq:rrbM1} we obtain
% \al{
% 2\rho^\sBB_{(M)} = \frac{p}{2M} \left(1 + \erf\left(\gamma\right)\right). 
% \label{eq:rbbM}
% }

So far for $M\geq 2$ we have obtained $2 \rho^\sRB_{(0)}$,  $2\rho^\sRB_{(1)}$, $2\sum_{j=2}^{M-1} \rho^\sRB_{( j)}$ and $2\rho^\sRB_{(M)}$,
for each possible value of $j_k$.
To find $\rho_k^\sRB$ in \eqref{eq:xkrb4},
we add all these terms together for $j_k=1$, $2\leq j_k \leq M-1$ and $j_k=M$, respectively.
It is easy to verify that we have a unified expression for different values of $j_k$:  
\aln{
 \rho_k^\sRB 
=    \frac{p}{2M}  + (M-j_k) \frac{p}{M} \erf\left(\gamma\right)   
  - \frac{p}{2M} \erf\left(\alpha_k^{(j_k)} \gamma\right)    
  + (1-p) \erf\left(\left[ 2j_k-1 + \alpha_k^{(j_k)}\right]\gamma\right),
}
leading to \eqref{eq:rbksp}.

% Analogously, from \eqref{eq:xkbb4}, adding $2 \rho^\sBB_{(0)}$,  $2\rho^\sBB_{(1)}$, $2\sum_{j=2}^{M-1} \rho^\sBB_{( j)}$
% and $2\rho^\sBB_{( M)}$ together leads to $\rho_k^\sBB$ in \eqref{eq:bbksp}.
%We can also obtain \eqref{eq:bbksp} directly from  \eqref{eq:rbksp} by taking 
%$\lambda=0$ and $j_k=1$. 

We now show that \eqref{eq:rbksp} also holds when $M=1$
(corresponding to Table \ref{t:ckxkm1}).
Obviously from \eqref{eq:j0exp} we see that $j_k=1$.
We partition $\rho_k^\sRB$ into two parts, corresponding to $\lfloor c_k \rceil_{\cal X}=  
0,  1$, respectively (see the two cases in Table \ref{t:ckxkm1}):
\be \label{eq:xkrbm1}
\rho_k^\sRB
= 2 \rho^\sRB_{(0)} +  2\rho^\sRB_{(1)},
\ee
where $2\rho^\sRB_{(0)}$ is identical to that in \eqref{eq:rrb0}. 
% To shorten the expressions, denote the events
% \begin{equation*}
% \begin{aligned}
%     F_1^1 &:= (\alpha_k^{(1)}<  c_k -1) \\
%     F_1^2 &:= (1/2 < c_k  \leq 1+ \alpha_k^{(1)})
% \end{aligned}
% \end{equation*}
% and
% \al{
% 2\rho^\sRB_{(1)} 
%  = \ & 2\Pr( x^\ast_k=1) \Pr\big( F_1^1  \,|\, (x^\ast_k=1, E_{k+1}^\sRB)\big) \notag \\
%   & + 2\Pr (x^\ast_k= 0)  \Pr\big( F_1^2  \,|\, (x^\ast_k=0,E_{k+1}^\sRB)\big) 
%   \notag \\
% = \ & \frac{p}{2M} \left \{ 1 - \erf\left(\alpha_k^{(1)} \gamma \right) \right\} \notag \\ 
%  & + (1-p)\left\{\erf\left( \big[1+\alpha_k^{(1)} \big] \gamma \right)  -   \erf\left(\gamma/2\right)\right\}.
%   \label{eq:rrbm11}
% }
The second term on the right hand side of \eqref{eq:xkrbm1} can be derived as follows:
\al{
 2\rho^\sRB_{(1)}  
 = \ &  2\Pr( x^\ast_k=1) 
 \Pr\big( \alpha_k^{(1)}<  c_k -1  \,|\, (x^\ast_k=1, E_{k+1}^\sRB)\big) \notag \\
   & + 2\Pr (x^\ast_k= 0)  \Pr\big(1/2 < c_k  \leq 1+ \alpha_k^{(1)}  \,|\, (x^\ast_k=0,E_{k+1}^\sRB)\big) 
   \notag \\
 = \  & \frac{p}{2M} \left \{ 1 - \erf\left(\alpha_k^{(1)} \gamma \right) \right\} 
  + (1-p)\left\{\erf\left( \big[1+\alpha_k^{(1)} \big] \gamma \right)  -   \erf\left(\gamma/2\right)\right\}.
   \label{eq:rrbm11}
}
Then from \eqref{eq:xkrbm1}, \eqref{eq:rrb0} and \eqref{eq:rrbm11} we obtain
\aln{
\rho_k^\sRB
=\ & (1-p) \erf\left(\gamma/2\right)
+ \frac{p}{2M} \left \{ 1 - \erf\left(\alpha_k^{(1)} \gamma \right) \right\}  + (1-p)\left\{\erf\left( \big[1 + \alpha_k^{(1)} \big] \gamma \right)  -   \erf\left(\gamma/2\right)\right\} \\
=\ & \frac{p}{2M} \left \{ 1 - \erf\left(\alpha_k^{(1)} \gamma \right) \right\} 
   + (1-p)\ \erf\left( \big[1+\alpha_k^{(1)} \big] \gamma \right),
  }
which is identical to  the expression given in \eqref{eq:rbksp},
where  $j_k=1$, i.e., \eqref{eq:rbksp} still holds when $M=1$.

The remaining part of the proof is to show \eqref{eq:bbsp}-\eqref{eq:bbksp}.
Recall that when the regularization parameter $\lambda=0$,
$\x^\sRB$ becomes $\x^\sBB$.
Thus, if we take $\lambda=0$, \eqref{eq:rbsp} just becomes \eqref{eq:bbsp}.
What we need to verify is that \eqref{eq:rbksp} becomes   \eqref{eq:bbksp}.
In fact, when $\lambda=0$, from \eqref{eq:j0exp} and \eqref{eq:alphakjk} 
we obtain
$$
j_k=1, \ \ 
\alpha_k^{(1)} = -\frac{1}{2}.
$$
Then \eqref{eq:rbksp} becomes
\aln{
\rho_k^\sBB
  =\ &  \frac{p}{2M} + \frac{(M-1)p}{M} \erf(\br_{kk})
  - \frac{p}{2M} \erf\Big(-\frac{1}{2}\br_{kk}\Big)  
    + (1-p)\erf\Big(\frac{1}{2}\br_{kk}\Big).  
}
Thus, \eqref{eq:bbksp} holds, completing the proof. 
\end{proof}

Here we make some remarks about Theorem \ref{th:babaisp} and its proof.

% \begin{remark}
% We could obtain \eqref{eq:bbsp}-\eqref{eq:bbksp} immediately from \eqref{eq:rbsp}-\eqref{eq:rbksp} by taking $\lambda=0$.
% But we derived  the former  by following each step of the derivation for the latter.
% The reason is that later we will use this detailed proof when we prove Theorem \ref{th:relation}.
% \end{remark}

\begin{remark}
In \cite{ZhuG11} a formula for the symbol error rate (SER) $1-\Pr(x_k^\sRB=x^\ast_k|E_{k+1}^\sRB)$ is derived for
the special case $M=1$. 
It is stated in \cite{ZhuG11} that for a general $2M$-ary constellation the SER can be approximated  using the union bound. 
Here our formula for $\Pr(x_k^\sRB=x^\ast_k|E_{k+1}^\sRB)$ is a rigorous one and it leads
to the formula for $\Pr(\x^\sRB=\x^\ast)$. 
\end{remark}

\begin{remark}
In \cite{WenC17} a formula for $\Pr(\x^\sBB=\x^\ast)$  is derived for the case that 
$\x^\ast$ is uniformly distributed over a box ${\cal X}^n$ which includes all consecutive integer points,
so \eqref{eq:bbsp} cannot be obtained from that formula.
\end{remark}

\begin{remark}
An earlier version of Theorem \ref{th:babaisp}
was presented in the conference paper \cite{ChaP18}. 
Here the expression of $\Pr(\x^\sRB=\x^*)$ is much more concise 
(although the derivation is longer), 
and it facilitates the development of new theoretical results and 
column permutation strategies to be given later.
\end{remark}

\begin{corollary} \label{cor:rbarlimit}
Let the regularization  parameter $\lambda$ be any positive number.
We have
\aln{
& \lim_{r_{kk}\rightarrow 0} \rho_k^\sRB = 1-p, \ \
\lim_{r_{kk}\rightarrow \infty} \rho_k^\sRB = 1, \\
& \lim_{r_{kk}\rightarrow 0} \rho_k^\sBB = \frac{p}{2M}, \ \
\lim_{r_{kk}\rightarrow \infty} \rho_k^\sBB = 1.
}
\end{corollary}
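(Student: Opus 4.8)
The plan is to substitute directly into the closed-form expressions \eqref{eq:rbksp} and \eqref{eq:bbksp}, exploiting two facts. First, since $\br_{kk}=r_{kk}/(\sqrt{2}\sigma)$, the limits $r_{kk}\to 0$ and $r_{kk}\to\infty$ are the same as $\br_{kk}\to 0$ and $\br_{kk}\to\infty$, so I will work throughout in terms of $\br_{kk}$. Second, because $\rho_k^\sRB$ depends on $r_{kk}$ both explicitly and through the index $j_k$, I would first pin down $j_k$ in each regime using \eqref{eq:j0exp}. As $\br_{kk}\to\infty$ we have $\blam/\br_{kk}^2\to 0$, so $\tfrac12\sqrt{1+\blam/\br_{kk}^2}\to\tfrac12$ and hence $j_k=1$ for all sufficiently large $\br_{kk}$; as $\br_{kk}\to 0$ we have $\blam/\br_{kk}^2\to\infty$ (here $\lambda>0$ is essential), so the floor term diverges and $j_k=M$ for all sufficiently small $\br_{kk}$. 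In each regime $j_k$ is eventually constant, so I may replace \eqref{eq:rbksp} by a fixed continuous function of $\br_{kk}$ and pass to the limit.

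For $r_{kk}\to\infty$ I would set $j_k=1$, so that by \eqref{eq:alphakjk} $\alpha_k^{(1)}=\blam/(2\br_{kk}^2)-\tfrac12\to-\tfrac12$. The key point is that the arguments of the two remaining error functions are products with $\br_{kk}$: $\alpha_k^{(1)}\br_{kk}=\blam/(2\br_{kk})-\br_{kk}/2\to-\infty$ and $(1+\alpha_k^{(1)})\br_{kk}=\br_{kk}/2+\blam/(2\br_{kk})\to+\infty$. Using $\erf(\pm\infty)=\pm 1$ and $\erf(\br_{kk})\to 1$, the limit of \eqref{eq:rbksp} becomes $\tfrac{p}{2M}+\tfrac{(M-1)p}{M}+\tfrac{p}{2M}+(1-p)$, and the $p$-terms collapse to $p$, giving $\lim\rho_k^\sRB=1$. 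The same limit for $\rho_k^\sBB$ follows immediately from \eqref{eq:bbksp}: both $\erf(\br_{kk})$ and $\erf(\tfrac12\br_{kk})$ tend to $1$, and the three $p$-coefficients sum to zero, leaving $\lim\rho_k^\sBB=1$.

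For $r_{kk}\to 0$ I would set $j_k=M$, so the $\erf(\br_{kk})$ term in \eqref{eq:rbksp} drops out (its coefficient $M-j_k$ vanishes) and $\alpha_k^{(M)}=\blam/(2\br_{kk}^2(2M-1))-\tfrac12(2M-1)\to+\infty$. The delicate step is that the two surviving error-function arguments are indeterminate $0\cdot\infty$ products; I would resolve them by writing $\alpha_k^{(M)}\br_{kk}=\blam/(2(2M-1)\br_{kk})-(2M-1)\br_{kk}/2$, whose dominant first term forces $\alpha_k^{(M)}\br_{kk}\to+\infty$, and likewise $(2M-1+\alpha_k^{(M)})\br_{kk}=(2M-1)\br_{kk}+\alpha_k^{(M)}\br_{kk}\to+\infty$. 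Both error functions then tend to $1$, so \eqref{eq:rbksp} gives $\tfrac{p}{2M}-\tfrac{p}{2M}+(1-p)=1-p$. For $\rho_k^\sBB$, both $\erf(\br_{kk})$ and $\erf(\tfrac12\br_{kk})$ tend to $0$, leaving only the constant $\tfrac{p}{2M}$.

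The main obstacle is precisely the $0\cdot\infty$ behavior in the $r_{kk}\to 0$ case: $\alpha_k^{(M)}$ blows up while $\br_{kk}$ vanishes, so one cannot naively substitute the limit of $\alpha_k^{(M)}$ into $\erf(\alpha_k^{(M)}\br_{kk})$. The resolution is to expand the product explicitly and identify the $1/\br_{kk}$ term as dominant, which is straightforward once the algebra is written out. The remaining work is routine bookkeeping of the $p$-coefficients, using only $\erf(\pm\infty)=\pm 1$, $\erf(0)=0$, and the continuity of $\erf$.
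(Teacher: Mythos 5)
Your proposal is correct and follows essentially the same route as the paper's own proof: pin down $j_k$ via \eqref{eq:j0exp} in each regime ($j_k=M$ as $r_{kk}\to 0$, $j_k=1$ as $r_{kk}\to\infty$), compute the limit of $\alpha_k^{(j_k)}$ from \eqref{eq:alphakjk}, and substitute into \eqref{eq:rbksp} and \eqref{eq:bbksp}. Your explicit resolution of the $0\cdot\infty$ products $\alpha_k^{(M)}\br_{kk}$ and $(2M-1+\alpha_k^{(M)})\br_{kk}$ is in fact slightly more careful than the paper, which passes directly from $\lim\alpha_k^{(j_k)}=\infty$ to $\erf(\cdot)\to 1$ without spelling out that the $1/\br_{kk}$ term dominates the product.
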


% % \begin{proof} See Appendix \ref{app:cor:rbarlimit}.

\begin{proof}  
When $r_{kk}$ is small enough, from \eqref{eq:j0exp} we see $j_k=M$.
Then from \eqref{eq:alphakjk},
$$
\lim_{r_{kk}\rightarrow 0} \alpha_k^{(j_k)} = \infty.
$$
Thus, from \eqref{eq:rbksp} we obtain
$$
\lim_{r_{kk}\rightarrow 0} \rho_k^\sRB
= \frac{p}{2M} - \frac{p}{2M}\cdot 1  + (1-p)\cdot 1 = 1-p.
$$

When $r_{kk}$ is large enough, from \eqref{eq:j0exp} we see $j_k=1$.
From \eqref{eq:alphakjk},   
$\lim_{r_{kk}\rightarrow \infty} \alpha_k^{(j_k)} = -1/2$, 
and then from \eqref{eq:rbksp} we obtain
$$
\lim_{r_{kk}\rightarrow \infty} \rho_k^\sRB
= \frac{p}{2M} + \frac{(M-1)p}{M} + \frac{p}{2M} + (1-p) = 1.
$$

From  \eqref{eq:bbksp} we obtain
$$
\lim_{r_{kk}\rightarrow 0} \rho_k^\sBB
= \frac{p}{2M},
$$
$$
\lim_{r_{kk}\rightarrow \infty} \rho_k^\sBB
= \frac{p}{2M} + \frac{(M-1)p}{M} + (1-\frac{2M-1}{2M} p) =1.
$$
\end{proof}

We make a remark about this corollary.
\begin{remark}
Under the condition \eqref{eq:pinequ}, from Corollary \ref{cor:rbarlimit}
we observe 
$$
\lim_{r_{kk}\rightarrow 0} \rho_k^\sRB 
\geq \lim_{r_{kk}\rightarrow 0} \rho_k^\sBB.
$$
If \eqref{eq:pinequ} is a strict inequality, then 
when $r_{kk}$ is small enough, we have
$$
\rho_k^\sRB >  \rho_k^\sBB.
$$
For a general $r_{kk}$, this inequality may not hold.
But it still holds when the regularization parameter $\lambda$
satisfies \eqref{eq:lambda}, see Theorem \ref{th:relation}.
\end{remark}

\section{Some properties of the $L_0$-regularized Babai point.}\label{sec:bproperties}
In this section we present some interesting properties of the $L_0$-regularized Babai point $\x^\sRB$.
% The first property is that 
First, we show that 
% the regularization parameter $\lambda^*$ defined in \eqref{eq:lambda} is optimal for maximizing the SP of $\x^\sRB$.
% This implies that 
the SP of $\x^\sRB$ is higher than
the SP of $\x^\sBB$ when $\lambda$ in  \eqref{eq:l0rbils} is defined as $\lambda^*$ in \eqref{eq:lambda},
indicating the \textit{a priori} information about the distribution of the true parameter vector is useful even for a sub-optimal solution to the MAP detection.
% The second property is that 
Then we show that
the SP of $\x^\sRB$ 
% at level $k$ 
is an increasing function
of the diagonal elements of the matrix $\R$ in \eqref{eq:qr}. 
This property allows us to develop an efficient column permutation strategy to increase the SP of the 
$L_0$-regularized Babai point in Section \ref{sec:spgalg}.
% The third property is that 
Finally, we present a bound on the SP of $\x^\sRB$,
which gives us motivation to investigate the effect of various column permutation strategies in Section \ref{sec:permut}.

\subsection{Some basic properties of $\rho_k^\sRB$}
In this subsection, we give some results on the continuity of the function $\rho_k^\sRB$, which will be used in some later proofs. 

\begin{lemma}\label{le:rkcts}
Let the regularization parameter $\lambda=\lambda^*$ (see \eqref{eq:lambda}).
With the same notation as in Theorem \ref{th:babaisp}, 
%$\rho_k^\sRB$
% and $\partial \rho_k^\sRB / \partial \br_{kk}$ 
% are continuous with respect to $\br_{kk} > 0$.
% is continuously differentiable 
% with respect to $\br_{kk} > 0$, 
% and its first derivative is given by 
\al{
&  \frac{\partial\rho_k^\sRB}{\partial \br_{kk}} = \frac{2}{\sqrt{\pi}} \Bigg[ \frac{M-j_k}{M}pe^{-\br_{kk}^2}  
    + (1-p)(2j_k-1)e^{-\left(\frac{1}{2(2j_k-1)}\frac{\bar\lambda}{\br_{kk}}+\frac{1}{2}(2j_k-1)\br_{kk}\right)^2}  \Bigg], \label{eq:drk}
}
and it is continuous.
\end{lemma}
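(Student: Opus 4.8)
The plan is to differentiate the explicit formula for $\rho_k^\sRB$ in \eqref{eq:rbksp} with respect to $\br_{kk}$, treating $\bar\lambda$ as fixed. The main subtlety is that $j_k$ is itself defined as a function of $\br_{kk}$ through \eqref{eq:j0exp}, so $\rho_k^\sRB$ is really a piecewise-defined function whose pieces change as $\br_{kk}$ crosses the thresholds where $j_k$ jumps. I would therefore split the argument into two parts: first compute the derivative on each open interval of $\br_{kk}$ where $j_k$ is constant, and then verify that the pieces match up continuously (and that the derivative formula is consistent) at the breakpoints.

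On an interval where $j_k$ is constant, the term $\frac{p}{2M}$ and the term $\frac{(M-j_k)p}{M}\erf(\br_{kk})$ differentiate immediately: the constant vanishes and the error-function term contributes $\frac{M-j_k}{M}p\cdot\frac{2}{\sqrt\pi}e^{-\br_{kk}^2}$, using $\frac{d}{d\zeta}\erf(\zeta)=\frac{2}{\sqrt\pi}e^{-\zeta^2}$. The remaining two terms involve $\alpha_k^{(j_k)}$, which from \eqref{eq:alphakjk} depends on $\br_{kk}$. The third term $-\frac{p}{2M}\erf\!\big(\alpha_k^{(j_k)}\br_{kk}\big)$ and the fourth term $(1-p)\erf\!\big((2j_k-1+\alpha_k^{(j_k)})\br_{kk}\big)$ require the chain rule. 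The key algebraic simplification I would exploit is that the arguments of these two error functions, when written out using \eqref{eq:alphakjk}, become
\al{
& \alpha_k^{(j_k)}\br_{kk} = \frac{\bar\lambda}{2(2j_k-1)\br_{kk}} - \frac{1}{2}(2j_k-1)\br_{kk}, \notag \\
& \big(2j_k-1+\alpha_k^{(j_k)}\big)\br_{kk} = \frac{\bar\lambda}{2(2j_k-1)\br_{kk}} + \frac{1}{2}(2j_k-1)\br_{kk}. \notag
}
These two quantities differ only in the sign of the $\frac{1}{2}(2j_k-1)\br_{kk}$ term, so they have the same square, and hence the same exponential factor $e^{-(\cdot)^2}$ after differentiation. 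Differentiating each argument with respect to $\br_{kk}$ gives $-\frac{\bar\lambda}{2(2j_k-1)\br_{kk}^2}-\frac{1}{2}(2j_k-1)$ and $-\frac{\bar\lambda}{2(2j_k-1)\br_{kk}^2}+\frac{1}{2}(2j_k-1)$ respectively. I expect the first of these (the $-\frac{p}{2M}$ term, whose prefactor carries a minus sign and a factor $\frac1M$) to cancel against part of the fourth term, leaving only the clean expression $(1-p)(2j_k-1)e^{-(\cdot)^2}$ claimed in \eqref{eq:drk}; this is the step I would carry out carefully to confirm the coefficients match.

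The second part of the argument is continuity. I would use Corollary-style reasoning together with Lemma \ref{le:akj}: at a breakpoint $\br_{kk}$ where $j_k$ transitions from some value to the next, the defining inequality $\alpha_k^{(j_k)}\ge -1$ becomes an equality, and I would check that the left and right limits of both $\rho_k^\sRB$ and of the derivative formula agree there. Since the formula \eqref{eq:rbksp} was itself derived as a single unified expression valid across cases in Theorem \ref{th:babaisp}, continuity of $\rho_k^\sRB$ in $\br_{kk}$ should follow from continuity of $\erf$ and the matching of adjacent pieces at the thresholds; the mild point to watch is the $j_k=M$ boundary, where the $\min$ in \eqref{eq:j0exp} is active and $\alpha_k^{(M)}$ need not equal $-1$, so I would confirm the derivative expression still holds and is continuous there by inspecting the relevant exponential terms. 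I anticipate the main obstacle is precisely the bookkeeping at these breakpoints — ensuring that $j_k$'s discontinuity does not produce a discontinuity in either $\rho_k^\sRB$ or its stated derivative — rather than the differentiation itself, which is routine once the symmetric structure of the two arguments above is exploited.
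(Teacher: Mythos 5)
Your high-level structure matches the paper's: differentiate piecewise on the intervals where $j_k$ is constant, then verify that $\rho_k^\sRB$ and the derivative formula match at the breakpoints where $j_k$ jumps. However, your key algebraic step contains a genuine error. Writing $a=\frac{\blam}{2(2j_k-1)\br_{kk}}$ and $b=\frac{1}{2}(2j_k-1)\br_{kk}$, the two $\erf$ arguments are $a-b$ and $a+b$, and these do \emph{not} have the same square: $(a+b)^2-(a-b)^2=4ab=\blam$. So the two exponential factors produced by the chain rule are $e^{-(a-b)^2}=e^{\blam}e^{-(a+b)^2}$ and $e^{-(a+b)^2}$, which differ by the factor $e^{\blam}$. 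The cancellation you "expect" does not happen for a generic $\lambda$; it happens precisely because the hypothesis $\lambda=\lambda^*$ gives $e^{\blam}=\frac{1-p}{p/(2M)}$, so that $-\frac{p}{2M}e^{-(a-b)^2}=-(1-p)e^{-(a+b)^2}$, and the two terms then combine as $(1-p)e^{-(a+b)^2}\big[(a+b)'-(a-b)'\big]=(1-p)(2j_k-1)e^{-(a+b)^2}$. Your proposal never identifies where the assumption $\lambda=\lambda^*$ is used, yet without it the derivative has two distinct exponential terms and \eqref{eq:drk} is false.

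The same omission affects your continuity argument. Continuity of $\rho_k^\sRB$ at the breakpoints is indeed a routine matching of adjacent pieces, but continuity of the \emph{derivative} at the breakpoints $\gamma^{(i)}$ is not automatic: the paper's verification reduces to an identity that again requires substituting $\blam=\ln\frac{1-p}{p/(2M)}$. You should also note that once one-sided limits of the derivative are shown to agree, one still needs a short argument (the paper uses the continuity of $\rho_k^\sRB$ and the definition of the derivative) to conclude that the derivative actually exists at the breakpoint and equals that common limit. Finally, the $j_k=M$ boundary you worry about is not a breakpoint at all: $j_k=M$ holds on the entire interval $(0,\gamma^{(1)}]$, so the only points needing attention are $\gamma^{(1)},\ldots,\gamma^{(M-1)}$.
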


\begin{proof}
To simplify notation, we denote $\gamma = \br_{kk}$ as in \eqref{eq:gamma}, and $\blam=\lambda^*/\sigma^2$.
To emphasize that $\rho_k^\sRB$ is a function of $\gamma$, we rewrite it as $\rho_k^\sRB(\gamma)$.
In order to prove our claim, we need only to show that $\rho_k^\sRB(\gamma)$ and its first derivative with respect to $\gamma$ are continuous for $\gamma > 0$.
From \eqref{eq:j0exp}, when $\gamma$ is increasing, $j_k$ either remains the same or is decreasing.
Let $\gamma^{(i)}$ satisfy 
\be \label{eq:gj}
M-i+1=  \frac{1}{2} \sqrt{1+\frac{\bar\lambda}{(\gamma^{(i)})^2}}  +1, \ \ i=1,\ldots,M-1.
\ee
Denote 
\be \label{eq:g0m}
\gamma^{(0)}=0, \ \ \gamma^{(M)}= \infty.
\ee
Then, from  \eqref{eq:j0exp},
\be \label{eq:jkpw}
\gamma \in (\gamma^{(i-1)}, \gamma^{(i)}] 
\Rightarrow j_k = M-i+1, \ \ i=1,\ldots,M.
\ee
Note that if $M=1$, then \eqref{eq:gj} does not exist,
but \eqref{eq:jkpw} still holds with 
$(\gamma^{(0)},\gamma^{(1)})=(0,\infty)$ (here the 
right parenthesis instead of the right bracket has to be used).
Here we consider a general $M$, but the proof can be applied
to the special case that $M=1$ by some simplification.

From \eqref{eq:gj} it follows that for $i=1,\ldots,M-1$,
\begin{equation}\label{eq:giexp}
    \gamma^{(i)} = \sqrt{\frac{\blam}{(2(M-i+1)-1)(2(M-i+1)-3)}},
\end{equation}
which will be used later. 

From \eqref{eq:rbksp} and \eqref{eq:alphakjk}, it is easy to see that 
for any $i=1,\ldots,M$, when $\gamma \in (\gamma^{(i-1)}, \gamma^{(i)})$ 
the function $\rho^\sRB_k$ is differentiable, and we have
\al{
 \rho_k^\sRB(\gamma)  
  = \ &    \frac{p}{2M} +  \frac{M-j_k}{M} p \, \erf\left(\gamma\right) 
  - \frac{p}{2M} \, \erf\left(\frac{1}{2(2j_k-1)}\frac{\bar\lambda}{\gamma}-\frac{1}{2}(2j_k-1)
 \gamma\right)  \notag  \\
&   + (1-p)\, \erf\left(\frac{1}{2(2j_k-1)}\frac{\bar\lambda}{\gamma}+\frac{1}{2}(2j_k-1)\gamma\right),   \label{eq:rkg}  \\
&  \frac{\partial\rho_k^\sRB(\gamma)}{\partial \gamma}= \frac{2}{\sqrt{\pi}} \Big[ \frac{M-j_k}{M}pe^{-\gamma^2}   
   + (1-p)(2j_k-1)e^{-\left(\frac{1}{2(2j_k-1)}\frac{\bar\lambda}{\gamma}+\frac{1}{2}(2j_k-1)\gamma\right)^2}  \Big]. \label{eq:drkg}
}
Furthermore, it is easy to see that 
$\partial\rho_k^\sRB(\gamma) / \partial \gamma$ is continuous 
when $\gamma \in (\gamma^{(i-1)}, \gamma^{(i)})$. 
Therefore to prove our claim, 
it suffices to show that $\rho^\sRB_k(\gamma)$ 
and $\partial\rho_k^\sRB(\gamma) / \partial \gamma$ 
are continuous at $\gamma^{(i)}$ for $i=1,\ldots,M-1$, where $j_k$ changes,
i.e.,  
we need only to show that 
 for $i=1,\ldots,M-1$,
\al{
& \rho_k^\sRB(\gamma^{(i)}) = \lim_{\gamma \to \gamma^{(i)-}}  \rho_k^\sRB(\gamma)
 = \lim_{\gamma \to \gamma^{(i)+}}  \rho_k^\sRB(\gamma), \label{eq:rcont} \\
& \frac{\partial \rho_k^\sRB(\gamma^{(i)})}{\partial \gamma} 
= \lim_{\gamma \to \gamma^{(i)-}}  \frac{\partial\rho_k^\sRB(\gamma)}{\partial \gamma}
 = \lim_{\gamma \to \gamma^{(i)+}}  \frac{\partial\rho_k^\sRB(\gamma)}{\partial \gamma}. \label{eq:drcont}
} 
% \red{Do the above two qualities imply  $\frac{\partial\rho_k^\sRB(\gamma^{(i)})}{\partial \gamma}$ 
% is equal to both sides of \eqref{eq:drcont},
% which is needed for the conclusion?}
% \blue{I don't think I understand the question. The two equalities shows continuity of $\rho$ and $\rho'$, and the rest of this proof tries to prove these two are true.}
% \red{Suppose $f(x)=1$ when $x\neq 0$ and $f(x)=0$ when $x=0$.
% Although $\lim_{x\rightarrow 0-}f(x) =\lim_{x\rightarrow 0+}f(x)=1$, 
% $f(x)$ is not continuous at $x=0$.}

The first equality of \eqref{eq:rcont} trivially holds by \eqref{eq:jkpw}.
From \eqref{eq:rkg}, \eqref{eq:jkpw} and \eqref{eq:giexp}, we can easily verify that the second equality of \eqref{eq:rcont}
can be rewritten as
\aln{
 &  \frac{p}{2M} + \frac{i-1}{M}p\, \erf\big(\gamma^{(i)}\big)   
  + \frac{p}{2M} \,  \erf \big( \gamma^{(i)} \big)    
   + (1-p)\,  \erf\big( 2(M-i)\gamma^{(i)} \big)  \\
= \ &
 \frac{p}{2M} +  \frac{i}{M}p\, \erf\big(\gamma^{(i)}\big)  
  - \frac{p}{2M} \, \erf\big(\gamma^{(i)}\big)  
    +(1-p)\,  \erf\big(2(M-i)\gamma^{(i)}\big),
}
which obviously holds.

Now, to show that the second equality of 
\eqref{eq:drcont} holds, by \eqref{eq:drkg} we need to show
\aln{
& \frac{i-1}{M}pe^{-\left(\gamma^{(i)}\right)^2}  + (1-p)(2M-2i+1)e^{\left( 2(M-i)\gamma^{(i)} \right)^2} \\
=\ & \frac{i}{M}pe^{-\left(\gamma^{(i)}\right)^2} + (1-p)(2M-2i-1)e^{\left( 2(M-i)\gamma^{(i)} \right)^2},
}
or equivalently,
\begin{equation*}
    e^{-\left(\gamma^{(i)}\right)^2} = \frac{1-p}{p/(2M)} e^{-\left( 2(M-i)\gamma^{(i)} \right)^2}.
\end{equation*}
Taking logarithm on both sides and substituting $\blam = \ln\frac{1-p}{p/(2M)}$ (see \eqref{eq:lambda}), the above equation becomes
\aln{
-\frac{\blam}{(2(M-i+1)-1)(2(M-i+1)-3)}  
=  \blam - \frac{4(M-i)^2\blam}{(2(M-i+1)-1)(2(M-i+1)-3)}
}
which holds for all $i = 1,\ldots,M-1$. Therefore the second equality of \eqref{eq:drcont} holds.

To show the first equality of \eqref{eq:drcont}, we use the continuity of $\rho_k^\sRB(\gamma)$.
In fact,
\aln{
\lim_{\gamma \to \gamma^{(i)}} \frac{\partial\rho_k^\sRB(\gamma)}{\partial\gamma}
& = \lim_{\gamma \to \gamma^{(i)}} 
\lim_{\epsilon \to 0} \frac{\rho^\sRB_k (\gamma + \epsilon) - \rho^\sRB_k (\gamma) }{\epsilon}  
 = \lim_{\epsilon \to 0} 
\frac{\rho^\sRB_k (\gamma^{(i)} + \epsilon) - \rho^\sRB_k (\gamma^{(i)}) }{\epsilon}  
  = \frac{\partial \rho_k^\sRB(\gamma^{(i)})}{\partial \gamma},
}
completing the proof.
\end{proof}

\begin{lemma}\label{le:rlpcts}
Let $p$ satisfy \eqref{eq:zeropr}, and the regularization parameter $\lambda$ be positive. Then:
\begin{enumerate}
    \item $\rho^\sRB_k$ is continuous with respect to $\lambda$ for any fixed $p$. 
    \item $\rho^\sRB_k$ is continuous with respect to $p$ when $\lambda$ is defined as in \eqref{eq:lambda}. 
\end{enumerate}
\end{lemma}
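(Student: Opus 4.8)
The plan is to prove the two continuity claims of Lemma \ref{le:rlpcts} separately, treating each as a matter of examining the explicit formula \eqref{eq:rbksp} for $\rho_k^\sRB$ and checking continuity at the finitely many parameter values where the integer $j_k$ jumps. Away from those breakpoints, $\rho_k^\sRB$ is a finite composition of the error function with smooth expressions in $\lambda$, $p$, and $\alpha_k^{(j_k)}$, so continuity is immediate; the only genuine work is at the transition points of $j_k$.

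For part~(1), I would fix $p$ and regard $\rho_k^\sRB$ as a function of $\lambda$ through $\blam=\lambda/\sigma^2$. From \eqref{eq:j0exp}, $j_k$ is a nondecreasing step function of $\blam$ (and hence of $\lambda$), taking values in $\{1,\dots,M\}$, and it jumps precisely at the values $\blam$ where $\tfrac12\sqrt{1+\blam/\br_{kk}^2}+1$ crosses an integer. On each interval where $j_k$ is constant, \eqref{eq:rbksp} together with \eqref{eq:alphakjk} shows $\rho_k^\sRB$ is a smooth function of $\lambda$, so it suffices to match the one-sided limits at each breakpoint. At a breakpoint where $j_k$ switches between consecutive integers, say from $j$ to $j+1$, the key observation is that at the exact transition value $\alpha_k^{(j)}=-1$ on one side matches $\alpha_k^{(j+1)}$ relations on the other; I would substitute the breakpoint value of $\blam$ (from the analogue of \eqref{eq:giexp}) into \eqref{eq:rbksp} evaluated with both $j_k=j$ and $j_k=j+1$ and verify algebraically that the two expressions coincide. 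This is essentially the same bookkeeping already carried out for the continuity of $\rho_k^\sRB$ in the proof of Lemma \ref{le:rkcts} (the first equality of \eqref{eq:rcont}), so I expect the matching identity to reduce to an elementary cancellation using $\erf$ of the shared argument.

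For part~(2), I would fix the relation $\lambda=\lambda^*=\sigma^2\ln\frac{1-p}{p/(2M)}$ from \eqref{eq:lambda}, so that now $\blam=\ln\frac{1-p}{p/(2M)}$ depends continuously on $p$ for $p\in(0,2M/(2M+1))$. Since $\blam$ is a continuous (indeed smooth) function of $p$ on this interval, and since $\rho_k^\sRB$ is, by part~(1) together with the explicit dependence on $p$ in the coefficients of \eqref{eq:rbksp}, continuous jointly in $(\blam,p)$, the composition is continuous in $p$. The only subtlety is again the jumps of $j_k$: as $p$ varies, $\blam$ varies and may cross a breakpoint of $j_k$, but part~(1) guarantees $\rho_k^\sRB$ is continuous across exactly those crossings, so composing the continuous map $p\mapsto\blam$ with the continuous map $\blam\mapsto\rho_k^\sRB$ yields continuity in $p$.

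The main obstacle will be the algebraic verification of the matching of one-sided limits at the breakpoints of $j_k$ in part~(1). I expect this to go through cleanly because Lemma \ref{le:rkcts} has already established the analogous continuity of $\rho_k^\sRB$ (and even its derivative) as a function of $\br_{kk}$ under $\lambda=\lambda^*$, and the identity needed here is the same type of cancellation viewed with $\lambda$ (rather than $\br_{kk}$) as the varying parameter. Indeed, I would aim to reuse that computation: the breakpoint condition in both settings is $\alpha_k^{(j_k)}=-1$, and the verified equality there transfers directly. Consequently, both parts should follow from the explicit formula plus the standard fact that a finite composition of continuous functions, continuous across each of finitely many gluing points, is continuous.
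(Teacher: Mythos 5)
Your proposal is correct and follows essentially the same route as the paper's proof: partition the $\blam$-axis at the breakpoints $\blam^{(i)}$ where $j_k$ jumps, note smoothness of \eqref{eq:rbksp} on each piece, and verify that the one-sided limits match at each breakpoint via the algebraic $\erf$ identity coming from $\alpha_k^{(j_k)}$ hitting $1$ on one side and $-1$ on the other. For part~(2) the paper repeats the partition in the $p$ variable rather than invoking joint continuity of $\rho_k^\sRB$ in $(\blam,p)$ as you do, but your composition argument is sound here because the breakpoints $\blam^{(i)}$ do not depend on $p$ and the matching identity at each breakpoint holds identically in $p$, so the two treatments are interchangeable.
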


\begin{proof}
To simplify notation, We denote $\gamma=\br_{kk}$ as in \eqref{eq:gamma}, and $\blam = \lambda/\sigma^2$. 

To prove our first claim, it suffices to show that $\rho^\sRB_k$ is continuous with respect to $\blam$. 
From \eqref{eq:j0exp} we  see that $j_k$  either remains the same or increases when $\blam$ increases in $[0,\infty)$. 
Let $\blam^{(i)}$ be such that
\al{
i =  \frac{1}{2} \sqrt{1+\frac{{\bar\lambda}^{(i)}}{ \gamma^2}},\quad i=1,\ldots,M-1.
\label{eq:lbi1}
}
Denote 
$$
\blam^{(0)} = 0, \quad \blam^{(M)} = \infty.
$$
From  \eqref{eq:j0exp} we obtain
\be \label{eq:jkpw1}
\blam\in [\blam^{(i-1)},\blam^{(i)}) \Rightarrow j_k = i,  \qquad i=1,\ldots,M.
\ee
Note that if $M=1$, then \eqref{eq:lbi1} does not exist, but \eqref{eq:jkpw1} still holds
with $[\blam^{(0)}, \blam^{(1)})=[0,\infty)$.
Our following proof is for a general $M$, but it is easy to see that 
it can be applied to the special case that $M=1$ by some simplification.

We first show that $\rho_k^\sRB$ is continuous with respect to $\blam$. From \eqref{eq:lbi1},
\al{
\blam^{(i)} = \gamma^2 (2i-1)(2i+1), \ \ i=1,\ldots,M-1.
\label{eq:lbi2} 
}
From \eqref{eq:rbksp} and \eqref{eq:alphakjk}, 
it is obvious that $\rho_k^\sRB$ is continuous 
% when $\blam\notin \{\blam^{(i)}: i = 1:M-1\}$. 
when $\blam \in (\blam^{(i-1)}, \blam^{(i)})$ for any $i=1,\ldots,M$.

To show the continuity of $\rho_k^\sRB$ at $\blam^{(i)}$ for $i=1,\ldots,M-1$, we need to
show that for any fixed $\gamma$, 
% $\lim_{\blam\to \blam^{(i)-}}\rho_k^\sRB  = \lim_{\blam\to \blam^{(i)+}}\rho_k^\sRB$. 
\al{
\lim_{\blam\to \blam^{(i)-}}\rho_k^\sRB  = \lim_{\blam\to \blam^{(i)+}}\rho_k^\sRB = \rho_k^\sRB |_{\blam = \blam^{(i)}}. \label{eq:lcont}
}
The second equality of \eqref{eq:lcont} trivially holds by \eqref{eq:jkpw1}. To show the first equality of \eqref{eq:lcont}, 
from \eqref{eq:rbksp} and \eqref{eq:jkpw1} 
we need to show
\al{
& \frac{p}{M}\erf(\gamma) - \frac{p}{2M}\erf\left( \frac{\blam^{(i)}}{2(2i-1)\gamma} - \frac{1}{2}(2i-1)\gamma \right)  
  + (1-p)\erf\left( \frac{\blam^{(i)}}{2(2i-1)\gamma} + \frac{1}{2}(2i-1)\gamma \right) \notag \\
=& - \frac{p}{2M}\erf\left( \frac{\blam^{(i)}}{2(2i+1)\gamma} - \frac{1}{2}(2i+1)\gamma \right)  
 + (1-p)\erf\left( \frac{\blam^{(i)}}{2(2i+1)\gamma} + \frac{1}{2}(2i+1)\gamma \right). \label{eq:lctseq1}
}
After substituting \eqref{eq:lbi2} into \eqref{eq:lctseq1}, the equality becomes
\aln{
& \frac{p}{M}\erf(\gamma) - \frac{p}{2M}\erf\left(\frac{1}{2}(2i+1)\gamma - \frac{1}{2}(2i-1)\gamma\right)   
  + (1-p)\erf\left( \frac{1}{2}(2i+1)\gamma + \frac{1}{2}(2i-1)\gamma \right)  \\
=& -\frac{p}{2M}\erf\left( \frac{1}{2}(2i-1)\gamma - \frac{1}{2}(2i+1)\gamma \right)   
 + (1-p)\erf\left( \frac{1}{2}(2i-1)\gamma + \frac{1}{2}(2i+1)\gamma \right),
}
which can easily be verified to hold. 
This concludes the proof for the first claim.

To prove our second claim, we first see from \eqref{eq:j0exp}
that $j_k$  either remains the same or is decreasing when $p$ 
is increasing from $0^+$ to $2M/(2M+1)$.
Let $\blam^{(i)}$ be defined as in \eqref{eq:lbi1}, and $p^{(i)}$ be such that
\begin{equation}\label{eq:piexp}
    \begin{aligned}
        & \blam^{(i)}:=\ln \frac{1-p^{(i)}}{p^{(i)}/(2M)}, \ \ i=1,\ldots,M-1.  \\
        & p^{(0)}:=2M/(2M+1),\ \ p^{(M)}:=0, \\ 
& \blam^{(0)}:=\ln \frac{1-p^{(M)}}{p^{(M)}/(2M)}=0, \\
& \blam^{(M)}:=\lim_{p \rightarrow {p^{(0)}}^{+}}\ln \frac{1-p}{p/(2M)}=\infty.
    \end{aligned}
\end{equation}
When  $p \in (p^{(i)}, p^{(i-1)}]$, equivalently we have
$\blam \in [\lambda^{(i-1)},\lambda^{(i)})$.
Using the same approach for proving our first claim, we can obtain the continuity of $\rho^\sRB_k$ with respect to $p$. This concludes our proof.
\end{proof}

\subsection{The superiority of $\x^\sRB$}
% As we know the regularization parameter $\lambda^*$ defined in \eqref{eq:lambda} is
% optimal in the sense that the solution of \eqref{eq:l0rbils} with $\lambda=\lambda^*$,
% which is the MAP estimator of $\x^*$ in \eqref{eq:lm},
% has higher SP than the solution of \eqref{eq:l0rbils}
% with $\lambda$ chosen differently. 
% \blue{I don't remember where this comes from. Maybe we should give a reference for this result?}
In the following we show that $\lambda^*$ is optimal in maximizing
the SP of the $L_0$-regularized Babai point $\x^\sRB$.

\begin{theorem} \label{th:relation}
Let $\lambda\in [0,\infty)$ be an arbitrary regularization parameter
used in $\x^\sRB$
and let $\lambda^*$ be defined in \eqref{eq:lambda}. 
When $\lambda=\lambda^*$, the SP of the $L_0$-regularized Babai point $\x^\sRB$ reaches the global maximum with respect to $\lambda$. 
In particular, when $\lambda=\lambda^*$, for the box-constrained Babai point $\x^\sBB$, 
\be \label{eq:probineq}
\Pr(\x^\sRB=\x^*) \geq \Pr(\x^\sBB=\x^*),
\ee
with equality if and only $\lambda^*=0$, i.e., $p=2M/(2M+1)$.
\end{theorem}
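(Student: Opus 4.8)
The plan is to prove the theorem by showing that the regularization parameter $\lambda^*$ maximizes the SP factor-by-factor. Since the SP factorizes as $P^\sRB(\R)=\prod_{k=1}^n \rho_k^\sRB$ by Theorem~\ref{th:babaisp}, and each $\rho_k^\sRB$ depends on $\lambda$ only through $\blam=\lambda/\sigma^2$, it suffices to show that each individual factor $\rho_k^\sRB$ is maximized over $\blam\in[0,\infty)$ at the value $\blam^*=\lambda^*/\sigma^2=\ln\frac{1-p}{p/(2M)}$. If every factor is maximized at the same $\blam^*$, the product is too, and the global maximum over $\lambda$ follows.

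To carry this out, I would first fix $k$ and regard $\rho_k^\sRB$ as a function of $\blam$ for fixed $\gamma=\br_{kk}$ and fixed $p$. By Lemma~\ref{le:rlpcts}(1), this function is continuous in $\blam$ on $[0,\infty)$, so it suffices to analyze its derivative on each open subinterval $(\blam^{(i-1)},\blam^{(i)})$ where $j_k$ is constant, together with behavior at the endpoints. On such an interval, from \eqref{eq:rbksp} and \eqref{eq:alphakjk} only the two error-function terms involving $\alpha_k^{(j_k)}$ depend on $\blam$, and I would differentiate
\al{
\frac{\partial\rho_k^\sRB}{\partial\blam}
= \frac{2}{\sqrt{\pi}}\,\frac{\gamma}{2(2j_k-1)\gamma^2}
\Big[-\frac{p}{2M}\,e^{-\alpha^2 \gamma^2}\cdot(-1)
+ (1-p)\,e^{-(2j_k-1+\alpha)^2\gamma^2}\Big],
}
where $\alpha=\alpha_k^{(j_k)}$ abbreviates the inner argument. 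Setting the bracket to zero, the exponential terms reduce (after expanding the squares, where the $\alpha^2$ terms cancel) to a condition of the form $e^{-(\text{linear in }\blam)}=\frac{p/(2M)}{1-p}$. Solving this should pin the stationary point to exactly $\blam=\blam^*$, independent of $k$, $\gamma$, and $j_k$. This is the crux: I expect the first-order condition to collapse to the defining relation of $\lambda^*$, matching the cancellation pattern already seen in the continuity proof of Lemma~\ref{le:rkcts}.

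For the sign/concavity analysis I would check that the bracket is positive for $\blam<\blam^*$ and negative for $\blam>\blam^*$, so that $\rho_k^\sRB$ increases then decreases, confirming a global maximum at $\blam^*$. For the inequality \eqref{eq:probineq}, I would specialize: $\x^\sBB$ corresponds to $\lambda=0$, whereas $\lambda=\lambda^*\geq 0$ is the maximizer, so $\Pr(\x^\sRB=\x^*)\geq\Pr(\x^\sBB=\x^*)$ is immediate. For the equality case, I would argue that strict inequality holds in at least one factor unless $\lambda^*=0$; since $\lambda^*=0\iff 1-p=p/(2M)\iff p=2M/(2M+1)$ by \eqref{eq:lambda}, the two Babai points coincide exactly then, giving equality. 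Conversely, when $\lambda^*>0$ the strict monotonicity on $[0,\lambda^*)$ forces a strict gain, so equality fails.

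\textbf{The main obstacle} I anticipate is handling the piecewise structure in $j_k$ cleanly: the derivative formula changes form as $\blam$ crosses each breakpoint $\blam^{(i)}$, so I must verify that the stationarity condition yields the \emph{same} $\blam^*$ on every piece and that $\blam^*$ actually lies in the correct subinterval (or, if it lands at a breakpoint, that continuity from Lemma~\ref{le:rlpcts} still delivers a genuine maximum). A subtle point is that for small $\gamma$ the maximizer $\blam^*$ might fall in a high-$j_k$ regime while for large $\gamma$ it falls in the $j_k=1$ regime; I would reconcile these by checking that the unconstrained stationary value always coincides with $\blam^*$ regardless of which piece it lands in, so the monotone-up-then-down shape is preserved across breakpoints. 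The endpoint bookkeeping and the verification that no larger value is attained at interval boundaries is where the care is required, but the underlying algebra should mirror the cancellations already established.
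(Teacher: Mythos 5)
Your proposal is correct and follows essentially the same route as the paper's proof: factorize the SP via Theorem~\ref{th:babaisp}, use the continuity from Lemma~\ref{le:rlpcts} together with piecewise differentiation in $\blam$ on the intervals where $j_k$ is constant, observe that the stationarity condition collapses to $\blam=\ln\frac{1-p}{p/(2M)}$ independently of $k$, $\gamma$ and $j_k$, and establish the increase-then-decrease shape across all breakpoints. The only blemish is the factor $(-1)$ in your displayed derivative: since $\partial\big(\alpha_k^{(j_k)}\gamma\big)/\partial\blam = 1/\big(2(2j_k-1)\gamma\big)>0$, the bracket should read $(1-p)e^{-(2j_k-1+\alpha)^2\gamma^2}-\frac{p}{2M}e^{-\alpha^2\gamma^2}$, which is exactly what your subsequent sign analysis assumes.
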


\begin{proof}
In the following we show that for any fixed $\br_{kk}$, $\rho_k^\sRB$ reaches the global maximum with respect to $\lambda$ when $\lambda = \lambda^*$.
This is the key part of the whole proof.

In Lemma \ref{le:rlpcts} we have shown that $\rho^\sRB_k$ is continuous with respect to $\lambda$. 
From \eqref{eq:jkpw1} we observe that $j_k$ does not change 
when $\blam\in[\blam^{(i-1)}, \blam^{(i)})$ for any $i=1,\ldots,M$. 
Therefore, by \eqref{eq:rbksp} $\rho_k^\sRB$ is differentiable 
with respect to $\blam\in (\blam^{(i-1)}, \blam^{(i)})$. 
To simplify notation, let $\eta := (2j_k - 1)\gamma$. 
When $\blam\in[\blam^{(i-1)}, \blam^{(i)})$, from \eqref{eq:rbksp} 
and \eqref{eq:jkpw1} we have  
\al{
\frac{\partial\rho_k^\sRB}{\partial\blam} 
= \frac{1}{\sqrt{\pi}\eta}\Big[(1-p) e^{-\left( \frac{\blam}{2\eta} + \frac{1}{2}\eta \right)^2} 
- \frac{p}{2M}e^{-\left(\frac{\blam}{2\eta} - \frac{1}{2}\eta\right)^2}\Big],
\label{eq:rbkd}
}
which is a right derivative if $\blam=\blam^{(i-1)}$. 
Setting $\frac{\partial\rho_k^\sRB}{\partial\blam}=0$,
we obtain
\al{
(1-p) e^{-\left( \frac{\blam}{2\eta} + \frac{1}{2}\eta \right)^2} 
- \frac{p}{2M}e^{-\left(\frac{\blam}{2\eta} - \frac{1}{2}\eta\right)^2} = 0.
\label{eq:lstationary1}
}
leading to
\al{
\blam= \ln \frac{1-p}{p/(2M)}.
\label{eq:optimalblam}
}
Note that $\lambda^* =\sigma^2 \ln \frac{1-p}{p/(2M)}$ 
(see \eqref{eq:lambda}).
Thus, from \eqref{eq:optimalblam}, the stationary point of $\rho_k^\sRB$ with respect to $\blam$
is $\blam^*:=\lambda^*/\sigma^2$.

Now we show that $\blam^*$ is the maximizer of $\rho_k^\sRB$.
Assume that $\blam^\ast \in [\blam^{(i^\ast-1)}, \blam^{(i^\ast)})$ 
for some $i^\ast \in \{1, \ldots, M\}$. 
For any $\blam \in [\blam^{(i-1)}, \blam^{(i)})$ with $i < i^\ast$, 
we have $j_k = i$ and $e^\blam < \frac{1-p}{p/(2M)}$. 
Therefore
\aln{
\frac{1-p}{p/(2M)}e^{-\left( \frac{\blam}{2\eta} + \frac{1}{2}\eta \right)^2} - e^\blam e^{-\left( \frac{\blam}{2\eta} + \frac{1}{2}\eta \right)^2} > 0.
}
Then from \eqref{eq:rbkd} we observe that 
$\frac{\partial\rho_k^\sRB}{\partial\blam} > 0$. 
Using a symmetrical argument, we can show that for any $\blam \in [\blam^{(i-1)}, \blam^{(i)})$ with $i > i^\ast$, $\frac{\partial\rho_k^\sRB}{\partial\blam} < 0$.
We can also use the same argument to show that 
when $\blam\in[\blam^{(i^\ast-1)}, \blam^{(i^\ast)})$,
$\frac{\partial\rho_k^\sRB}{\partial\blam} > 0$ when $\blam < \blam^\ast$, 
and $\frac{\partial\rho_k^\sRB}{\partial\blam} < 0$ when $\blam > \blam^\ast$.
With the above results and the continuity of $\rho_k^\sRB$ with respect to $\blam$ from Lemma \ref{le:rlpcts}, 
we can conclude that $\rho_k^\sRB$ is strictly  increasing
when $\blam < \blam^\ast$, 
and strictly  decreasing when $\blam > \blam^\ast$. 
Thus, $\blam^\ast$ is the only 
global maximizer of $\rho_k^\sRB$ with respect to $\blam$, 
or equivalently  $\lambda^\ast$ is the only 
global maximum of $\rho_k^\sRB$ with respect to $\lambda$. 
This concludes the proof for our claim.

Using the result above, from \eqref{eq:rbsp} we see
it is obvious that the SP of $\x^\sRB$
reaches the global maximum  with respect to $\lambda$ if and only if $\lambda=\lambda^*$.

Recall that when $\lambda=0$, $\x^\sRB$ becomes $\x^\sBB$.
Thus the inequality \eqref{eq:probineq} must hold
and it becomes an equality if and only if $\lambda^*=0$,
i.e., $p=2M/(2M+1)$.
\end{proof}

We make some remarks about Theorem \ref{th:relation}.
\begin{remark}
The inequality \eqref{eq:probineq} in Theorem \ref{th:relation} was presented in the conference paper \cite{ChaP18}, but its proof there was completely different 
from the proof given here, as the global optimality result 
of the regularization parameter $\lambda$
proved in 
Theorem \ref{th:relation} 
is new.
\end{remark}

\begin{remark}
    Theorem \ref{th:relation} shows the advantage of applying $L_0$ regularization, but it does not show the degree of improvement the regularization to the SP. 
    Later in Section \ref{sec:numexp_the_exp_sp}, Figure \ref{fig:spplots} shows that the $L_0$ regularization can give significant SP improvement, especially when $p$ is small, which leads to $\x^\ast$ being sparse. 
\end{remark}

\begin{remark}
If $\lambda$ is an arbitrary positive number, i.e., it is not defined by \eqref{eq:lambda},
then the inequality \eqref{eq:probineq} may not hold.
For example, for 
$$
\R = \bmxc{cr}
0.8432 & -0.6045 \\
0 & 0.8980 \\
\emxc, \ \
p = 0.6, \ \ M = 4, \ \ \sigma = 0.2,
$$
we have
$\Pr(\x^\sBB = \x^\ast) = 0.9718$. 
With $\lambda = 0.0670$ calculated by \eqref{eq:lambda}, 
$\Pr(\x^\sRB = \x^\ast) = 0.9805$,
while with $\lambda = 0.2, \Pr(\x^\sRB = \x^\ast) = 0.9537$.
\end{remark}

\subsection{Monotonicity properties}
The following result shows the monotonicity property of 
$P^\sRB(\R)$ and $P^\sBB(\R)$ with respect to the diagonal elements of $\R$.

\begin{theorem} \label{th:rmono}
Let the regularization parameter $\lambda$ used by $\x^\sRB$ be chosen as $\lambda^*$ defined by  \eqref{eq:lambda}.
With the same notation as in Theorem \ref{th:babaisp}, 
for each $k=1,2,\ldots,n$, $\rho_k^\sRB$ and $\rho_k^\sBB$ are strictly increasing for $r_{kk} \in (0,\infty)$.
Consequently, both $P^\sRB(\R)$ and $P^\sBB(\R)$ 
are strictly increasing over $r_{kk} \in (0,\infty)$ for $k=1,2,\ldots,n$.
\end{theorem}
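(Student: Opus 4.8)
The plan is to establish strict monotonicity of each factor $\rho_k^\sRB$ and $\rho_k^\sBB$ with respect to $r_{kk}$ and then lift this to the products $P^\sRB(\R)$ and $P^\sBB(\R)$. Since $\br_{kk}=r_{kk}/(\sqrt{2}\sigma)$ is a fixed positive multiple of $r_{kk}$, it is equivalent to prove strict monotonicity in $\br_{kk}$. I would also note at the outset that, by the formula \eqref{eq:rbksp} of Theorem \ref{th:babaisp}, each $\rho_k^\sRB$ depends on the entries of $\R$ only through $r_{kk}$ (for fixed $p$, $M$, $\sigma$), since $j_k$ and $\alpha_k^{(j_k)}$ are functions of $\br_{kk}$ and $\blam$ alone; the same holds for $\rho_k^\sBB$. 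This separation of variables is precisely what makes the product argument at the end go through.

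For $\rho_k^\sRB$ the work is essentially done by Lemma \ref{le:rkcts}. That lemma supplies the closed form \eqref{eq:drk} for $\partial\rho_k^\sRB/\partial\br_{kk}$ and, crucially, its continuity across the breakpoints $\gamma^{(i)}$ where $j_k$ changes. I would simply observe that every ingredient of \eqref{eq:drk} is nonnegative: $j_k\leq M$ gives $(M-j_k)/M\geq 0$, we have $p>0$ and $1-p>0$ (because $p\leq 2M/(2M+1)<1$), and $2j_k-1\geq 1$. Hence the second bracketed term is strictly positive for every $\br_{kk}>0$, so $\partial\rho_k^\sRB/\partial\br_{kk}>0$ throughout $(0,\infty)$. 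Together with the continuity of $\rho_k^\sRB$ (also from Lemma \ref{le:rkcts}), this yields that $\rho_k^\sRB$ is strictly increasing.

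For $\rho_k^\sBB$ I would differentiate the explicit expression \eqref{eq:bbksp} directly, obtaining
\[
\frac{\partial\rho_k^\sBB}{\partial\br_{kk}}
=\frac{2}{\sqrt{\pi}}\,\frac{(M-1)p}{M}\,e^{-\br_{kk}^2}
+\frac{1}{\sqrt{\pi}}\Big(1-\tfrac{2M-1}{2M}p\Big)\,e^{-\br_{kk}^2/4}.
\]
The first term is nonnegative since $M\geq 1$ and $p>0$. The only point needing a short argument is the sign of the coefficient of the second term: from the standing assumption $p\leq 2M/(2M+1)$ I get $\frac{2M-1}{2M}p\leq\frac{2M-1}{2M+1}<1$, so $1-\frac{2M-1}{2M}p>0$ and the second term is strictly positive. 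Thus $\partial\rho_k^\sBB/\partial\br_{kk}>0$, and $\rho_k^\sBB$ is strictly increasing.

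Finally, to pass from the factors to the products, I would use that each factor is strictly positive: by Corollary \ref{cor:rbarlimit} the limits of $\rho_k^\sRB$ and $\rho_k^\sBB$ as $r_{kk}\to 0^+$ are $1-p>0$ and $p/(2M)>0$ respectively, and since each factor is strictly increasing it stays above its left limit, hence positive on $(0,\infty)$. Because $\rho_j^\sRB$ for $j\neq k$ does not depend on $r_{kk}$, increasing a single $r_{kk}$ strictly increases the corresponding factor while leaving the other (positive) factors unchanged, so $P^\sRB(\R)=\prod_{k}\rho_k^\sRB$ strictly increases in that $r_{kk}$; the identical argument applies to $P^\sBB(\R)$. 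The only genuinely delicate point in the whole argument is the behaviour of $\rho_k^\sRB$ at the breakpoints $\gamma^{(i)}$, where it is only piecewise defined; but this is exactly the continuity and differentiability that Lemma \ref{le:rkcts} already guarantees, so no additional work is required there, and the proof reduces to the two sign checks above.
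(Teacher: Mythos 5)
Your proof is correct and follows essentially the same route as the paper's: positivity of the derivative \eqref{eq:drk} from Lemma \ref{le:rkcts} for $\rho_k^\sRB$, direct differentiation of \eqref{eq:bbksp} for $\rho_k^\sBB$, and the product structure of \eqref{eq:rbsp} and \eqref{eq:bbsp} for the final claim. You merely spell out the sign checks and the positivity of the individual factors that the paper's proof leaves as ``easy to see'' or ``obvious.''
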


\begin{proof}
It is easy to see from \eqref{eq:bbksp} that $\rho_k^\sBB$ is strictly increasing for $\br_{kk} \in (0,\infty)$, or for $r_{kk} \in (0,\infty)$.

Now we consider $\rho_k^\sRB$.
If $\lambda^*=0$, it becomes $\rho_k^\sBB$.
Thus, in the following  we assume that $\lambda^*>0$. 
As $1\leq j_k \leq M$, 
it is easy to see from \eqref{eq:drk} in Lemma \ref{le:rkcts} that the derivative $\partial \rho_k^\sRB / \partial \br_{kk} > 0$ for any $\br_{kk} > 0$. Therefore $\rho_k^\sRB$ is strictly increasing for $\br_{kk} \in (0,\infty)$ or for $r_{kk} \in (0,\infty)$. 

The second conclusion is obvious from \eqref{eq:rbsp} and \eqref{eq:bbsp} in Theorem \ref{th:babaisp}.
\end{proof}

\begin{remark}
If $\lambda$ is an arbitrary positive number, $\rho_k^\sRB$ may not be increasing 
for $\br_{kk} \in (0,\infty)$, 
and as a result the monotonicity property of $P^\sRB(\R)$ does not hold.
It is easy to find an example to illustrate it.
\end{remark}

Theorem \ref{th:relation} shows that when $\lambda^*$ is used as the regularization parameter,
the SP of $\x^\sRB$ is higher than the SP of $\x^\sBB$,
and the two are equal when $p=2M/(2M+1)$.
The following theorem shows how their ratio changes 
when $p$ in \eqref{eq:zeropr} changes from $0^+$ to $2M/(2M+1)$.

\begin{theorem}\label{th:spratiomono}
Let the regularization parameter $\lambda$ used by $\x^\sRB$ be chosen as $\lambda^*$ defined by  \eqref{eq:lambda}.
The ratio $\Pr(\x^\sRB=\x^\ast)/\Pr(\x^\sBB=\x^\ast)$ is strictly decreasing for $p \in (0, 2M/(2M+1))$. 
\end{theorem}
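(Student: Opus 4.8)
The plan is to pass to the product structure and reduce everything to a single factor. Since $P^\sRB(\R)/P^\sBB(\R)=\prod_{k=1}^{n}\rho_k^\sRB/\rho_k^\sBB$ (from \eqref{eq:rbsp} and \eqref{eq:bbsp}) and each $\rho_k^\sRB,\rho_k^\sBB$ is a strictly positive probability, it suffices to show that every factor $g_k(p):=\rho_k^\sRB/\rho_k^\sBB$ is strictly decreasing on $(0,2M/(2M+1))$, because a product of positive strictly decreasing functions is strictly decreasing. So I fix $k$ and a value of $\br_{kk}$ and study $g_k$ as a function of $p$, keeping in mind that with $\lambda=\lambda^*$ the quantity $\blam=\lambda^*/\sigma^2=\ln\frac{1-p}{p/(2M)}$ depends on $p$ and is strictly decreasing in it.

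The first key step is an envelope argument. Write $\rho_k^\sRB=p\,a_k+(1-p)\,b_k$ with $a_k=\frac{1}{2M}+\frac{M-j_k}{M}\erf(\br_{kk})-\frac{1}{2M}\erf(\alpha_k^{(j_k)}\br_{kk})$ and $b_k=\erf((2j_k-1+\alpha_k^{(j_k)})\br_{kk})$ (see \eqref{eq:rbksp}), where $j_k$ and $\alpha_k^{(j_k)}$ depend on $p$ only through $\blam$. The chain rule gives $\frac{d\rho_k^\sRB}{dp}=\frac{\partial\rho_k^\sRB}{\partial p}+\frac{\partial\rho_k^\sRB}{\partial\blam}\frac{d\blam}{dp}$, and by Theorem \ref{th:relation} (equivalently \eqref{eq:rbkd}–\eqref{eq:optimalblam}) the value $\blam=\blam^*$ is exactly the stationary point of $\rho_k^\sRB$ in $\blam$, so $\partial\rho_k^\sRB/\partial\blam=0$ there and the second term vanishes. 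Hence $\frac{d\rho_k^\sRB}{dp}=a_k-b_k$. Likewise $\rho_k^\sBB=p\,a^\sBB+(1-p)\,b^\sBB$ with $a^\sBB=\frac{1}{2M}+\frac{M-1}{M}\erf(\br_{kk})+\frac{1}{2M}\erf(\tfrac{1}{2}\br_{kk})$ and $b^\sBB=\erf(\tfrac{1}{2}\br_{kk})$, both independent of $p$, so $\frac{d\rho_k^\sBB}{dp}=a^\sBB-b^\sBB$.

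Next I would substitute these derivatives into the quotient rule. The numerator of $g_k'(p)$ is $(a_k-b_k)\rho_k^\sBB-\rho_k^\sRB(a^\sBB-b^\sBB)$, and expanding $\rho_k^\sBB=b^\sBB+p(a^\sBB-b^\sBB)$ and $\rho_k^\sRB=b_k+p(a_k-b_k)$, every term carrying the factor $p$ cancels, leaving the clean expression $a_k\,b^\sBB-a^\sBB\,b_k=a_k\,\erf(\tfrac{1}{2}\br_{kk})-a^\sBB\,b_k$. Thus $g_k'(p)<0$ is equivalent to $a_k/b_k<a^\sBB/b^\sBB$. Since at $\lambda^*=0$ (i.e.\ $p=2M/(2M+1)$) one has $j_k=1$, $\alpha_k^{(1)}=-1/2$, $b_k=\erf(\tfrac{1}{2}\br_{kk})=b^\sBB$ and $a_k=a^\sBB$, it remains only to prove that $h(\blam):=a_k/b_k$ is strictly decreasing in $\blam$ on $[0,\infty)$; then $\blam^*>0$ gives $h(\blam^*)<h(0)=a^\sBB/b^\sBB$, which is exactly $g_k'(p)<0$ (note $\blam^*$ decreasing in $p$ matches $g_k$ decreasing in $p$).

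The final step establishes the monotonicity of $h$. Setting $s=(2j_k-1+\alpha_k^{(j_k)})\br_{kk}$ and $t=\alpha_k^{(j_k)}\br_{kk}$, so $b_k=\erf(s)>0$ and $a_k=\frac{1}{2M}(1-\erf(t))+\frac{M-j_k}{M}\erf(\br_{kk})>0$, one checks that on each $\blam$-interval where $j_k$ is constant both $s$ and $t$ are strictly increasing in $\blam$; hence $\erf(s)$ increases while the numerator $a_k$ decreases, so $h$ is strictly decreasing there. The only remaining issue is the finitely many thresholds $\blam^{(i)}=\br_{kk}^2(2i-1)(2i+1)$ (see \eqref{eq:lbi2}) where $j_k$ jumps from $i$ to $i+1$; here I would invoke the continuity of $\rho_k^\sRB$ from Lemma \ref{le:rlpcts} together with a direct computation showing that $s$ and $a_k$ are \emph{individually} continuous across each threshold: at $\blam=\blam^{(i)}$ one gets $s=2i\br_{kk}$ from both sides, while $t$ jumps from $\br_{kk}$ to $-\br_{kk}$, yet the simultaneous change of $j_k$ shifts the coefficient of $\erf(\br_{kk})$ by exactly the amount needed to absorb the jump of the $-\frac{1}{2M}\erf(t)$ term, so $a_k$ stays continuous. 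A continuous function that is strictly decreasing on each subinterval is strictly decreasing overall, which finishes the per-factor claim and hence the theorem (with the finitely many special $p$ handled by continuity of $g_k$). I expect the main obstacle to be precisely this last compensation—showing the pieces $a_k$ and $b_k$, not merely their combination $\rho_k^\sRB$, are continuous across the $j_k$-thresholds—together with making the envelope step airtight so that $d\rho_k^\sRB/dp$ may legitimately be replaced by the explicit partial $a_k-b_k$.
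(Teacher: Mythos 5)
Your proof is correct, and it takes a genuinely different route from the paper's. The paper writes $\rho_k^\sRB/\rho_k^\sBB = (\rho_k^\sRB-\rho_k^\sBB)/\rho_k^\sBB + 1$ and separately shows that the difference $\rho_k^\sRB-\rho_k^\sBB$ is strictly decreasing while $\rho_k^\sBB$ is increasing in $p$ (a decomposition that implicitly needs $\rho_k^\sRB \geq \rho_k^\sBB$ from Theorem \ref{th:relation} to conclude); its key step is computing the full derivative \eqref{eq:drbsp}, observing that the two exponential terms cancel because $\blam=\ln\frac{1-p}{p/(2M)}$, and then bounding what remains by the convex-combination inequality \eqref{eq:dspineq}. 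You instead apply the quotient rule directly and exploit the affine structure $\rho_k^\sRB = p\,a_k+(1-p)\,b_k$: your envelope step (the vanishing of $\partial\rho_k^\sRB/\partial\blam$ at $\blam^*$, which is exactly the stationarity established in Theorem \ref{th:relation} and is the same cancellation the paper performs by hand) gives $d\rho_k^\sRB/dp=a_k-b_k$, and the quotient-rule numerator collapses to $a_k b^\sBB - a^\sBB b_k$. Reducing its sign to the monotonicity of $a_k/b_k$ in $\blam$ — checked per interval via $s,t$ strictly increasing, and patched across the thresholds $\blam^{(i)}$ by the continuity of $a_k$ and $b_k$ individually (your compensation computation is right: $t$ jumps from $\gamma$ to $-\gamma$ while the coefficient of $\erf(\gamma)$ drops by exactly $1/M$) — is a clean alternative to the paper's endgame, and it buys a conceptual statement, $h(\blam^*)<h(0)$, in place of a term-by-term estimate. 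One point worth tightening: at the finitely many $p$ where $\blam^*(p)$ coincides with a threshold $\blam^{(i)}$, the stationarity equation \eqref{eq:lstationary1} is independent of $j_k$, so both one-sided $\blam$-derivatives vanish and the envelope identity actually holds from both sides there; your fallback to continuity of $g_k$ also suffices.
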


\begin{proof}
From Theorem \ref{th:babaisp}, we have
\begin{equation*}
\frac{\Pr(\x^\sRB=\x^\ast)}{\Pr(\x^\sBB=\x^\ast)}
= \prod_{k=1}^n \frac{\rho^\sRB_k}{\rho^\sBB_k}.
\end{equation*}
Therefore it is sufficient to prove that for each $k=1,2,\ldots,n$, the ratio
$\rho^\sRB_k/\rho^\sBB_k$ is strictly decreasing for $p \in (0, 2M/(2M+1))$. 
Since 
$\rho^\sRB_k/\rho^\sBB_k
=(\rho^\sRB_k-\rho^\sBB_k)/\rho^\sBB_k+1$,
it suffices to show that $\rho^\sRB_k-\rho^\sBB_k$ is 
strictly decreasing and  $\rho^\sBB_k$ is increasing 
for $p \in (0, 2M/(2M+1))$.

To simplify notation, we use $\gamma$ to replace $\br_{kk}$
in the expressions for $\rho^\sRB_k$ and  $\rho^\sBB_k$.
From \eqref{eq:bbksp}, $\rho^\sBB_k$ is continuous with respect to $p$, and we obtain
\al{
\frac{\partial \rho^\sBB_k}{\partial p}
&= \frac{1}{2M} + \frac{M-1}{M} \erf(\gamma) - \frac{2M-1}{2M}\erf\left(\frac{1}{2}\gamma\right) \label{eq:dbbsp} \\
& = \frac{1}{2M}(1-\erf(\gamma)) + \frac{2M-1}{2M}\left(\erf(\gamma)-\erf\left(\frac{1}{2}\gamma\right)\right) \notag \\
& >0. \notag
}
Thus $\rho^\sBB_k$ is strictly increasing for $p \in (0, 2M/(2M+1))$.

The remaining part is to show that 
$\rho^\sRB_k-\rho^\sBB_k$ is 
strictly decreasing for $p \in (0, 2M/(2M+1)]$.
In the proof we will use the following relation between $\bar{\lambda}:=\lambda^*/\sigma^2$
and $p$ a few times:
$\blam=\ln \frac{1-p}{p/(2M)}$ (see \eqref{eq:lambda}).

As in the proofs for earlier results, 
we consider a general $M$ here, but the proof
can be applied to the special case that $M=1$ by some simplification.

From \eqref{eq:j0exp},
we see that $j_k$  either remains the same or is decreasing when $p$ 
is increasing from $0^+$ to $2M/(2M+1)$.
Let $p^{(i)}$ be as in \eqref{eq:piexp}.
When  $p \in (p^{(i)}, p^{(i-1)}]$ or equivalently 
$\blam \in [\lambda^{(i-1)},\lambda^{(i)})$,
$j_k=i$ (see \eqref{eq:jkpw1}).
Since $\rho^\sRB_k$ is continuous with respect to $p$ by Lemma \ref{le:rlpcts},
it is easy to prove that $\rho^\sRB_k-\rho^\sBB_k$ is continuous with respect to $p$.
For any interval $(p^{(i)},p^{(i-1)}]$, 
in the following we would like to show that $\rho^\sRB_k-\rho^\sBB_k$ is decreasing
in the interval.

To simplify notation, let $\eta:=(2j_k-1)\gamma$.
Suppose that $p\in (p^{(i)},p^{(i-1)}]$, 
or equivalently $\blam  \in [\blam^{(i-1)},\blam^{(i)})$.
From \eqref{eq:rbksp}, \eqref{eq:ef} and \eqref{eq:lambda},
\al{
\hspace*{-2mm} 
\frac{\partial \rho^\sRB_k}{\partial p}
 =& \frac{1}{2M} + \frac{M-j_k}{M} \erf(\gamma) 
 -\frac{1}{2M}\erf\left(\frac{{\bar\lambda}}{2\eta}-\frac{1}{2}\eta\right)  
  + \frac{1}{2\sqrt{\pi}M(1-p)\eta}
e^{-\left(\frac{{\bar\lambda}}{2\eta}-\frac{1}{2}\eta\right)^2}
\notag\\
 & - \erf\left(\frac{{\bar\lambda}}{2\eta} + \frac{1}{2}\eta\right) -  \frac{1}{\sqrt{\pi}p \eta}
e^{-\left(\frac{{\bar\lambda}}{2\eta} + \frac{1}{2}\eta\right)^2}. \label{eq:drbsp}
}
Then from \eqref{eq:dbbsp} and \eqref{eq:drbsp} we obtain
\al{
& \frac{\partial (\rho^\sRB_k- \rho^\sBB_k)}{\partial p}  \notag \\
\hspace*{-3mm} =\  & \frac{1-j_p}{M}\erf(\gamma) +\frac{2M-1}{2M}\erf\left(\frac{1}{2}\gamma\right)  
 + \frac{1}{2M}\erf\left(\frac{1}{2}\eta - \frac{{\bar\lambda}}{2\eta} \right)  - \erf\left(\frac{1}{2}\eta + \frac{{\bar\lambda}}{2\eta} \right) \notag \\
& + \frac{1}{2\sqrt{\pi}M(1-p) \eta} e^{-\left(\frac{1}{2}\eta - \frac{\bar\lambda}{2\eta}\right)^2}  
-\frac{1}{\sqrt{\pi}p \eta} e^{-\left(\frac{1}{2}\eta + \frac{{\bar\lambda}}{2\eta} \right)^2}.
\label{eq:drbbb}
}
Using $\blam=\ln \frac{1-p}{p/(2M)}$, we have 
$$
e^{-(\frac{1}{2}\eta - \frac{\blam}{2\eta})^2} = \frac{1-p}{p/(2M)}\,e^{-(\frac{1}{2}\eta + \frac{\blam}{2\eta})^2}.
$$
Therefore the last line in equation \eqref{eq:drbbb} is zero.
Furthermore, since $j_k\geq 1$, $\frac{1-j_k}{M}\erf(\gamma) \leq 0$. 
Thus, to show $\frac{\partial (\rho^\sRB- \rho^\sBB)}{\partial p} < 0$,
from \eqref{eq:drbbb} we observe that it suffices to show
\al{
& \frac{2M-1}{2M} \erf\left(\frac{1}{2}\gamma\right) + \frac{1}{2M}\erf\left(\frac{1}{2}\eta - \frac{\blam}{2\eta} \right) 
 < \erf\left(\frac{1}{2}\eta + \frac{\blam}{2\eta}\right). \label{eq:dspineq}
}
Since $\eta = (2j_k-1)\gamma$ and $j_k\geq 1, \eta \geq \gamma > 0$. 
Thus we have 
\aln{
  \erf\left( \frac{1}{2}\gamma \right) < \erf\left(\frac{1}{2}\eta + \frac{\blam}{2\eta}\right),  \ \
  \erf\left(\frac{1}{2}\eta - \frac{\blam}{2\eta}\right)  < \erf\left(\frac{1}{2}\eta + \frac{\blam}{2\eta}\right).
}
Note that here the two inequalities are strict because $\blam>0$ under the assumption
that $p<2M/(2M+1)$.
Using these two inequalities, we can easily derive \eqref{eq:dspineq}. This concludes our proof.
\end{proof}

\begin{remark}\label{rm:spratiomono}
As stated in the proof, $\rho_k^\sBB$ is increasing 
for $p\in (0,2M/(2M+1))$, so is $\Pr(\x^\sBB=\x^*)$. 
Although the ratio $\Pr(\x^\sRB=\x^*)/\Pr(\x^\sRB=\x^*)$ is decreasing
for $p\in (0,2M/(2M+1))$
and it is equal to 1 when $p=2M/(2M+1)$,
it is easy to find an example for which $\Pr(\x^\sRB=\x^*)$ is not
always decreasing for $p\in (0,2M/(2M+1))$.
\end{remark}

\begin{corollary}
Under the same condition as Theorem \eqref{th:spratiomono},
we have
\be
1 
\leq 
\frac{\Pr(\x^\sRB=\x^\ast)}{\Pr(\x^\sBB=\x^\ast)}
< \frac{1}{\prod_{k=1}^n \erf\left(\frac{r_{kk}}{2\sqrt{2}\sigma}\right)},
\label{eq:rbbbbounds}
\ee
where the bounds are (nearly) attainable with respect to $p$.  
\end{corollary}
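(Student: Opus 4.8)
The plan is to exploit the strict monotonicity of the ratio in $p$ established in Theorem \ref{th:spratiomono} and reduce the two bounds to the behaviour at the two endpoints of the interval $(0, 2M/(2M+1))$. Writing
\[
\frac{\Pr(\x^\sRB=\x^\ast)}{\Pr(\x^\sBB=\x^\ast)} = \prod_{k=1}^n \frac{\rho_k^\sRB}{\rho_k^\sBB},
\]
Theorem \ref{th:spratiomono} tells us this is strictly decreasing on $(0, 2M/(2M+1))$, so its supremum is the limit as $p\to 0^+$ and its infimum the limit as $p\to(2M/(2M+1))^-$. Strict monotonicity on an open interval means neither extreme is attained at an interior $p$, which already explains the asymmetry ``$1\le\cdots<\cdots$'': the lower bound holds with $\ge$ but is only approached, and the upper bound is a strict $<$.

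For the lower bound I would invoke Theorem \ref{th:relation}: with $\lambda=\lambda^*$ we have $\Pr(\x^\sRB=\x^\ast)\ge\Pr(\x^\sBB=\x^\ast)$, with equality exactly when $\lambda^*=0$, i.e.\ $p=2M/(2M+1)$. Hence the ratio is $>1$ throughout the open interval but decreases to $1$ as $p\to(2M/(2M+1))^-$; this gives the $1\le$ bound and its near-attainability.

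For the upper bound I would evaluate the $p\to 0^+$ limit factor by factor. As $p\to 0^+$ we have $\blam=\lambda^*/\sigma^2=\ln\frac{1-p}{p/(2M)}\to\infty$, so by \eqref{eq:j0exp} eventually $j_k=M$, and then by \eqref{eq:alphakjk} $\alpha_k^{(M)}\to+\infty$. Substituting into \eqref{eq:rbksp}, the three terms carrying a factor $p$ vanish while $(1-p)\erf\big((2M-1+\alpha_k^{(M)})\br_{kk}\big)\to\erf(\infty)=1$, so $\rho_k^\sRB\to 1$. From \eqref{eq:bbksp}, the two $p$-weighted terms vanish and the last term tends to $\erf(\tfrac12\br_{kk})=\erf\!\big(r_{kk}/(2\sqrt{2}\sigma)\big)$, so $\rho_k^\sBB\to\erf\!\big(r_{kk}/(2\sqrt{2}\sigma)\big)$. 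Each factor thus tends to $1/\erf\!\big(r_{kk}/(2\sqrt{2}\sigma)\big)$ and the product to $1/\prod_{k=1}^n\erf\!\big(r_{kk}/(2\sqrt{2}\sigma)\big)$; since the ratio is strictly decreasing, this limiting value is a strict upper bound, nearly attained as $p\to 0^+$.

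I do not expect a serious obstacle: once Theorems \ref{th:relation} and \ref{th:spratiomono} are in hand, the remaining work is the routine evaluation of the two endpoint limits from the explicit formulas \eqref{eq:rbksp}--\eqref{eq:bbksp} together with the limiting behaviour $j_k\to M$, $\alpha_k^{(M)}\to\infty$. The only point needing a little care is the bookkeeping that every $p$-weighted term in $\rho_k^\sRB$ drops out in the limit (in particular the bounded term $-\tfrac{p}{2M}\erf(\alpha_k^{(M)}\br_{kk})$ vanishes through its prefactor $p$ irrespective of $\alpha_k^{(M)}\to\infty$), and the observation that strict monotonicity forces the upper bound to be strict and both bounds to be only limiting --- hence ``nearly attainable'' --- rather than attained at an interior $p$.
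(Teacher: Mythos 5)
Your proposal is correct and follows essentially the same route as the paper: evaluate the $p\to 0^+$ limit of each factor $\rho_k^\sRB/\rho_k^\sBB$ from \eqref{eq:rbksp}--\eqref{eq:bbksp} (using $j_k\to M$, $\alpha_k^{(M)}\to\infty$), note the ratio equals $1$ at $p=2M/(2M+1)$ where $\lambda^*=0$, and conclude by the strict monotonicity of Theorem \ref{th:spratiomono}. The only cosmetic difference is that the paper regards the lower bound as actually attained at the admissible endpoint $p=2M/(2M+1)$ rather than only approached.
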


\begin{proof}

When $p \rightarrow 0^+$, by \eqref{eq:lambda} $\lambda^*\rightarrow \infty$;
then from \eqref{eq:rbksp}  we observe $\rho_k^\sRB \rightarrow 1$,
and from \eqref{eq:bbksp} we observe
$\rho_k^\sBB \rightarrow \erf\left(\br_{kk}/2\right)$.
Thus from  \eqref{eq:rbsp} and  \eqref{eq:bbsp} we obtain
\al{
\lim_{p \rightarrow 0^+} 
\frac{\Pr(\x^\sRB=\x^\ast)}{\Pr(\x^\sBB=\x^\ast)}
= \frac{1}{\prod_{k=1}^n \erf\left(\frac{r_{kk}}{2\sqrt{2}\sigma}\right)}.
 \label{eq:xrblim1}
}
When $p=2M/(2M+1)$, $\lambda^*=0$ and  $\x^\sRB$ becomes  $\x^\sBB$,
leading to $\Pr(\x^\sRB=\x^\ast)=\Pr(\x^\sBB=\x^\ast)$.
Therefore, by Theorem \ref{th:spratiomono} we have \eqref{eq:rbbbbounds},
where the lower bound is attainable
and the upper bound is nearly attainable with respect to $p$.
\end{proof}

\subsection{A bound on the SP of the $L_0$-regularized Babai point}\label{sec:spbound}
In this section, we give a bound on the SP of $\x^\sRB$, which is an upper bound and a lower bound under different conditions respectively. 
The bound is invariant with respect to the column permutations of $\A$
and it will help us to understand the limit on SP that column permutations 
can achieve, as well as to provide us a motivation for the application of LLL-P strategy.
% , which will be discussed in detail in Section \ref{sec:permut}.
The approach we use in our proof can be regarded as an 
extension of that used in \cite{WenC17}, which deals with an un-regularized problem.
Here the proof is more complicated due to the discontinuity issue.

To prepare for the proof of our theorem, we first define our target function and introduce a lemma. 
We are interested in the effect of $\br_{kk}$ on $\rho^\sRB_k$ in \eqref{eq:rbksp}. 
For convenience, we use $\gamma$ to replace $\br_{kk}$
in \eqref{eq:rbksp} and rewrite $\rho_k^\sRB$ as $\rho(\gamma)$:
\al{ 
\rho(\gamma) = \ & \frac{p}{2M} +  \frac{M-j_\gamma}{M} p \, \erf\left(\gamma\right) 
    - \frac{p}{2M} \, \erf\left(\frac{1}{2(2j_\gamma-1)}\frac{\bar\lambda}{\gamma}-\frac{1}{2}(2j_\gamma-1)
 \gamma\right)  \notag  \\
  & + (1-p)\, \erf\left(\frac{1}{2(2j_\gamma-1)}\frac{\bar\lambda}{\gamma}+\frac{1}{2}(2j_\gamma-1)\gamma\right),
\label{eq:rho}
}
where we have abused the notation a little bit by replacing  the original $j_k$ in \eqref{eq:rbksp} by $j_\gamma$ to emphasize its dependence on $\gamma$.

For later uses, we define
\begin{equation}\label{eq:fzeta}
    F(\zeta) := \ln\left( \rho (e^\zeta) \right) .
\end{equation}
It is easy to verify that when $\zeta\neq \ln\gamma^{(i)}$ (see \eqref{eq:gj}), 
\al{
& F'(\zeta) = \frac{\rho'(e^\zeta)}{\rho(e^\zeta)}e^\zeta, \label{eq:dF}\\
& F''(\zeta) = \frac{\rho''(e^\zeta)}{\rho(e^\zeta)}e^{2\zeta} + \frac{\rho'(e^\zeta)}{\rho(e^\zeta)} e^\zeta - \left( \frac{\rho'(e^\zeta)}{\rho(e^\zeta)} e^\zeta \right)^2. \label{eq:ddF}
}

\begin{lemma} \label{le:ddF_properties}
Let the regularization parameter $\lambda= \lambda^*$ defined in \eqref{eq:lambda}.
%$\blam = \lambda/\sigma^2$. 
Let $\gamma^{(i)}$ be defined by \eqref{eq:gj} (or equivalent \eqref{eq:giexp}) and \eqref{eq:g0m}.
Suppose $\rho$ and $F$ are given by  \eqref{eq:rho} and \eqref{eq:fzeta},
respectively. 
Then
\ben
    % \item $\rho'(\gamma)$ is continuous on $(0,+\infty)$.
    \item $F'(\zeta)$ is continuous on $(-\infty,\infty)$. \label{le:F1der}
    \item $F''(\zeta)\to0^+$ as $\zeta\to -\infty$, and $F''(\zeta)\to0^-$ as $\zeta\to+\infty$.
    \label{le:F2derend}
    \item When $M=1$, $F''(\zeta)$ is continuous on $(-\infty,\infty)$. \label{le:M1F2der}
    \item When $M \geq 2$, for each $i = 1,\ldots,M-1$, 
    $F''(\zeta)$ is continuous on the interval $(\ln \gamma^{(i-1)},\ln \gamma^{(i)})$,
    where $\ln \gamma^{(0)}:=-\infty$ and 
    $\ln \gamma^{(M)}:=\infty$, and
$\lim_{\zeta \to \ln\gamma^{(i)-}} F''(\zeta) < \lim_{\zeta \to \ln\gamma^{(i)+}} F''(\zeta)$.
\label{le:M2F2der}
\item 
Let 
\begin{equation}\label{eq:mu12}
\begin{aligned}
\mu_1 = \min \{\mu: F''(\ln\mu) = 0\}, \\
\mu_2 = \max \{\mu: F''(\ln\mu) = 0\},
\end{aligned}
\end{equation}
i.e. the exponential of the smallest and largest roots of $F''$.  
$F'(\zeta)$ is strictly increasing on $(-\infty, \ln\mu_1)$, and strictly decreasing on $(\ln \mu_2, \infty)$. \label{le:dFmono}
\een
\end{lemma}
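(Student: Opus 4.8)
The plan is to exploit the fact, recorded in \eqref{eq:jkpw}, that the index $j_\gamma$ is piecewise constant, equal to $M-i+1$ on each interval $(\gamma^{(i-1)},\gamma^{(i)}]$. On the interior of each such interval $\rho(\gamma)$ is given by the single smooth expression \eqref{eq:rho}, so $\rho,\rho',\rho''$ are all smooth there, and this settles the continuity claims at once. For part \ref{le:F1der} I would use \eqref{eq:dF}, $F'(\zeta)=e^\zeta\rho'(e^\zeta)/\rho(e^\zeta)$: since $\rho$ and $\rho'$ are continuous on $(0,\infty)$ by Lemma \ref{le:rkcts}, and $\rho>1-p>0$ (Theorem \ref{th:rmono} together with Corollary \ref{cor:rbarlimit}), $F'$ is continuous. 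For part \ref{le:M1F2der}, when $M=1$ formula \eqref{eq:j0exp} forces $j_\gamma\equiv1$, so $\rho$ has one smooth expression on all of $(0,\infty)$, $\rho''$ is continuous, and by \eqref{eq:ddF} so is $F''$. The continuity assertion in part \ref{le:M2F2der} is the same statement restricted to each interval on which $j_\gamma$ is constant.

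The heart of the argument is the jump inequality in part \ref{le:M2F2der}. Because $\rho$ and $\rho'$ are continuous across $\gamma^{(i)}$ by Lemma \ref{le:rkcts}, the only term of \eqref{eq:ddF} that can be discontinuous there is $e^{2\zeta}\rho''(e^\zeta)/\rho(e^\zeta)$; hence the jump of $F''$ at $\ln\gamma^{(i)}$ has the same sign as the jump of $\rho''$, and it suffices to show $\rho''$ jumps upward. Writing $w_j(\gamma):=\frac{\blam}{2(2j-1)\gamma}+\tfrac{1}{2}(2j-1)\gamma$ and differentiating \eqref{eq:drkg} once more (with $j$ held at its constant value on each side), I would evaluate the one-sided limits at $\gamma^{(i)}$ using $j^-=M-i+1$ on the left and $j^+=M-i$ on the right. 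The defining relation \eqref{eq:giexp} gives $(2j^--1)(2j^+-1)(\gamma^{(i)})^2=\blam$, so the two exponent values coincide, $w_{j^-}(\gamma^{(i)})=w_{j^+}(\gamma^{(i)})=:w^\ast=2(M-i)\gamma^{(i)}$, while $w_{j^-}'(\gamma^{(i)})=1$ and $w_{j^+}'(\gamma^{(i)})=-1$. A short computation then gives
\[
\rho''(\gamma^{(i)+})-\rho''(\gamma^{(i)-})
=\frac{2}{\sqrt\pi}\left(-\frac{2p\,\gamma^{(i)}}{M}e^{-(\gamma^{(i)})^2}
+\frac{4(1-p)(w^\ast)^2}{\gamma^{(i)}}e^{-(w^\ast)^2}\right).
\]
Using $(w^\ast)^2-(\gamma^{(i)})^2=\blam$ and $e^{\blam}=2M(1-p)/p$ from \eqref{eq:lambda}, the required positivity reduces, after cancellation, to $(w^\ast)^2>(\gamma^{(i)})^2$, i.e.\ to $2(M-i)\geq2>1$, which holds for $i=1,\ldots,M-1$; thus the jump is strictly upward.

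For part \ref{le:F2derend} I would carry out the boundary asymptotics. As $\zeta\to-\infty$, so $\gamma=e^\zeta\to0^+$, we have $j_\gamma=M$, the first summand of $\rho'$ vanishes, $\rho\to1-p$, and $w_M(\gamma)\to+\infty$ with $w_M'(\gamma)\to-\infty$, so that $\rho''>0$; comparing orders, the term $e^{2\zeta}\rho''/\rho$ dominates the other two terms of \eqref{eq:ddF}, and therefore $F''\to0^+$. As $\zeta\to+\infty$, so $\gamma\to\infty$, we have $j_\gamma=1$ and $w_1'(\gamma)\to\tfrac{1}{2}>0$, so $\rho''<0$, the same $\rho''$-term dominates, and $F''\to0^-$. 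Each term of \eqref{eq:ddF} tends to $0$ thanks to the factor $e^{-w^2}$, so the limits are indeed $0$ with the stated signs.

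Finally, for part \ref{le:dFmono}, I would first note that $F''$ has zeros: it is positive near $-\infty$ and negative near $+\infty$ by part \ref{le:F2derend}, while all its jumps are upward by part \ref{le:M2F2der}, so any passage from a positive to a negative value must occur at a point of continuity, where the intermediate value theorem yields a zero; hence $\mu_1,\mu_2$ in \eqref{eq:mu12} are well defined. On $(-\infty,\ln\mu_1)$, if $F''(\zeta_0)<0$ for some $\zeta_0$, then since $F''>0$ near $-\infty$ a positive-to-negative transition occurs before $\zeta_0$, which (jumps being upward) must be a continuous zero lying in $(-\infty,\ln\mu_1)$, contradicting the minimality of $\mu_1$; hence $F''>0$ there, and as $F'$ is continuous (part \ref{le:F1der}) with positive derivative off the finitely many breakpoints, $F'$ is strictly increasing. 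A symmetric argument using the maximality of $\mu_2$ gives $F''<0$ on $(\ln\mu_2,\infty)$ and hence $F'$ strictly decreasing. The main obstacle is the jump computation of part \ref{le:M2F2der}: it is the most calculation-heavy step, and, crucially, its conclusion about the \emph{direction} of the jump is exactly what drives the intermediate-value arguments of part \ref{le:dFmono}.
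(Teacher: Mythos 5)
Your proposal is correct and follows essentially the same route as the paper's proof: the same reduction of each continuity claim to the continuity of $\rho$, $\rho'$, $\rho''$ on the intervals where $j_\gamma$ is constant, the same reduction of the jump of $F''$ at $\ln\gamma^{(i)}$ to an upward jump of $\rho''$ culminating in the inequality $4(M-i)^2>1$, the same boundary asymptotics driven by the decay of $e^{-\theta_\gamma^2}$, and the same sign-propagation argument for part 5. Your packaging of the jump computation via $w_j$ and the one-sided values $w_{j^\mp}'(\gamma^{(i)})=\pm 1$ is a slightly tidier bookkeeping of the identical algebra, not a different method.
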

% \red{I have made some changes. Could you please update the proof?}\blue{Done.}

\begin{proof}
See Appendix \ref{app:le:ddF_properties}. 
\end{proof}

\begin{theorem}\label{th:mingapsp}
Let the regularization parameter $\lambda = \lambda^\ast$ defined by \eqref{eq:lambda},
and let $\mu_1, \mu_2$ be defined by \eqref{eq:mu12}. 
Denote 
\aln{
& \omega = \Big(\prod_{k=1}^n r_{kk} \Big)^{1/n}, \ \ 
\bomega = \frac{\omega}{\sqrt{2}\sigma}.
}
% Let 
% \begin{equation}\label{eq:mu12}
% \begin{aligned}
% \mu_1 = \min \{\mu: F''(\ln\mu) = 0\}, \\
% \mu_2 = \max \{\mu: F''(\ln\mu) = 0\}.
% \end{aligned}
% \end{equation}
% i.e. the exponential of the smallest and largest roots of $F''$. 
\begin{enumerate}
    \item \label{th:lb} If $\max_{1\leq k \leq n} r_{kk} \leq \sqrt{2}\sigma\mu_1$, then 
    \be 
    \Pr(\x^\sRB=\x^\ast) \geq \left( \rho (\bomega) \right)^n.
    \label{eq:lb}
    \ee
    \item \label{th:ub}
    If $\min_{1\leq k \leq n} r_{kk} \geq \sqrt{2}\sigma\mu_2$, then \be 
    \Pr(\x^\sRB=\x^\ast) \leq \left( \rho (\bomega) \right)^n.
    \label{eq:ub}
    \ee
\end{enumerate}
In \eqref{eq:lb} and \eqref{eq:ub} the equality holds if and only if $r_{kk}=\omega$ for all $k=1,2,\ldots,n$.
\end{theorem}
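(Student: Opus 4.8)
The plan is to reduce both inequalities to a single application of Jensen's inequality after a logarithmic change of variables, using the convexity/concavity information packaged in Lemma~\ref{le:ddF_properties}. Write $\zeta_k := \ln\br_{kk} = \ln\big(r_{kk}/(\sqrt{2}\sigma)\big)$ for $k=1,\ldots,n$. By the product formula \eqref{eq:rbsp} and the definition \eqref{eq:fzeta} of $F$, taking logarithms gives
\[
\ln\Pr(\x^\sRB=\x^\ast)=\sum_{k=1}^n \ln\rho(\br_{kk})=\sum_{k=1}^n F(\zeta_k).
\]
The key observation is that $\bomega$ is exactly the geometric mean of the $\br_{kk}$: since $\bomega=\omega/(\sqrt{2}\sigma)=\big(\prod_{k}\br_{kk}\big)^{1/n}$, we have $\ln\bomega=\frac1n\sum_{k}\zeta_k$, the arithmetic mean of the $\zeta_k$. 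Hence proving \eqref{eq:lb} and \eqref{eq:ub} amounts to comparing $\frac1n\sum_k F(\zeta_k)$ with $F\big(\frac1n\sum_k\zeta_k\big)$, which is precisely the setting of Jensen's inequality.

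For the lower bound, I would translate the hypothesis $\max_k r_{kk}\le \sqrt2\sigma\mu_1$ into $\zeta_k\le\ln\mu_1$ for every $k$, so that all $\zeta_k$ lie in $(-\infty,\ln\mu_1]$. By part~\ref{le:dFmono} of Lemma~\ref{le:ddF_properties}, $F'$ is strictly increasing on $(-\infty,\ln\mu_1)$; combined with the continuity of $F'$ from part~\ref{le:F1der}, a mean-value comparison of secant slopes shows $F$ is strictly convex on the closed interval $(-\infty,\ln\mu_1]$. Jensen's inequality for convex functions then yields $\frac1n\sum_k F(\zeta_k)\ge F\big(\frac1n\sum_k\zeta_k\big)=F(\ln\bomega)=\ln\rho(\bomega)$, and exponentiating gives \eqref{eq:lb}. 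The upper bound is symmetric: $\min_k r_{kk}\ge\sqrt2\sigma\mu_2$ forces every $\zeta_k\ge\ln\mu_2$, and part~\ref{le:dFmono} makes $F'$ strictly decreasing, hence $F$ strictly concave, on $[\ln\mu_2,\infty)$, so Jensen for concave functions reverses the inequality. In both cases the arithmetic mean $\frac1n\sum_k\zeta_k$ automatically lands in the same interval as the individual $\zeta_k$ (being a convex combination of them), so $\rho(\bomega)$ is evaluated precisely in the region where the relevant convexity holds.

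The equality characterization will follow from the strict convexity/concavity established above: Jensen's inequality is an equality if and only if all the $\zeta_k$ coincide, i.e.\ all $\br_{kk}$ are equal, equivalently all $r_{kk}$ equal their geometric mean $\omega$. The main obstacle I anticipate is not the Jensen step itself but justifying strictness up to and including the endpoints $\ln\mu_1$ and $\ln\mu_2$, where $F''$ vanishes by the definition \eqref{eq:mu12}, so that the equality condition is correctly captured even when some $\zeta_k$ sits exactly on the boundary. This is handled by the continuity of $F'$ together with a mean-value-theorem argument comparing slopes on either side of the evaluation point, rather than by $F''$ directly, since $F''$ is only piecewise continuous for $M\ge 2$ by part~\ref{le:M2F2der}.
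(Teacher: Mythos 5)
Your proposal is correct and follows essentially the same route as the paper's proof: the same logarithmic change of variables $\tau_k=\ln\big(r_{kk}/(\sqrt{2}\sigma)\big)$, the same identification of $\ln\bomega$ as the arithmetic mean of these quantities, and the same appeal to Jensen's inequality with strict convexity/concavity of $F$ obtained from Lemma \ref{le:ddF_properties}-\ref{le:dFmono}. The only (minor) difference is that you explicitly handle the endpoint behaviour at $\ln\mu_1$ and $\ln\mu_2$ via continuity of $F'$ and a secant-slope argument, a detail the paper leaves implicit.
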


\begin{proof} 
By Theorem \ref{th:babaisp} it is easy to see that when $r_{kk} = \omega$ for all $k=1,2,\ldots,n$, 
\eqref{eq:lb} and \eqref{eq:ub} hold with equality. 
Therefore we only need to show that \eqref{eq:lb} and \eqref{eq:ub} hold with 
strict inequality when there exists $j\neq k$ such that $r_{jj}\neq r_{kk}$, 
under the corresponding conditions, respectively. 

Let $\tau_k := \ln(r_{kk}/(\sqrt{2}\sigma))$ for $k=1,2,\ldots,n$ 
and $\tau := \frac{1}{n} \sum_{k=1}^n \tau_k$. 
Then,  by  Theorem \ref{th:babaisp}, to prove our first claim,
it suffices to prove that 
when $\max_{1\leq k \leq n} r_{kk} \leq \sqrt{2}\sigma\mu_1$,
\begin{equation*}
    \frac{1}{n} \sum_{k=1}^n F(\tau_k) > F(\tau),
\end{equation*}
and to prove our second claim, it suffices to prove when $\min_{1\leq k \leq n} r_{kk} \geq \sqrt{2}\sigma\mu_2$, 
\begin{equation*}
    \frac{1}{n} \sum_{k=1}^n F(\tau_k) < F(\tau).
\end{equation*}

Since the condition $\max_{1\leq k \leq n} r_{kk} \leq \sqrt{2}\sigma\mu_1$
is equivalent to $\max_{1\leq k \leq n} \tau_k \leq \ln \mu_1$
and the condition $\min_{1\leq k \leq n} r_{kk} \geq \sqrt{2}\sigma\mu_2$
is equivalent to $\min_{1\leq k \leq n} \tau_k \geq \ln \mu_2$,
we only require that $F(\zeta)$ is strictly convex on the domain $(-\infty, \ln\mu_1)$ 
and strictly concave on the domain $(\ln\mu_2, \infty)$, 
which can be easily obtained from Lemma \ref{le:ddF_properties}-\ref{le:dFmono}. 
This concludes our proof.  
\end{proof}

\begin{remark}
In practice, we can obtain $\ln \mu_1$ and $\ln \mu_2$ 
using a numerical method, e.g., Newton's method,
for solving a nonlinear equation. 
Although $F''$ is not globally differentiable 
as it is not continuous at $\ln \gamma^{(i)}, i=1,\ldots,M-1$, 
it is piecewise differentiable, 
and we can apply a numerical method on each interval to find the roots.
To avoid a possible numerical problem caused by the discontinuities 
we suggest first plotting $F''(\zeta)$ and then choosing two initial points
near to the two roots $\ln \mu_1$ and $\ln \mu_2$, respectively,
for the corresponding iterations. 
Although theoretically the values $\mu_1$ and $\mu_2$ can be anywhere on $(0,\infty)$, in practice they usually do not have a large magnitude. 
Some examples are presented in Table \ref{t:ddfrootsnum}.
We used the MATLAB built-in function \verb!fzero! to find the two roots.
Note that from \eqref{eq:rho} and \eqref{eq:j0exp}, 
we can observe that $F''(\zeta)$ depends on only $\blam^*=\lambda^*/\sigma^2$,
which is determined by $M$ and $p$ (see \eqref{eq:lambda}).
\begin{table}[h!]
\caption{Numerical values of $\mu_1$ and $\mu_2$ under various settings.}\label{t:ddfrootsnum}
\centering
\begin{tabular}{|c|c|c|c|}
\hline
$M$ & $p$ & $\mu_1$   & $\mu_2$   \\ \hline
2   & 0.1 & 0.7182 & 2.0972  \\ \hline
2   & 0.7 & 0.3865 & 0.6913  \\ \hline
4   & 0.1 & 0.3304 & 2.2065  \\ \hline
4   & 0.7 & 0.4637 & 0.6775 \\ \hline
32  & 0.1 & 0.4959 & 0.7072 \\ \hline
32  & 0.7 & 0.4699 & 0.5491 \\ \hline
\end{tabular}
\end{table}
\end{remark}

\begin{remark}
In Theorem \ref{th:mingapsp} the quantity $\omega$ is invariant with respect to column permutations in the QR factorization of $\A$ (see \eqref{eq:pqr}).
In fact, suppose we find the $L_0$-regularized Babai point based on \eqref{eq:l0rbils-p}.
Then the corresponding 
$$
\omega = \Big(\prod_{k=1}^n \hr_{kk} \Big)^{1/n} 
=\det(\hbR)^{1/n} = \det(\A^\top\A)^{1/(2n)},
$$
which is independent of the permutation matrix $\P$.

When $\sigma$ is large, it is likely the condition in 
Theorem \ref{th:mingapsp}-\ref{th:lb} holds, then we have the lower bound.
When $\sigma$ is small, it is likely the condition in Theorem
\ref{th:mingapsp}-\ref{th:ub} holds, then we we have the upper bound.
In applications with small $\sigma$, 
we would like to find good column permutations 
so that the SP of the $L_0$-regularized Babai point is as close to 
the upper bound as possible.
We will investigate column permutations in the next two sections.
\end{remark}

\section{Effects of LLL-P on SP of $\x^\sRB$}
\label{sec:permut}

Theorem \ref{th:mingapsp} in Section \ref{sec:spbound} 
shows that if we can find a permutation matrix $\P$ 
such that all diagonal elements $\hr_{kk}$, $k=1:n$ 
of $\hbR$ in \eqref{eq:pqr} are equal then 
the bound on the SP of $\x^\sRB$ can be reached. 
Such a permutation matrix $\P$ may not exist for a given $\A$,
e.g., the columns of $\A$ are orthogonal to each other, giving
a diagonal $\R$.
But if a column permutation strategy (e.g., LLL-P) 
can decrease the gap 
$\max_{1\leq k \leq n}r_{kk} - \min_{1\leq k \leq n}r_{kk}$,
we can imagine that it will make the SP of $\x^\sRB$ closer to the bound.
In this section, we show how LLL-P increases or decreases the SP of 
regularized Babai point (i.e., the SP is closer to the bound)
under different conditions.
The results to be given can be regarded as extensions of the 
corresponding work in \cite{WenC17}, which deals with an unregularized problem.

In the following we introduce a lemma, which will be used 
for a  theoretical analysis later.
To prepare for it, as in Section \ref{sec:spbound}, 
we replace $\br_{kk}$ by $\gamma$
and rewrite $\rho^\sRB_k$ in \eqref{eq:bbksp}
as $\rho(\gamma)$ (see \eqref{eq:rho}).
For a given $\beta>0$, we define 
\al{
\Phi(\gamma,\beta) := \rho(\gamma)\cdot \rho(\beta/\gamma). \label{eq:Phi}
}

\begin{lemma}\label{le:phimono}
Let the regularization parameter $\lambda = \lambda^\ast$ (see \eqref{eq:lambda}).
Let $\mu_1$ and $\mu_2$ be defined as in \eqref{eq:mu12}. 
% with $F$ defined by $\eqref{eq:fzeta}$.
Then, for a given $\beta>0$,
when either $\frac{\beta}{\mu_1} < \gamma < \sqrt{\beta}$ or $\sqrt{\beta} < \gamma < \frac{\beta}{\mu_2}$, $\Phi(\gamma,\beta)$ is a strictly decreasing function of $\gamma$.
\end{lemma}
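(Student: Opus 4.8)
The plan is to pass to logarithmic coordinates, so that the product structure of $\Phi$ turns into a reflection symmetry about a single point, after which the endpoint monotonicity of $F'$ recorded in Lemma \ref{le:ddF_properties} does all the work. Concretely, writing $\gamma = e^\zeta$ and $\beta = e^{b}$, we have $\beta/\gamma = e^{\,b-\zeta}$, so by the definition \eqref{eq:fzeta} of $F$,
\[
\ln \Phi(\gamma,\beta) = \ln\rho(e^\zeta) + \ln\rho(e^{\,b-\zeta}) = F(\zeta) + F(b-\zeta) =: G(\zeta).
\]
Since $\gamma \mapsto \zeta = \ln\gamma$ is strictly increasing and $\ln$ is strictly increasing, $\Phi(\cdot,\beta)$ is strictly decreasing in $\gamma$ on an interval if and only if $G$ is strictly decreasing in $\zeta$ on the corresponding image interval. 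Thus the whole claim reduces to showing $G'(\zeta) < 0$ on the two $\zeta$-ranges matching the two hypotheses on $\gamma$.

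First I would record that, by Lemma \ref{le:ddF_properties}-\ref{le:F1der}, $F'$ is continuous on all of $\Rbb$, so $G$ is differentiable with
\[
G'(\zeta) = F'(\zeta) - F'(b-\zeta),
\]
and proving $G'(\zeta) < 0$ amounts to proving the single inequality $F'(\zeta) < F'(b-\zeta)$. Setting $\xi := b-\zeta$, the reflection of $\zeta$ about the midpoint $b/2$ (note that $\gamma = \sqrt{\beta}$ corresponds to $\zeta = b/2$, the fixed point of this reflection), the task is to compare $F'$ at a point and at its mirror image.

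Next I would translate each hypothesis on $\gamma$ into the relative positions of $\zeta$ and $\xi$. Taking logarithms, the first range $\beta/\mu_1 < \gamma < \sqrt{\beta}$ becomes $b - \ln\mu_1 < \zeta < b/2$, which forces $\zeta < b/2 < \xi$ together with $\xi = b-\zeta < \ln\mu_1$; hence $\zeta < \xi < \ln\mu_1$, so both points lie in $(-\infty,\ln\mu_1)$ with $\zeta < \xi$. Since $F'$ is strictly increasing on $(-\infty,\ln\mu_1)$ by Lemma \ref{le:ddF_properties}-\ref{le:dFmono}, we obtain $F'(\zeta) < F'(\xi) = F'(b-\zeta)$, i.e. $G'(\zeta) < 0$. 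Symmetrically, the second range $\sqrt{\beta} < \gamma < \beta/\mu_2$ becomes $b/2 < \zeta < b - \ln\mu_2$, forcing $\ln\mu_2 < \xi < \zeta$ with both points in $(\ln\mu_2,\infty)$; since $F'$ is strictly decreasing there by Lemma \ref{le:ddF_properties}-\ref{le:dFmono}, we again get $F'(\zeta) < F'(\xi)$, hence $G'(\zeta) < 0$. In either case $G$ is strictly decreasing on the relevant interval, which gives the stated monotonicity of $\Phi$.

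The only real obstacle is bookkeeping in the reflection step: keeping straight that the first condition places both $\zeta$ and its mirror image strictly to the left of $\ln\mu_1$ (where $F'$ climbs), while the second places both strictly to the right of $\ln\mu_2$ (where $F'$ falls), and checking that strictness at the open endpoints ($\zeta \neq b/2$ and the $\mu_i$-endpoints) is preserved so that the inequalities stay strict. No new analytic input is needed beyond Lemma \ref{le:ddF_properties}; in particular I would not use $F''$ directly, only the endpoint monotonicity of $F'$ in part \ref{le:dFmono}, which conveniently sidesteps the discontinuities of $F''$ at the points $\ln\gamma^{(i)}$.
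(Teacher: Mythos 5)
Your proof is correct and is essentially the paper's own argument: both reduce the claim to the single inequality $F'(\ln\gamma) < F'(\ln(\beta/\gamma))$ and settle it with the endpoint monotonicity of $F'$ from Lemma \ref{le:ddF_properties}-\ref{le:dFmono}. The only cosmetic difference is that you differentiate $\ln\Phi$ in logarithmic coordinates, whereas the paper differentiates $\Phi$ directly and discards the positive prefactor $\frac{1}{\gamma}\rho(\gamma)\rho(\beta/\gamma)$.
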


\begin{proof} See Appendix \ref{app:le:phimono}.
\end{proof}

\begin{lemma}\label{le:lllbound1}
Let the regularization parameter $\lambda = \lambda^\ast$ (see \eqref{eq:lambda}). 
Suppose that \eqref{eq:lm} is transformed into \eqref{eq:rlm} using the QR factorization \eqref{eq:qr}, and $\delta\, r_{k-1,k-1}^2 > r_{k-1,k}^2 + r_{kk}^2$. Then we perform 
permutation of columns $k-1$ and $k$ of $\R$ and triangularization as in \eqref{eq:llltri}, obtaining \eqref{eq:permuted-lm}. Suppose also that $\mu_1, \mu_2$ are defined by \eqref{eq:mu12}.  
\begin{enumerate}
\item If $r_{k-1,k-1} \leq \sqrt{2}\sigma \mu_1$, then after the permutation,
\begin{equation}\label{eq:lll-prob-ineq1}
\Pr(\x^\sRB = \x^\ast) \geq \Pr(\z^\sRB = \z^\ast).
\end{equation}
\item If $r_{kk}\geq \sqrt{2}\sigma \mu_2$, then after the permutation, 
\begin{equation}\label{eq:lll-prob-ineq2}
\Pr(\x^\sRB = \x^\ast) \leq \Pr(\z^\sRB = \z^\ast).
\end{equation}
\end{enumerate}
Furthermore, the equalities in \eqref{eq:lll-prob-ineq1} and \eqref{eq:lll-prob-ineq2} hold  if and only if $r_{k-1,k} = 0$.
\end{lemma}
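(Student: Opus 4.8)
The plan is to exploit the product structure coming from Theorem \ref{th:babaisp}. Writing $\rho$ as in \eqref{eq:rho}, both success probabilities factor as $\prod_k \rho(r_{kk}/(\sqrt2\sigma))$, where each factor depends only on the corresponding diagonal entry of the upper-triangular factor. The operation \eqref{eq:llltri} touches only rows and columns $k-1$ and $k$, so every diagonal entry other than $r_{k-1,k-1}$ and $r_{kk}$ is preserved. Hence all but two factors cancel, and
$$
\frac{\Pr(\z^\sRB=\z^\ast)}{\Pr(\x^\sRB=\x^\ast)}
= \frac{\rho\!\left(\hr_{k-1,k-1}/(\sqrt2\sigma)\right)\rho\!\left(\hr_{kk}/(\sqrt2\sigma)\right)}
       {\rho\!\left(r_{k-1,k-1}/(\sqrt2\sigma)\right)\rho\!\left(r_{kk}/(\sqrt2\sigma)\right)}.
$$
The third identity in \eqref{eq:lllprop} states that the product $r_{k-1,k-1}r_{kk}$ is invariant, so setting $\beta := r_{k-1,k-1}r_{kk}/(2\sigma^2)$ turns this ratio into exactly $\Phi(\hr_{kk}/(\sqrt2\sigma),\beta)/\Phi(r_{kk}/(\sqrt2\sigma),\beta)$, with $\Phi$ from \eqref{eq:Phi}. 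The lemma thus reduces to comparing two values of $\Phi(\cdot,\beta)$.

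Next I would pin down how the swap relocates the diagonal pair. From \eqref{eq:lllprop} and the violated Lov\'asz condition $\delta r_{k-1,k-1}^2 > r_{k-1,k}^2 + r_{kk}^2$ (with $\delta\le 1$), I get $\hr_{k-1,k-1}^2 = r_{k-1,k}^2 + r_{kk}^2 < r_{k-1,k-1}^2$, hence $\hr_{k-1,k-1} < r_{k-1,k-1}$; since the product is fixed this forces $\hr_{kk} > r_{kk}$; and $\hr_{k-1,k-1}^2\ge r_{kk}^2$ gives $\hr_{kk}\le r_{k-1,k-1}$. Together with $r_{kk} < r_{k-1,k-1}$, and recalling that the geometric mean $\sqrt\beta$ is invariant, this says the scaled new pair $\{\hr_{k-1,k-1},\hr_{kk}\}/(\sqrt2\sigma)$ is a \emph{compression} of the scaled old pair $\{r_{kk},r_{k-1,k-1}\}/(\sqrt2\sigma)$ toward $\sqrt\beta$: the new maximum is no larger and the new minimum no smaller than the old ones, with both equalities holding simultaneously iff $r_{k-1,k}=0$ (in which case the swap merely exchanges the two diagonals).

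The heart of the argument is the shape of $\Phi(\cdot,\beta)$. Because $\Phi(\gamma,\beta)=\Phi(\beta/\gamma,\beta)$ is symmetric about $\gamma=\sqrt\beta$, reflecting the two strict-decrease intervals of Lemma \ref{le:phimono} through this symmetry shows that $\Phi(\cdot,\beta)$ is strictly unimodal with a \emph{minimum} at $\sqrt\beta$ on $[\beta/\mu_1,\mu_1]$, and with a \emph{maximum} at $\sqrt\beta$ on $[\mu_2,\beta/\mu_2]$. In Case 1 the hypothesis $r_{k-1,k-1}\le\sqrt2\sigma\mu_1$ confines all four scaled diagonal values to $[\beta/\mu_1,\mu_1]$ (here $\sqrt\beta < r_{k-1,k-1}/(\sqrt2\sigma)\le\mu_1$, while $\beta/\mu_1\le r_{kk}/(\sqrt2\sigma)$), and since $\Phi$ grows as one moves away from $\sqrt\beta$ there, the less-spread new pair yields $\Phi(\text{new})\le\Phi(\text{old})$, i.e. \eqref{eq:lll-prob-ineq1}. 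Case 2 is the mirror image: $r_{kk}\ge\sqrt2\sigma\mu_2$ confines all four values to $[\mu_2,\beta/\mu_2]$, where $\Phi$ peaks at $\sqrt\beta$, so the less-spread new pair gives $\Phi(\text{new})\ge\Phi(\text{old})$, i.e. \eqref{eq:lll-prob-ineq2}. The strictness in Lemma \ref{le:phimono} together with the strict compression when $r_{k-1,k}\neq 0$ delivers the stated equality characterization.

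The step I expect to be the main obstacle is the location of the four values, specifically the fact that $r_{kk}/(\sqrt2\sigma)$ and $\hr_{kk}/(\sqrt2\sigma)$ may lie on opposite sides of the symmetry point $\sqrt\beta$, so a single monotonic interval from Lemma \ref{le:phimono} does not directly compare them. Handling this cleanly requires invoking the reflection symmetry $\Phi(\gamma,\beta)=\Phi(\beta/\gamma,\beta)$ to recast the comparison in terms of distance from $\sqrt\beta$, and then checking that the hypotheses of Cases 1 and 2 are precisely what force all four values into the one-sided unimodal range $[\beta/\mu_1,\mu_1]$ or $[\mu_2,\beta/\mu_2]$. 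The remaining work, namely the chain of inequalities among the entries of $\R$ and $\hbR$ and the verification of the equality case, is routine.
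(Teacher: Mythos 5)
Your argument is correct and follows essentially the same route as the paper's proof: both reduce the comparison of success probabilities to a comparison of two values of $\Phi(\cdot,\beta)$ with $\beta=r_{k-1,k-1}r_{kk}/(2\sigma^2)$ invariant under the swap, establish via \eqref{eq:lllprop} and the violated Lov\'asz condition that the new diagonal pair is a strict compression of the old one toward $\sqrt{\beta}$ (strict iff $r_{k-1,k}\neq 0$), and then invoke Lemma \ref{le:phimono}. The paper sidesteps your "main obstacle" by phrasing the comparison directly in terms of $\min$ (Case 1) and $\max$ (Case 2) of each pair, which is equivalent to your use of the reflection symmetry $\Phi(\gamma,\beta)=\Phi(\beta/\gamma,\beta)$.
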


\begin{proof}
When $r_{k-1,k} = 0$, the permutation only exchanges the positions of the diagonal elements, and by Theorem \ref{th:babaisp} the SP stays unchanged, thus the equalities in \eqref{eq:lll-prob-ineq1} and \eqref{eq:lll-prob-ineq2} hold. 

Now we assume $r_{k-1,k}\neq 0$, and we will show that the strict inequalities in \eqref{eq:lll-prob-ineq1} and \eqref{eq:lll-prob-ineq2} hold. 

Define
\begin{equation*}
    \beta := \frac{r_{k-1,k-1}r_{kk}}{2\sigma^2} = \frac{\hr_{k-1,k-1}\hr_{kk}}{2\sigma^2},
\end{equation*}
where the second equality holds by \eqref{eq:lllprop}. Then
\aln{
\min\left\{ \frac{r_{k-1,k-1}}{\sqrt{2}\sigma}, \frac{r_{kk}}{\sqrt{2}\sigma} \right\}
\leq
\sqrt{\beta} 
\leq \max\left\{ \frac{r_{k-1,k-1}}{\sqrt{2}\sigma}, \frac{r_{kk}}{\sqrt{2}\sigma} \right\} .
}
Replacing $r_{k-1,k-1}$ and $r_{kk}$ with $\hr_{k-1,k-1}$ and $\hr_{kk}$ respectively, the above inequalities still hold.
Using the condition that $\delta\, r_{k-1,k-1}^2 > r_{k-1,k}^2 + r_{kk}^2$ 
with $\delta \in (1/4,1]$ and the equalities in \eqref{eq:lllprop}, we can obtain
\al{
    \frac{\beta}{r_{k-1,k-1}/(\sqrt{2}\sigma)} &= \min\left\{ \frac{r_{k-1,k-1}}{\sqrt{2}\sigma}, \frac{r_{kk}}{\sqrt{2}\sigma} \right\}  
     < \min\left\{ \frac{\hr_{k-1,k-1}}{\sqrt{2}\sigma}, \frac{\hr_{kk}}{\sqrt{2}\sigma} \right\} \leq \sqrt{\beta}, \label{eq:betaineqmin} \\
    \sqrt{\beta} &\leq \max\left\{\frac{\hr_{k-1,k-1}}{\sqrt{2}\sigma}, \frac{\hr_{kk}}{\sqrt{2}\sigma}\right\}  
     <\max\left\{\frac{r_{k-1,k-1}}{\sqrt{2}\sigma}, \frac{r_{kk}}{\sqrt{2}\sigma}\right\} = \frac{\beta}{r_{kk} / (\sqrt{2}\sigma)}. \label{eq:betaineqmax}
}
% \begin{equation}\label{eq:betaineqmax}
%     \begin{aligned}
%         \sqrt{\beta} &\leq \max\left\{\frac{\hr_{k-1,k-1}}{\sqrt{2}\sigma}, \frac{\hr_{kk}}{\sqrt{2}\sigma}\right\} \\
%         &< \max\left\{\frac{r_{k-1,k-1}}{\sqrt{2}\sigma}, \frac{r_{kk}}{\sqrt{2}\sigma}\right\} = \frac{\beta}{r_{kk} / (\sqrt{2}\sigma)},
%     \end{aligned}
% \end{equation}
% \begin{equation}\label{eq:betaineqmin}
% \begin{aligned}
%     \frac{\beta}{r_{k-1,k-1}/(\sqrt{2}\sigma)} &= \min\left\{ \frac{r_{k-1,k-1}}{\sqrt{2}\sigma}, \frac{r_{kk}}{\sqrt{2}\sigma} \right\} \\
%     &< \min\left\{ \frac{\hr_{k-1,k-1}}{\sqrt{2}\sigma}, \frac{\hr_{kk}}{\sqrt{2}\sigma} \right\} \leq \sqrt{\beta}.
% \end{aligned}
% \end{equation}

First, we assume $r_{k-1,k-1} \leq \sqrt{2}\sigma \mu_1$. Since the permutation only affects columns $k$ and $k-1$, other diagonal elements remain the same. By Theorem \ref{th:babaisp}, to prove \eqref{eq:lll-prob-ineq1} it suffices to show
\begin{equation*}
\begin{aligned}
     \rho(r_{k-1,k-1}/(\sqrt{2}\sigma))\cdot \rho(r_{k,k}/(\sqrt{2}\sigma))  \geq \rho(\hr_{k-1,k-1}/(\sqrt{2}\sigma))\cdot \rho(\hr_{k,k}/(\sqrt{2}\sigma)),
\end{aligned}
\end{equation*}
which is equivalent to
\begin{equation}\label{eq:phiineq1}
\begin{aligned}
     \Phi\left(\min\left\{ \frac{r_{k-1,k-1}}{\sqrt{2}\sigma}, \frac{r_{kk}}{\sqrt{2}\sigma} \right\}, \beta\right)  
      \geq \Phi\left(\min\left\{ \frac{\hr_{k-1,k-1}}{\sqrt{2}\sigma}, \frac{\hr_{kk}}{\sqrt{2}\sigma} \right\}, \beta\right)    .
\end{aligned}
\end{equation}
When $r_{k-1,k-1} \leq \sqrt{2}\sigma \mu_1$, then $\frac{\beta}{\mu_1} \leq \frac{\beta}{r_{k-1,k-1}/(\sqrt{2}\sigma)}$. Thus from \eqref{eq:betaineqmin}, 
\begin{equation*}
\begin{aligned}
    \frac{\beta}{\mu_1} &\leq \min\left\{ \frac{r_{k-1,k-1}}{\sqrt{2}\sigma}, \frac{r_{kk}}{\sqrt{2}\sigma} \right\}  
    < \min\left\{ \frac{\hr_{k-1,k-1}}{\sqrt{2}\sigma}, \frac{\hr_{kk}}{\sqrt{2}\sigma} \right\} \leq \sqrt{\beta}    ,
\end{aligned}
\end{equation*}
and by Lemma \ref{le:phimono} the strict inequality in \eqref{eq:phiineq1} holds. 

Now we assume $r_{kk}\geq \sqrt{2}\sigma \mu_2$. 
% Since the permutation only affects columns $k$ and $k-1$, other diagonal elements remain the same. By Theorem \ref{th:babaisp}, to prove \eqref{eq:lll-prob-ineq2} it suffices to show
% \begin{equation*}
% \begin{aligned}
%     & \rho(r_{k-1,k-1}/(\sqrt{2}\sigma))\cdot \rho(r_{k,k}/(\sqrt{2}\sigma)) \\
%     & \leq \rho(\hr_{k-1,k-1}/(\sqrt{2}\sigma))\cdot \rho(\hr_{k,k}/(\sqrt{2}\sigma)),
% \end{aligned}
% \end{equation*}
% which is equivalent to
Using the same reasoning as before, to prove \eqref{eq:lll-prob-ineq2} it suffices to show
\begin{equation}\label{eq:phiineq2}
\begin{aligned}
      \Phi\left(\max\left\{ \frac{r_{k-1,k-1}}{\sqrt{2}\sigma}, \frac{r_{kk}}{\sqrt{2}\sigma} \right\}, \beta\right)  
      \leq \Phi\left(\max\left\{ \frac{\hr_{k-1,k-1}}{\sqrt{2}\sigma}, \frac{\hr_{kk}}{\sqrt{2}\sigma} \right\}, \beta\right).   
\end{aligned}
\end{equation}
When $r_{kk}\geq \sqrt{2}\sigma \mu_2$, we have $\frac{\beta}{r_{kk}/(\sqrt{2}\sigma)} \leq \frac{\beta}{\mu_2}$. Thus from \eqref{eq:betaineqmax}, 
\begin{equation*}
\begin{aligned}
\sqrt{\beta} &\leq \max\left\{\frac{\hr_{k-1,k-1}}{\sqrt{2}\sigma}, \frac{\hr_{kk}}{\sqrt{2}\sigma}\right\}  
 < \max\left\{\frac{r_{k-1,k-1}}{\sqrt{2}\sigma}, \frac{r_{kk}}{\sqrt{2}\sigma}\right\} \leq \frac{\beta}{\mu_2}.        
\end{aligned}
\end{equation*}
By Lemma \ref{le:phimono} the strict inequality in \eqref{eq:phiineq2} holds. 
\end{proof}

Based on Lemma \ref{le:lllbound1}, we can obtain the following general result for applying LLL-P strategy on $L_0$-RBILS problems.

\begin{theorem}\label{th:lllbound2}
Suppose that \eqref{eq:lm} is transformed into \eqref{eq:rlm} using the QR factorization \eqref{eq:qr}, then into \eqref{eq:permuted-lm} using the QR factorization \eqref{eq:pqr} where LLL-P strategy is used for column permutations. Suppose also that $\mu_1, \mu_2$ are defined as in \eqref{eq:mu12}. Then:
\begin{enumerate}
    \item If $\R$ satisfies $\max_{1\leq k\leq n} r_{kk}\leq \sqrt{2}\sigma\mu_1$,
    then 
    \begin{equation}\label{eq:lll-gen-ineq1}
    \Pr(\x^\sRB = \x^\ast) \geq \Pr(\z^\sRB = \z^\ast).
    \end{equation}
    \item If $\R$ satisfies $\min_{1\leq k \leq n} r_{kk} \geq \sqrt{2}\sigma \mu_2$, 
    then 
    \begin{equation}\label{eq:lll-gen-ineq2}
    \Pr(\x^\sRB = \x^\ast) \leq \Pr(\z^\sRB = \z^\ast).
    \end{equation}
\end{enumerate}
Furthermore, the equalities in \eqref{eq:lll-gen-ineq1} and \eqref{eq:lll-gen-ineq2} hold if and only if no column permutation occurs in the LLL-P process, or whenever two columns $k-1$ and $k$ are permuted, $r_{k-1,k} = 0$.
\end{theorem}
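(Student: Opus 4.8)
The plan is to reduce the multi-step LLL-P process to a telescoping chain of single adjacent column swaps, each governed by Lemma \ref{le:lllbound1}, and then to verify that the \emph{global} hypothesis of the theorem forces the \emph{local} hypothesis of that lemma to hold at every swap. Concretely, LLL-P (Algorithm \ref{alg:lll-p}) produces a finite sequence of upper triangular matrices $\R = \R^{(0)}, \R^{(1)}, \ldots, \R^{(L)} = \hbR$, where each $\R^{(l+1)}$ is obtained from $\R^{(l)}$ by interchanging two adjacent columns whose Lov\'asz condition is violated and retriangularizing as in \eqref{eq:llltri}. Since the SP formula $P^\sRB(\R)=\prod_k \rho_k^\sRB$ of Theorem \ref{th:babaisp} depends only on the diagonal of $\R$, the quantity $P^\sRB(\R^{(l)})$ is well defined at each intermediate state.

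First I would show that the relevant extremal diagonal is monotone under a single swap. From the identities \eqref{eq:lllprop} and the violation inequality $\delta\, r_{k-1,k-1}^2 > r_{k-1,k}^2 + r_{kk}^2$ with $\delta \le 1$, one gets $\hr_{k-1,k-1}^2 = r_{k-1,k}^2 + r_{kk}^2 < r_{k-1,k-1}^2$, and combining the product identity $\hr_{k-1,k-1}\hr_{kk}=r_{k-1,k-1}r_{kk}$ with $\hr_{kk}^2 \le r_{k-1,k-1}^2$ yields $r_{kk} \le \hr_{kk} \le r_{k-1,k-1}$. Since the violation also forces $r_{k-1,k-1} > r_{kk}$, these give $\max\{\hr_{k-1,k-1},\hr_{kk}\} \le \max\{r_{k-1,k-1},r_{kk}\}$ and $\min\{\hr_{k-1,k-1},\hr_{kk}\} \ge \min\{r_{k-1,k-1},r_{kk}\}$; these are precisely the inequalities \eqref{eq:betaineqmin}--\eqref{eq:betaineqmax} already extracted in the proof of Lemma \ref{le:lllbound1}. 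As the untouched diagonals do not change, $\max_k r_{kk}$ is non-increasing and $\min_k r_{kk}$ is non-decreasing along the whole sequence $\R^{(0)},\ldots,\R^{(L)}$.

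Next I would run the induction. In case 1 the hypothesis $\max_k r_{kk}\le\sqrt{2}\sigma\mu_1$ holds for $\R^{(0)}$ and, by the monotonicity just established, for every $\R^{(l)}$; hence at the swap turning $\R^{(l)}$ into $\R^{(l+1)}$ the larger diagonal satisfies $r_{k-1,k-1}\le \max_k r_{kk}\le\sqrt{2}\sigma\mu_1$, which is exactly the local hypothesis of the first part of Lemma \ref{le:lllbound1}. That lemma then gives $P^\sRB(\R^{(l)})\ge P^\sRB(\R^{(l+1)})$, and telescoping over $l=0,\ldots,L-1$ yields $\Pr(\x^\sRB=\x^\ast)=P^\sRB(\R^{(0)})\ge P^\sRB(\R^{(L)})=\Pr(\z^\sRB=\z^\ast)$, i.e.\ \eqref{eq:lll-gen-ineq1}. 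Case 2 is symmetric: the preserved condition $\min_k r_{kk}\ge\sqrt{2}\sigma\mu_2$ supplies the local hypothesis $r_{kk}\ge\sqrt{2}\sigma\mu_2$ on the smaller diagonal at each swap, so the second part of Lemma \ref{le:lllbound1} reverses every inequality and gives \eqref{eq:lll-gen-ineq2}.

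Finally, for the equality characterization I would track each factor of the telescoping product. By Lemma \ref{le:lllbound1} the step $\R^{(l)}\to\R^{(l+1)}$ leaves the SP unchanged exactly when the eliminated off-diagonal $r_{k-1,k}$ is zero; otherwise the strict forms of \eqref{eq:betaineqmin}--\eqref{eq:betaineqmax} feed the strict monotonicity of $\Phi$ in Lemma \ref{le:phimono} and the step is strict. Hence overall equality holds iff either no swap is ever performed or every performed swap has $r_{k-1,k}=0$, which is the stated condition. The main obstacle is the second paragraph: one must verify that the global bound is genuinely preserved step by step, since Lemma \ref{le:lllbound1} only licenses a swap whose \emph{current} extremal diagonal obeys the $\mu_1$ (resp.\ $\mu_2$) bound, and a careless argument could apply the bound to the original $\R$ rather than to the running intermediate matrix $\R^{(l)}$.
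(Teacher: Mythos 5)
Your proposal is correct and follows essentially the same route as the paper's proof: both establish from \eqref{eq:betaineqmin}--\eqref{eq:betaineqmax} that each swap leaves $\max_k r_{kk}$ non-increasing and $\min_k r_{kk}$ non-decreasing, so the global hypothesis is preserved at every intermediate matrix, and then apply Lemma \ref{le:lllbound1} swap by swap with the telescoping product giving the result and the equality characterization. Your write-up is merely more explicit than the paper's about the induction and the point that the bound must be checked on the running matrix rather than the original $\R$.
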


\begin{proof}
By \eqref{eq:betaineqmax} and \eqref{eq:betaineqmin}, 
each time LLL-P performs a column permutation, 
we have 
$\min_{1\leq k \leq n} \hr_{kk} \geq \min_{1\leq k\leq n}r_{kk}$
and $\max_{1\leq k \leq n} \hr_{kk} \leq \max_{1\leq k\leq n}r_{kk}$.
Therefore, the diagonal elements of $\hat{\R}$ after a column permutation satisfy $\min_{1\leq k\leq n} r_{kk}\leq \hat{r}_{kk} \leq \max_{1\leq k \leq n} r_{kk}$ for all $k=1:n$. 
Thus, after each permutation, the conditions are still satisfied. 
We can then apply Lemma \ref{le:lllbound1} to get our conclusion.
\end{proof}

\begin{remark}\label{rm:lll-rkkgap}
When the noise standard deviation $\sigma$ is small, Theorem \ref{th:lllbound2} shows that the inequality \eqref{eq:lll-gen-ineq2} is likely to hold, therefore applying the LLL-P strategy is likely to increase the success probability of the regularized Babai estimator. However when $\sigma$ is large, applying LLL-P will decrease the success probability. Therefore before applying the LLL-P strategy to our linear model, we need to verify if the condition that guarantees an improvement on success probability holds. 
If the diagonal elements do not satisfy either condition in Theorem \ref{th:lllbound2}, i.e. when
\begin{equation}\label{eq:lllgap}
\frac{\max_{1\leq k\leq n}r_{kk}}{\mu_2} 
\leq \sqrt{2}\sigma 
\leq \frac{\min_{1\leq k \leq n}r_{kk}}{\mu_1}  ,  
\end{equation}
the success probability may increase or decrease. 
To demonstrate this, we perform numerical experiments 
under the following setting. 
Let $p=0.4$ and $M=2$. Under this setting, $\mu_1 = 0.5475$ and $\mu_2 = 1.2546$.
We use MATLAB's function $0.5*\text{randn}(4,4)$ to generate 
a model matrix $\A$,
and then we check if the most left hand side of \eqref{eq:lllgap}
is larger than the most right hand side.
If it is not true, we discard the matrix and continue sampling.  
When it is true, we take 
$\sigma = \frac{1}{2\sqrt{2}}\left(\max_{1\leq k\leq n}\frac{r_{kk}}{\mu_2} + \min_{1\leq k \leq n}\frac{r_{kk}}{\mu_1}\right)$
so that \eqref{eq:lllgap} holds.
Table \ref{t:lllspgap} presents the SPs of the $L_0$-regularized Babai point
for six found matrices,
where $P$ and $P_L$ are SPs without and with applying LLL-P, respectively.

\begin{table}
\caption{SP of the $L_0$-regularized Babai point}\label{t:lllspgap}
\centering
\begin{tabular}{|c|c|c|}
\hline
$\sigma$ & $P$ & $P_L$ \\ \hline
0.6054   & 0.3515   & 0.3526     \\ \hline
0.8543   & 0.3377   & 0.3376     \\ \hline
0.4942   & 0.3358   & 0.3360     \\ \hline
0.6137   & 0.3231   & 0.3225     \\ \hline
0.6198   & 0.3601   & 0.3594     \\ \hline
0.5860   & 0.3544   & 0.3544     \\ \hline
\end{tabular}
\end{table}
\end{remark}

\begin{remark} \label{rm:permute}
Theorem \ref{th:lllbound2} focuses on the effect of LLL-P on the SP of $\x^\sRB$ only. Different permutation strategies such as SQRD and V-BLAST may not have the same effect on the SP under the same conditions. Consider the case where $M=2, p=0.5$.
Under this setting, $\mu_1 = 0.4982$ and $\mu_2 = 1.0429$ where $\mu_1, \mu_2$ are defined as in \eqref{eq:mu12}.
Let $P,P_L,P_S,P_V$ denote the SP of the $L_0$-regularized Babai points corresponding to the original $\R$ and the ones transformed by the strategies LLL-P, SQRD and V-BLAST, respectively.

Consider the case where
\begin{equation*}
    \R = \begin{bmatrix}
3.5 & -4 & -3 \\
 & 0.5 & -2  \\
 &  & 0.5
    \end{bmatrix}, \quad \sigma = 0.2.
\end{equation*}
In this case, $\min_{1\leq k \leq n}r_{kk} \geq \sqrt{2}\sigma\mu_2 = 0.2950$. The success probabilities are given by
\begin{equation*}
    P = 0.8109,\,  P_L = P_V = 0.8491,\, P_S = 0.6033.
\end{equation*}
The LLL-P strategy increases the SP, which coincides with Theorem \ref{th:lllbound2}. The V-BLAST strategy increases the SP as well, while the SQRD strategy decreases the SP.

Now, consider the case where
\begin{equation*}
    \R = \begin{bmatrix}
    1 & -1.5 & 2 \\
    & 0.8 & -1 \\
    &  & 0.4 
    \end{bmatrix}, \quad \sigma = 4.0.
\end{equation*}
In this case,
$\max_{1\leq k \leq n} r_{kk} \leq \sqrt{2}\sigma\mu_1 = 2.8180$. The success probabilities are given by
\begin{equation*}
    P = P_L = P_S = 0.1264, \, P_V = 0.1348.
\end{equation*}
The LLL-P strategy does not increase the SP (the permuted ordering is identical to the original one), which coincides with Theorem \ref{th:lllbound2}, and the SQRD strategy has the same effect, while the V-BLAST strategy increases the SP. 
\end{remark}

\section{Success probability based column permutation strategies}\label{sec:spgalg}

In this section, we propose three SP based column permutation strategies to increase the SP of the Babai point $\x^\sRB$.
Specially, we propose a local greedy strategy, a global greedy strategy and a mixed one.

\subsection{A local success probability  based strategy LSP}

In this subsection, we propose a local success probability based strategy,
abbreviated as LSP, 
which turns out to be the V-BLAST permutation strategy. 

We would like to develop a greedy permutation strategy using $\rho_k^\sRB$ (see \eqref{eq:rbksp}) as our measure. 
% We mainly look at the determination of the last column as other columns can be determined similarly.
To determine the last column,
for $j=1,2,\ldots,n$, we interchange columns $j$ and $n$ of $\R$ (when $j=n$, $\R$ is just
the original one)
and then return the matrix to upper triangular 
by a series of Givens rotations applied from the left. 
Denote the resulting upper triangular matrix by $\R^{(n,j)}$.
Then we can obtain  $\rho_n^\sRB$
by the formula \eqref{eq:rbksp} (i.e., the SP of the $L_0$-regularized Babai point at level $n$). 
We rewrite it as $\rho_n^\sRB(r_{nn}^{(n,j)})$ as it is a function of $r_{nn}^{(n,j)}$.
We would like to find column $j$ such that $\rho_n^\sRB(r_{nn}^{(n,j)})$ is maximal, i.e.,
\be  \label{eq:col}
p_n = \argmax_{j} \rho_n^\sRB(r_{nn}^{(n,j)}).
\ee
Then the $p_n$-th column of $\R$ is put as the last column
and $\R$ is updated to be upper triangular. 
After that we recurse on the submatrix $(r_{ij})_{i,j\in\{1,\ldots,n-1\}}$.
We repeat the process until all
columns of the matrix are reordered.
% We call this success probability based greedy permutation strategy SPG.

By Theorem \ref{th:rmono},  $\rho_n^\sRB(r_{nn}^{(n,j)})$ is strictly increasing
for $r_{nn} \in (0,\infty)$. 
Therefore, we have
\be \label{eq:lspn}
p_n := \argmax_{j} r_{nn}^{(n,j)}.
\ee
This indicates that this LSP permutation strategy is just the V-BLAST strategy, see Algorithm \ref{alg:vblast}.

\begin{remark}
The LSP permutation strategy was proposed in the conference paper \cite{ChaP18},
but it was not realized there that it is just the V-BLAST strategy,
because in \cite{ChaP18} the SP formula of $\x^\sRB$
is more complicated and the monotonicity of $\rho_k^\sRB$ with respect to $r_{kk}$
was not established.
\end{remark}

\subsection{A global success probability based strategy GSP}

When determining the $k$-th column of $\R$, the LSP strategy or the V-BLAST strategy chooses a column to 
maximize  the SP of $x_k^\sRB$ at level $k$,
and it ignores 
its impact on the SP of $x_i^\sRB$ for $i<k$.
It turns out that applying LSP  may decrease the SP of $\x^\sRB$.
% , see Remark \ref{rm:permute}.

To guarantee that the SP will increase (not strictly) after the columns of $\R$ are permuted, 
we propose a new strategy based on the SP formula \eqref{eq:rbsp}.

Here we show how to determine the last column. Other columns can be determined similarly.
For $j=1,2,\ldots,n$ we interchange the columns $j$ and $n$ of $\R$, 
and then return the matrix to upper triangular  by a series of Givens rotations applied from the left. 
Denote the resulting upper triangular matrix by $\R^{(n,j)}$.
Then we can obtain   $P^\sRB(\R^{(n,j)})$, 
the SP of the $L_0$-regularized Babai point corresponding to  $\R^{(n,j)}$,  see \eqref{eq:rbsp}.
We would like to find column $p_n$ such that  $P^\sRB(\R^{(n,p_n)})$ is maximal, i.e.,
\be \label{eq:gspn}
p_n = \argmax_{j} P^\sRB(\R^{(n,j)}).
\ee
Note that 
$$
P^\sRB(\R^{(n,p_n)}) \geq P^\sRB(\R^{(n,n)}),
$$
where $\R^{(n,n)}$ is the original $\R$. 
Thus by interchanging columns $p_n$ and $n$, the SP of the Babai point increases.
Then we work with the $(n-1)\times (n-1)$ leading principle submatrix of $ \R^{(n,p_n)}$
to determine its last column  in the same way. 
We repeat the process until all columns are reordered.
We refer to this as the global success probability based
permutation strategy (in short GSP),
and the algorithm is described in Algorithm \ref{alg:gsp}.
% \blue{Should we remove the update of $\y$ in the pseudocode? The measurement vector update procedure is covered in the preliminary.}
% \red{It's okay to remove it and step 1 with the assumption that $\R$ has been computed. The output may just be the permutation vector.}
 \begin{algorithm}
\caption{Algorithm GSP}
\label{alg:gsp}
\begin{algorithmic}[1]
% \STATE  Compute the QR factorization of $\A$ by Householder transformations: 
% $ \Q^\top   \A= \bsmx \R\\ \0 \esmx$ and compute $\y:=\Q^\top  \y$ \\  % \hfill ($2(m-n/3)n^2$ flops)
\STATE Set $\R_0=\R$.
\STATE Initialize $\p = [1,2,\ldots,n]$.
\FOR{$k=n,n-1,\ldots,2$}
 	\STATE Compute $\rho=P^\sRB(\R)$. 
	\STATE Set $\R'=\R, j_k = k$.
         \FOR{$j=1,\ldots,k-1$}
                  \STATE Set $\R_{tem}=\R$. 
                  \STATE \label{l:qrupdate} Interchange columns $j$ and $k$ of $\R_{tem}$ and triangularize $\R_{tem}$ by Givens rotations.
                %   \label{line:givens}
                  \STATE Compute $P^\sRB(\R_{tem})$.
                  \IF{$P^\sRB(\R_{tem}) > \rho$}
                  \STATE Update: $\rho= P^\sRB(\R_{tem})$, $j_k=j$, $\R'=\R_{tem}$.
                  \ENDIF
        \ENDFOR
	\STATE Interchange $p_{j_k}$ and $p_k$. 
	\STATE Set $\R= (r'_{st})_{s,t\in \{1,\ldots,k-1\}}$. \label{line:dimred}
\ENDFOR
\STATE Permute columns of $\R_0$ based on $\p$.
\STATE Compute its QR factorization: $\bar{\Q}^\top  \R_0=\R$. \label{line:R0qr}
% and compute $\y:=\bbQ^\top  \y(1:n)$
\end{algorithmic}
\end{algorithm}

The cost of Line \ref{l:qrupdate} dominates the cost of Algorithm \ref{alg:gsp}.
For each $k$ and $j$, its cost is $O(n^2)$ flops.
The whole algorithm costs $O(n^4)$ flops.

Compared with Algorithm LSP (i.e., Algorithm \ref{alg:vblast}),  
Algorithm GSP costs more.
Furthermore, it needs the value of the noise standard deviation $\sigma$, 
while LSP does not.
However, GSP guarantees to increase the SP of the Babai point $\x^\sRB$,
while LSP does not.

\begin{remark}
One can implement Algorithm \ref{alg:gsp} without reducing the dimension of $\R$
at each step (see Line \ref{line:dimred}), and instead perform Givens rotations to the whole $n\times n$ matrix $\R$. 
By doing this, we can avoid performing QR factorization at the end to recover the permuted $\R$ (see Line \ref{line:R0qr}), as Algorithm \ref{alg:vblast} does.
However, this can drastically increase the computational cost because
to choose one column at step $k$, we would apply a sequence of Givens rotations to $\R$ for each $j=1,2,\ldots,k-1$. 
\end{remark}

\subsection{A mixed success probability based strategy MSP}
As we will see later in the numerical experiments in Section \ref{sec:numexp_avgsp_strategies}, the LSP strategy generally outperforms the GSP strategy when $\sigma$ is small. 
However, the former does not guarantee that the SP will increase, while the latter does.
This motivates us to propose a mixed strategy, which has the advantages 
of both LSP and GSP to great extent.

As before, we describe how to determine the $n$-th column here. 
The procedure can be applied recursively to determine other columns. 
First, we use the LSP strategy (as described in Algorithm \ref{alg:vblast}) to find the index $p_n$ (see \eqref{eq:lspn}) 
and interchange columns $p_n$ and $n$ of $\R$. 
We then use Givens rotations to bring $\R$ back to the upper triangular structure.
Denote this transformed upper triangular matrix by $\R^{(n,p_n)}$. 
We then compute the SP $P^\sRB(\R^{(n,p_n)})$ using the formula \eqref{eq:rbsp}. 
Comparing this with the original SP $P^\sRB(\R)$, 
if $P^\sRB(\R^{(n,p_n)})\geq P^\sRB(\R)$, 
column $p_n$ of $\R$ is our choice for the last column and
we move on to the next level to choose column $n-1$. 
Otherwise, we use the GSP strategy to find the index $p_n$ 
defined by \eqref{eq:gspn} and interchange columns $p_n$ 
and $n$ of the origin $\R$,
and then we move on to the next level to choose column $n-1$.
% Finally, we update $\R$ to be the $(n-1)\times(n-1)$ leading principle submatrix of $\R^{(n,\tilde{p}_n)}$ and move on to the next level. 
We refer to this as the mixed success probability based strategy 
(in short MSP). 
The algorithm is described in Algorithm \ref{alg:msp}. 

The cost of this algorithm can be of the same order as that of LSP,
i.e., $O(n^3)$ flops.
In the worse case, at every $k=n,n-1,\ldots,2$ the LSP strategy of selecting a column would decrease the SP, then the cost would be the sum of LSP and GSP, 
which is $O(n^4)$ flops.

\begin{algorithm}
\caption{Algorithm MSP}\label{alg:msp}
\begin{algorithmic}[1]
\STATE Initialize $\p = [1,2,\ldots,n], \F = \R^{-\top}$. 
\STATE Create buffer: set $\R_0 = \R_1 = \R, \F_1 = \F, \p_1 = \p$.
\FOR{$k=n,n-1,\ldots,2$}
    \STATE Compute $\rho = P^\sRB(\R)$.
    \STATE Compute $\d$: $d_j = \|(\f_j)_{jk}\|_2$ for $j=1,\ldots,k$.
    \STATE Set $j_k^L = \argmin_{1\leq j \leq k} d_j$. 
    \STATE Interchange $\p(k)$ and  $\p(j_k^L)$, and interchange columns $k$ and $j_k^L$ 
    of both $\R$ and $\F$. 
    \STATE Triangularize $\R$ by Givens rotations: $\R = \G^\top\R$.
    \STATE Update $\F$: $\F = \G^\top \F$.
    \STATE Compute $\rho' = P^\sRB(\R)$. 
    \IF{$\rho' < \rho$}
        \STATE Recover: set $\R = \R_1, \F = \F_1, \p = \p_1$. 
        \STATE Initialize $\R' = \R$, $\G'=\I_k$,  $j_k = k$.
        \FOR{$j = 1,\ldots,k-1$}
            \IF{$j \neq i_k$} 
            \STATE Set $\R_{tem} = \R$.
%            \STATE Interchange columns $j,k$ of $\R_{tem}$.
%            \STATE Triangularize by Givens rotation: $\R_{tem} = \G_{j,k}^\top\R$. 
            \STATE Interchange columns $j$ and $k$ of $\R_{tem}$,   
            triangularize it by Givens rotations: $\R_{tem} = \G_{tem}^\top\R_{tem}$
            \STATE Compute $\rho' = P^\sRB(\R_{tem})$.
            \IF{$\rho' > \rho$}
                \STATE Update: $\rho = \rho'$, $\R' = \R_{tem}$, $\G'=\G_{tem}$,
                $j_k = j$.
            \ENDIF
             \ENDIF
        \ENDFOR
        \STATE Interchange $p_k$ and $p_{j_k}$.
        \STATE Update: set $\R = \R'$, interchange columns $k$ and $j_k$ of $\F$ and set  $\F = \G'^\top\F$.
    \ENDIF
    \STATE 
    % Set $\R = \R(1\!:\!k\!-\!1,1\!:\!k\!-\!1)$, 
    Set $\R = (r_{st})_{s,t\in\{1,\ldots,k-1\}}$, 
    % $\F = \F(1\!:\!k\!-\!1,1\!:\!k\!-\!1)$.
    $\F = (f_{st})_{s,t\in\{1,\ldots,k-1\}}$.
    \STATE Update buffer: $\R_1 = \R, \F_1 = \F,  \p_1 = \p$.
\ENDFOR
\STATE Permute columns of $\R_0$ based on $\p$.
\STATE Compute its QR factorization: $\bar{\Q}^\top \R_0 = \R$.
\end{algorithmic}
\end{algorithm}

\section{Numerical Tests}\label{sec:tests}
In this section we present numerical experiment results
to demonstrate the theoretical findings in the previous sections
and compare different permutation strategies.

The following settings are applied to all numerical experiments. 
In \eqref{eq:lm} we set $m=n=20$, $M=4$, the regularization parameter $\lambda$ as in \eqref{eq:lambda}, and the model matrices $\A$ are generated with two methods:
\begin{itemize}
    \item \textbf{Type 1}. The entries of $\A$ are sampled independently and randomly from a standard Gaussian distribution. 
    \item \textbf{Type 2}. $\A = \U\D\V^\top$. $\U,\V$ are orthogonal matrices obtained from
    QR factorization of two independent $n\times n$ matrices whose entries are sampled independently and randomly from a standard Gaussian distribution, 
    and $\D$ is an $n\times n$ diagonal matrix with $d_{ii} = 10^{3(n/2 - i) / (n-1)}$. Under this setting, the condition number of $\A$ is always 1000.
\end{itemize}

\subsection{Theoretical and experimental SP of $\x^\sRB$}\label{sec:numexp_the_exp_sp}
First we compare the theoretical SP calculated according to Theorem \ref{th:babaisp} and the experimental SP. 
In the model \eqref{eq:lm} we set $\sigma \in \{0.05,0.10,0.15,\ldots,0.50\}$.
For each $\sigma$, 
we randomly generate 100 model matrices $\A$ by the MATLAB built in function
$\text{randn(n,n)}$, 
and for each $\A$ we generate 100 parameter vectors $\x^\ast$ 
whose entries are independently sampled from distribution \eqref{eq:xprob}. 
For each $\x^\ast$ we again generate 100 noise vectors $\v\sim N(\0,\sigma^2 \I)$
by $\sigma*\text{randn(n,1)}$, 
In total each $\sigma$ corresponds to $10^6$ instances. 
The average experimental SP is calculated as the ratio of number of correct estimates to $10^6$.
The average theoretical SP is computed according to equation \eqref{eq:rbsp} over 100 model matrices.

Figure \ref{fig:sp_model1}-\ref{fig:sp_model2} display 
the average theoretical and experimental SP of $\x^\sRB$ and $\x^\sBB$ 
for the two types of matrices.
In the plots, the abbreviated prefixes ``EXP-" and ``TH-" stand for ``experimental" and ``theoretical" respectively, and ``RBSP" and ``BBSP" stand for
$L_0$-regularized Babai point success probability and (unregularized) box-constrained 
Babai point success probability, respectively. 
From these figures, we observe:
\ben 
\item The theoretical curves almost coincide with their corresponding experimental curves in every setting for both $\x^\sRB$ and $\x^\sBB$.

\item $\x^\sRB$ has higher SP compared to $\x^\sBB$ for the same $p$ and the same
$\sigma$, which agrees with Theorem \ref{th:relation}.

\item For a fixed $\sigma$, the ratio of the SP of $\x^\sRB$ 
(the top curve) to the SP of $\x^\sBB$ (the bottom curve) when $p=0.3$
is larger than the ratio of the SP of $\x^\sRB$ 
(the second curve) to the SP of $\x^\sBB$ (the third curve) when $p=0.7$, 
which agrees with Theorem \ref{th:spratiomono}.

\item All the SPs for Type 1 matrices are smaller than the corresponding SPs
for Type 2 matrices. 
This is because on average the largest $r_{kk}$ for Type 2 matrices are 
more or less the same size as those for Type 1 matrices, 
but the smallest $r_{kk}$ for Type 2 matrices are 
much smaller than those for Type 1 matrices (note that on average Type 2 matrices
are much more ill-conditioned than Type 1 matrices).
% \blue{It is true, I have calculated the average values of $\min r_{kk}$ for both types here. Type 1: 0.5755, Type 2: 0.0789.
% For $\max r_{kk}$: Type 1: 5.1362, Type 2: 9.4188}
\een

%\begin{figure*}[!t]
%    \centering
%    \captionsetup{justification=centering}
%    \subfloat[Theoretical and experimental SP \\ with Type 1 matrices]{\includegraphics[scale=0.6]{}\label{fig:sp_model1}}
%    \hfill
%    \subfloat[Theoretical and experimental SP \\ with Type 2 matrices]{\includegraphics[scale=0.6]{}\label{fig:sp_model2}
%    \caption{Success probability of the Babai points}
%    \label{fig:spplots}}
%\end{figure*}

\begin{figure}[htbp]
    \subfigure[Type 1 matrices]{%
    \includegraphics[width=0.45\columnwidth]{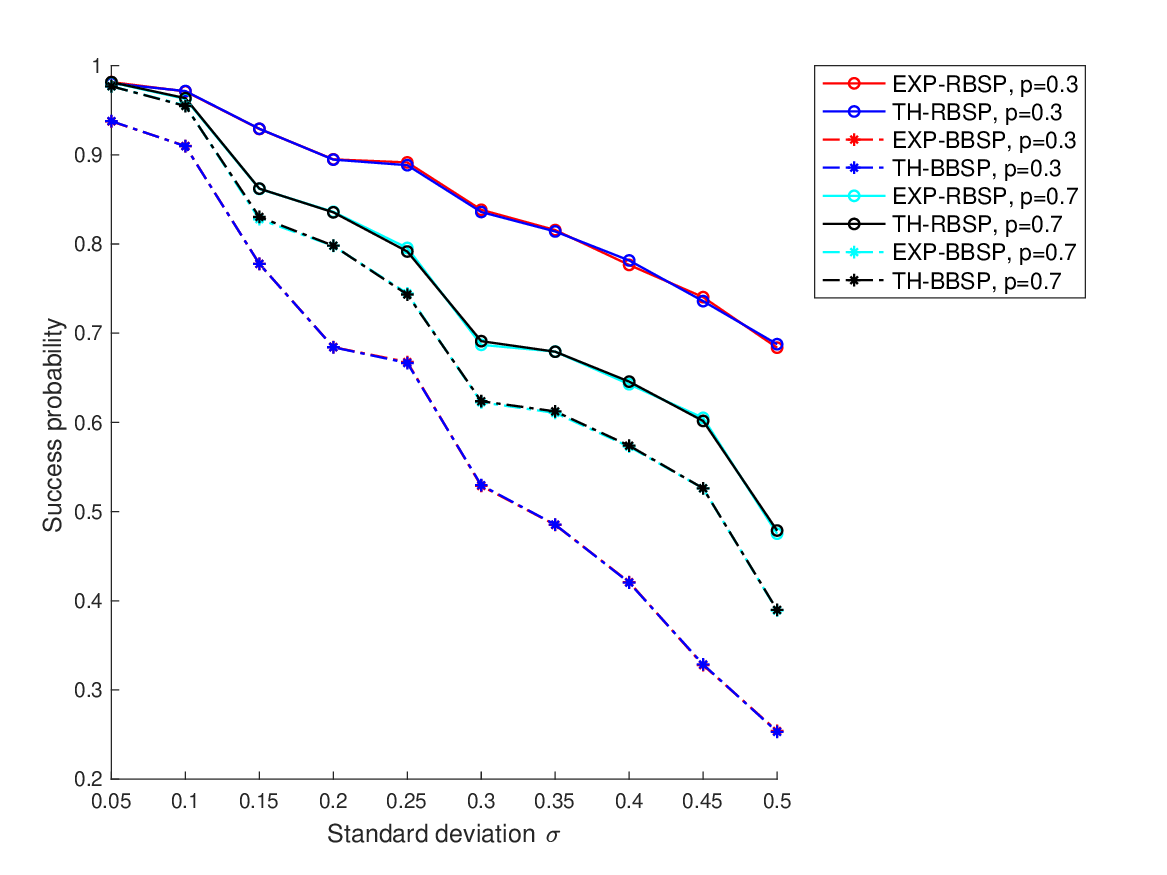}
    \label{fig:sp_model1}
  }
%  \hfill
  \subfigure[Type 2 matrices]{%
    \includegraphics[width=0.45\columnwidth]{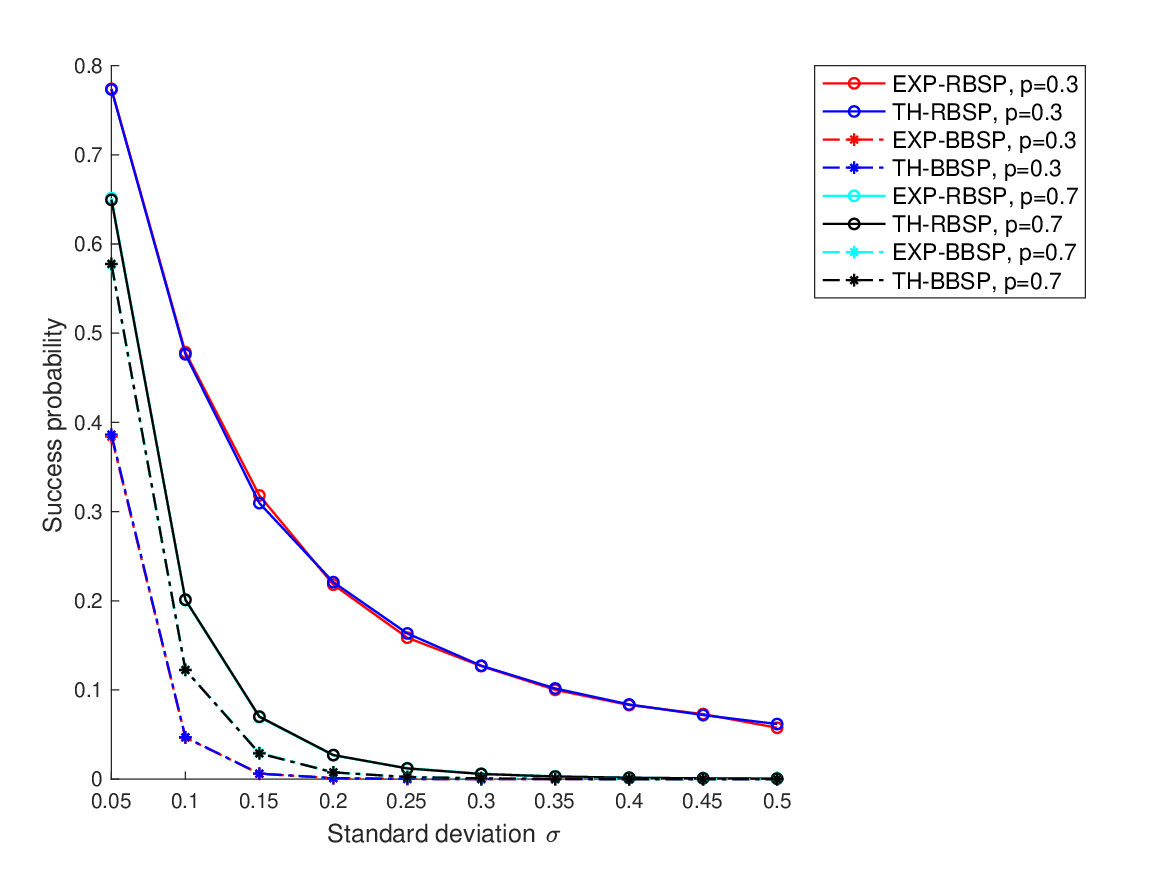}
   \label{fig:sp_model2}
  }
   \centering
  \caption{Theoretical and experimental success probability of the Babai points}
   \label{fig:spplots}
\end{figure}

\subsection{Effect of LLL-P on the SP of $\x^\sRB$ from Theorem \ref{th:lllbound2}}\label{sec:numexp_lllp_spchange}
We would like to verify the result from Theorem \ref{th:lllbound2}. 
We set $p\in\{0.3,0.7\}$. 
For each $p$, we generate 1000 matrices $\A$ for each type.
For each $\A$, we obtain its QR factorization $\A = \Q\R$, 
and we set $\sigma_1 = \min r_{kk} / (2\sqrt{2}\mu_2), \sigma_2 = \max_{1\leq k \leq n} \sqrt{2}r_{kk}/\mu_1$, 
where $\mu_1,\mu_2$ are defined in \eqref{eq:mu12}. 
When $p=0.3$, $\mu_1 = 0.2840$ and $\mu_2=0.6518$;
and when $p=0.7$, $\mu_1=0.4637$ and $\mu_2=0.6775$.
We then apply LLL-P strategy onto the model matrix 
and compare the theoretical SP using the standard deviation
$\sigma = \sigma_1$ and $\sigma = \sigma_2$, respectively, 
before and after the permutation.
By Theorem \ref{th:lllbound2}, we expect the SP to increase when $\sigma = \sigma_1$ and decrease when $\sigma = \sigma_2$.

We also remark that for all the generated model matrices we found 
the most left hand side of \eqref{eq:lllgap} is larger than 
the most right hand side, which means for any $\sigma > 0$
either the condition in conclusion 1 or the condition in conclusion 2
in Theorem \ref{th:lllbound2} will be satisfied. 
Therefore applying the LLL-P strategy will either increase or decrease the SP (not strictly). 
Table \ref{t:lll_sp_change} presents the number of times the SP increases, decreases or stays unchanged after LLL-P is applied. 
When $\sigma = \sigma_1$, LLL-P does not decrease the SP for any $\A$ and $p$, and when $\sigma = \sigma_2$, LLL-P does not increase the SP. This agrees with the conclusions from Theorem \ref{th:lllbound2}.

\begin{table}
\centering
\caption{Changes in the SP of $\x^\sRB$ after applying LLL-P. }
\label{t:lll_sp_change}
\begin{tabular}{|c|c|c|c|c|c|} 
\hline
\multirow{2}{*}{$\sigma$}   & \multirow{2}{*}{SP change} & \multicolumn{2}{c|}{$p=0.3$}    & \multicolumn{2}{c|}{$p=0.7$}     \\ 
\cline{3-6}
                            &                            & Type 1 & Type 2 & Type 1 & Type 2  \\ 
\hline
\multirow{3}{*}{$\sigma_1$} & Strict increase            & 995            & 1000           & 995            & 1000            \\ 
\cline{2-6}
                            & Strict decrease            & 0              & 0              & 0              & 0               \\ 
\cline{2-6}
                            & No change                  & 5              & 0              & 5              & 0               \\ 
\hline
\multirow{3}{*}{$\sigma_2$} & Strict increase            & 0              & 0              & 0              & 0               \\ 
\cline{2-6}
                            & Strict decrease            & 1000           & 1000            & 1000           & 1000            \\ 
\cline{2-6}
                            & No change                  & 0              & 0              & 0              & 0               \\
\hline
\end{tabular}
\end{table}

\subsection{Effects of column permutation strategies on the experimental SP of $\x^\sRB$}\label{sec:numexp_avgsp_strategies}
To see how the column permutation strategies affect 
the SP of the $L_0$-regularized Babai point, 
we test for the experimental SP of $\x^\sRB$ 
with various permutation strategies, 
namely LLL-P, SQRD, LSP (which produces the same ordering as V-BLAST), GSP as in Algorithm \ref{alg:gsp} and MSP as in Algorithm \ref{alg:msp}.

In the model \eqref{eq:lm}, we set the value $p=0.3$,
% and $\sigma \in \{0.01,0.05,0.10,0.50,1.00,2.00, 2.50\}$. 
$\sigma \in \{0.05,0.10,0.50,0.80,1.00,1.50,2.00\}$ for Type 1 matrices, and $\sigma \in \{0.01,0.03,0.05,0.10,0.20,0.30,0.50\}$ for Type 2 matrices. 
For each $\sigma$, we randomly generate 500 model matrices for each type, 
and for each $\A$ we generate 500 parameter vectors $\x^\ast$ 
whose entries are independently sampled from distribution \eqref{eq:xprob}, 
and 500 noise vectors $\v$ sampled from $N(\0, \sigma^2\I)$ by $\sigma*\text{randn(n,1)}$. 
In total each $\sigma$ corresponds to $2.5\times10^5$ instances.
The average experimental SP is calculated as the ratio of correct estimates to $2.5\times 10^5$. 

Table \ref{t:sp_perm_type1}-\ref{t:sp_perm_type2} display the average experimental SP 
of the $L_0$-regularized Babai points corresponding to various column permutation strategies,
where ``No'' means no permutation strategy is employed.

From the tables, 
we can see that applying $L_0$ regularization gives improvement on the SP with or without permutation strategies. The improvement is especially significant when $\sigma$ is large. 
Furthermore, when $\sigma$ is small, all strategies tend to increase the SP of $\x^\sRB$, and LSP has best performance,
closely followed by MSP.
As $\sigma$ becomes large, 
% LLL-P, SQRD and LSP can decrease the SP,
% while GSP and MSP consistently increase the SP.
the performance of MSP is not as good as GSP in this case. 
Overall, the performance of MSP tends to be in between LSP and GSP, which means that MSP has better performance than GSP when $\sigma$ is small, and guarantees an SP increase, 
% unlike LSP, 
when $\sigma$ is large.

\begin{table*}[]
\centering
\caption{SP of Babai points with column permutations, Type 1.}
\label{t:sp_perm_type1}
\begin{tabular}{|cc|ccccccc|}
\hline
\multicolumn{2}{|c|}{\multirow{2}{*}{$\P$}}              & \multicolumn{7}{c|}{$\sigma$}                                \\ \cline{3-9} 
\multicolumn{2}{|c|}{}                                   & 0.05   & 0.10   & 0.50   & 0.80   & 1.00   & 1.50   & 2.00   \\ \hline
\multicolumn{1}{|c|}{\multirow{4}{*}{$\x^\sBB$}} & No    & 0.9380 & 0.8821 & 0.2791 & 0.0576 & 0.0155 & 0.0005 & 0.0000 \\
\multicolumn{1}{|c|}{}                           & LLL-P & 0.9777 & 0.9664 & 0.5703 & 0.1373 & 0.0347 & 0.0009 & 0.0000 \\
\multicolumn{1}{|c|}{}                           & SQRD  & 0.9705 & 0.9639 & 0.5802 & 0.1418 & 0.0353 & 0.0009 & 0.0000 \\
\multicolumn{1}{|c|}{}                           & LSP   & 0.9937 & 0.9892 & 0.6409 & 0.1507 & 0.0370 & 0.0010 & 0.0000 \\ \hline
\multicolumn{1}{|c|}{\multirow{6}{*}{$\x^\sRB$}} & No    & 0.9805 & 0.9636 & 0.7019 & 0.4690 & 0.3332 & 0.1510 & 0.0784 \\
\multicolumn{1}{|c|}{}                           & LLL-P & 0.9934 & 0.9891 & 0.8334 & 0.5433 & 0.3751 & 0.1596 & 0.0800 \\
\multicolumn{1}{|c|}{}                           & SQRD  & 0.9911 & 0.9887 & 0.8395 & 0.5476 & 0.3779 & 0.1615 & 0.0806 \\
\multicolumn{1}{|c|}{}                           & LSP   & 0.9980 & 0.9967 & 0.8569 & 0.5500 & 0.3758 & 0.1603 & 0.0788 \\
\multicolumn{1}{|c|}{}                           & GSP   & 0.9968 & 0.9935 & 0.8177 & 0.5433 & 0.3823 & 0.1720 & 0.0909 \\
\multicolumn{1}{|c|}{}                           & MSP   & 0.9975 & 0.9960 & 0.8435 & 0.5461 & 0.3780 & 0.1654 & 0.0843 \\ \hline
\end{tabular}
\end{table*}

\begin{table*}[]
\centering
\caption{SP of Babai points with column permutations, Type 2.}
\label{t:sp_perm_type2}
\begin{tabular}{|cc|ccccccc|}
\hline
\multicolumn{2}{|c|}{\multirow{2}{*}{$\P$}}              & \multicolumn{7}{c|}{$\sigma$}                                \\ \cline{3-9} 
\multicolumn{2}{|c|}{}                                   & 0.01   & 0.03   & 0.05   & 0.10   & 0.20   & 0.30   & 0.50   \\ \hline
\multicolumn{1}{|c|}{\multirow{4}{*}{$\x^\sBB$}} & No    & 0.9987 & 0.7520 & 0.3744 & 0.0463 & 0.0011 & 0.0001 & 0.0000 \\
\multicolumn{1}{|c|}{}                           & LLL-P & 1.0000 & 0.9439 & 0.6812 & 0.1510 & 0.0061 & 0.0003 & 0.0000 \\
\multicolumn{1}{|c|}{}                           & SQRD  & 1.0000 & 0.9353 & 0.6683 & 0.1572 & 0.0074 & 0.0004 & 0.0000 \\
\multicolumn{1}{|c|}{}                           & LSP   & 1.0000 & 0.9937 & 0.8578 & 0.2474 & 0.0097 & 0.0005 & 0.0000 \\ \hline
\multicolumn{1}{|c|}{\multirow{6}{*}{$\x^\sRB$}} & No    & 0.9995 & 0.9172 & 0.7671 & 0.4735 & 0.2187 & 0.1266 & 0.0616 \\
\multicolumn{1}{|c|}{}                           & LLL-P & 1.0000 & 0.9805 & 0.8872 & 0.6115 & 0.2960 & 0.1628 & 0.0727 \\
\multicolumn{1}{|c|}{}                           & SQRD  & 1.0000 & 0.9775 & 0.8850 & 0.6212 & 0.3131 & 0.1786 & 0.0796 \\
\multicolumn{1}{|c|}{}                           & LSP   & 1.0000 & 0.9975 & 0.9496 & 0.6769 & 0.3281 & 0.1750 & 0.0701 \\
\multicolumn{1}{|c|}{}                           & GSP   & 1.0000 & 0.9841 & 0.9027 & 0.6370 & 0.3110 & 0.1825 & 0.0914 \\
\multicolumn{1}{|c|}{}                           & MSP   & 1.0000 & 0.9968 & 0.9395 & 0.6590 & 0.3156 & 0.1741 & 0.0809 \\ \hline
\end{tabular}
\end{table*}

\subsection{Effects of column permutations on  changes in SP of $\x^\sRB$}\label{sec:numexp_spchanges_strategies}
In order to observe the effect of the column permutation strategies on the   SP of the $L_0$-regularized Babai point, we set $p\in \{0.3, 0.7\}$ and $\sigma \in \{0.2,0.5,1.5\}$. For each setting, we generate 1000 model matrices $\A$ for
each type. 
We count the number of experiment runs that each permutation strategy (LLL-P, SQRD, LSP, GSP and MSP) strictly increases, strictly decreases or does not change the theoretical SP, which is computed according to Theorem \ref{th:babaisp}.

It is easy to see that under all of our settings, 
the GSP and MSP algorithms do not decrease the SP, which is as expected. 
As $\sigma$ increases, LLL-P and LSP become more likely to decrease the SP. 
SQRD, on the other hand, can have higher chance of decreasing the SP when $\sigma$ increases from 0.2 to 0.5, although the trend becomes consistent with LLL-P and LSP when $\sigma$ is larger. 
It is also worth noting that when $\sigma = 0.2$ which is relatively small, 
LSP can still decrease the SP in some instances
despite that the average SP increases as presented in Section \ref{sec:numexp_avgsp_strategies}. 
This suggests that even if LSP has better average performance when $\sigma$ is small, 
we may still prefer MSP over LSP for the guaranteed SP increase .

\begin{table*}[]
\centering
\caption{Changes in SP of $\x^\sRB$ after applying column permutations, Type 1}
\label{t:theoretical_sp_change_gaussian_n20}
\begin{tabular}{|c|c|ccccc|ccccc|}
\hline
\multirow{2}{*}{$\sigma$} & \multirow{2}{*}{SP change} & \multicolumn{5}{c|}{$p=0.3$}                                                                                             & \multicolumn{5}{c|}{$p=0.7$}                                                                                             \\ \cline{3-12} 
                          &                            & \multicolumn{1}{c|}{LLL-P} & \multicolumn{1}{c|}{SQRD} & \multicolumn{1}{c|}{LSP} & \multicolumn{1}{c|}{GSP}  & MSP  & \multicolumn{1}{c|}{LLL-P} & \multicolumn{1}{c|}{SQRD} & \multicolumn{1}{c|}{LSP} & \multicolumn{1}{c|}{GSP}  & MSP  \\ \hline
\multirow{3}{*}{0.2}      & Strict increase            & \multicolumn{1}{c|}{994}   & \multicolumn{1}{c|}{953}  & \multicolumn{1}{c|}{997}     & \multicolumn{1}{c|}{1000} & 1000 & \multicolumn{1}{c|}{997}   & \multicolumn{1}{c|}{942}  & \multicolumn{1}{c|}{998}     & \multicolumn{1}{c|}{1000} & 1000 \\
                          & Strict decrease            & \multicolumn{1}{c|}{6}     & \multicolumn{1}{c|}{47}   & \multicolumn{1}{c|}{3}       & \multicolumn{1}{c|}{0}    & 0    & \multicolumn{1}{c|}{3}     & \multicolumn{1}{c|}{58}   & \multicolumn{1}{c|}{2}       & \multicolumn{1}{c|}{0}    & 0    \\
                          & No change                  & \multicolumn{1}{c|}{0}     & \multicolumn{1}{c|}{0}    & \multicolumn{1}{c|}{0}       & \multicolumn{1}{c|}{0}    & 0    & \multicolumn{1}{c|}{0}     & \multicolumn{1}{c|}{0}    & \multicolumn{1}{c|}{0}       & \multicolumn{1}{c|}{0}    & 0    \\ \hline
\multirow{3}{*}{0.5}      & Strict increase            & \multicolumn{1}{c|}{992}   & \multicolumn{1}{c|}{992}  & \multicolumn{1}{c|}{988}     & \multicolumn{1}{c|}{1000} & 1000 & \multicolumn{1}{c|}{994}   & \multicolumn{1}{c|}{986}  & \multicolumn{1}{c|}{988}     & \multicolumn{1}{c|}{1000} & 1000 \\
                          & Strict decrease            & \multicolumn{1}{c|}{8}   & \multicolumn{1}{c|}{8}  & \multicolumn{1}{c|}{12}     & \multicolumn{1}{c|}{0}    & 0    & \multicolumn{1}{c|}{6}    & \multicolumn{1}{c|}{14}   & \multicolumn{1}{c|}{12}      & \multicolumn{1}{c|}{0}    & 0    \\
                          & No change                  & \multicolumn{1}{c|}{0}     & \multicolumn{1}{c|}{0}    & \multicolumn{1}{c|}{0}       & \multicolumn{1}{c|}{0}    & 0    & \multicolumn{1}{c|}{0}     & \multicolumn{1}{c|}{0}    & \multicolumn{1}{c|}{0}       & \multicolumn{1}{c|}{0}    & 0    \\ \hline
\multirow{3}{*}{1.5}      & Strict increase            & \multicolumn{1}{c|}{823}   & \multicolumn{1}{c|}{854}  & \multicolumn{1}{c|}{802}      & \multicolumn{1}{c|}{1000} & 1000 & \multicolumn{1}{c|}{955}   & \multicolumn{1}{c|}{957}  & \multicolumn{1}{c|}{926}     & \multicolumn{1}{c|}{1000} & 1000 \\
                          & Strict decrease            & \multicolumn{1}{c|}{177}   & \multicolumn{1}{c|}{146}  & \multicolumn{1}{c|}{198}     & \multicolumn{1}{c|}{0}    & 0    & \multicolumn{1}{c|}{45}   & \multicolumn{1}{c|}{43}  & \multicolumn{1}{c|}{74}     & \multicolumn{1}{c|}{0}    & 0    \\
                          & No change                  & \multicolumn{1}{c|}{0}     & \multicolumn{1}{c|}{0}    & \multicolumn{1}{c|}{0}       & \multicolumn{1}{c|}{0}    & 0    & \multicolumn{1}{c|}{0}     & \multicolumn{1}{c|}{0}    & \multicolumn{1}{c|}{0}       & \multicolumn{1}{c|}{0}    & 0    \\ \hline
\end{tabular}
\end{table*}

\begin{table*}[]
\centering
\caption{ Changes in SP of $\x^\sRB$ after applying column permutations, Type 2}
\label{t:theoretical_sp_change_svd_n20}
\begin{tabular}{|c|c|ccccc|ccccc|}
\hline
\multirow{2}{*}{$\sigma$} & \multirow{2}{*}{SP change} & \multicolumn{5}{c|}{$p=0.3$}                                                                                             & \multicolumn{5}{c|}{$p=0.7$}                                                                                             \\ \cline{3-12} 
                          &                            & \multicolumn{1}{c|}{LLL-P} & \multicolumn{1}{c|}{SQRD} & \multicolumn{1}{c|}{LSP} & \multicolumn{1}{c|}{GSP}  & MSP  & \multicolumn{1}{c|}{LLL-P} & \multicolumn{1}{c|}{SQRD} & \multicolumn{1}{c|}{LSP} & \multicolumn{1}{c|}{GSP}  & MSP  \\ \hline
\multirow{3}{*}{0.2}      & Strict increase            & \multicolumn{1}{c|}{999}   & \multicolumn{1}{c|}{993}  & \multicolumn{1}{c|}{1000}     & \multicolumn{1}{c|}{1000} & 1000 & \multicolumn{1}{c|}{1000}   & \multicolumn{1}{c|}{991}  & \multicolumn{1}{c|}{1000}    & \multicolumn{1}{c|}{1000} & 1000 \\
                          & Strict decrease            & \multicolumn{1}{c|}{1}    & \multicolumn{1}{c|}{7}    & \multicolumn{1}{c|}{0}       & \multicolumn{1}{c|}{0}    & 0    & \multicolumn{1}{c|}{0}     & \multicolumn{1}{c|}{9}    & \multicolumn{1}{c|}{0}       & \multicolumn{1}{c|}{0}    & 0    \\
                          & No change                  & \multicolumn{1}{c|}{0}     & \multicolumn{1}{c|}{0}    & \multicolumn{1}{c|}{0}       & \multicolumn{1}{c|}{0}    & 0    & \multicolumn{1}{c|}{0}     & \multicolumn{1}{c|}{0}    & \multicolumn{1}{c|}{0}       & \multicolumn{1}{c|}{0}    & 0    \\ \hline
\multirow{3}{*}{0.5}      & Strict increase            & \multicolumn{1}{c|}{892}   & \multicolumn{1}{c|}{985}  & \multicolumn{1}{c|}{832}      & \multicolumn{1}{c|}{1000} & 1000 & \multicolumn{1}{c|}{989}   & \multicolumn{1}{c|}{999}  & \multicolumn{1}{c|}{988}     & \multicolumn{1}{c|}{1000} & 1000 \\
                          & Strict decrease            & \multicolumn{1}{c|}{108}   & \multicolumn{1}{c|}{15}  & \multicolumn{1}{c|}{168}     & \multicolumn{1}{c|}{0}    & 0    & \multicolumn{1}{c|}{11}   & \multicolumn{1}{c|}{1}  & \multicolumn{1}{c|}{12}     & \multicolumn{1}{c|}{0}    & 0    \\
                          & No change                  & \multicolumn{1}{c|}{0}     & \multicolumn{1}{c|}{0}    & \multicolumn{1}{c|}{0}       & \multicolumn{1}{c|}{0}    & 0    & \multicolumn{1}{c|}{0}     & \multicolumn{1}{c|}{0}    & \multicolumn{1}{c|}{0}       & \multicolumn{1}{c|}{0}    & 0    \\ \hline
\multirow{3}{*}{1.5}      & Strict increase            & \multicolumn{1}{c|}{157}    & \multicolumn{1}{c|}{186}    & \multicolumn{1}{c|}{21}       & \multicolumn{1}{c|}{1000} & 999  & \multicolumn{1}{c|}{414}    & \multicolumn{1}{c|}{625}   & \multicolumn{1}{c|}{107}       & \multicolumn{1}{c|}{1000} & 1000  \\
                          & Strict decrease            & \multicolumn{1}{c|}{843}   & \multicolumn{1}{c|}{814}  & \multicolumn{1}{c|}{979}     & \multicolumn{1}{c|}{0}    & 0    & \multicolumn{1}{c|}{586}   & \multicolumn{1}{c|}{375}  & \multicolumn{1}{c|}{893}     & \multicolumn{1}{c|}{0}    & 0    \\
                          & No change                  & \multicolumn{1}{c|}{0}     & \multicolumn{1}{c|}{0}    & \multicolumn{1}{c|}{0}       & \multicolumn{1}{c|}{0}    & 1    & \multicolumn{1}{c|}{0}     & \multicolumn{1}{c|}{0}    & \multicolumn{1}{c|}{0}       & \multicolumn{1}{c|}{0}    & 0    \\ \hline
\end{tabular}
\end{table*}

\section{Summary and Conclusion}
We have derived a formula for
the SP of the $L_0$-regularized Babai point $\x^\sRB$ under 
an $L_0$-RBILS problem setting to detect $\x^*$ in the linear model
\eqref{eq:lm}.
% having the same probability to take any value 
% other than zero in its domain. 
We have shown that 
% the regularization parameter $\lambda$ defined by 
% \eqref{eq:lambda} maximizes the SP of $\x^\sRB$,  
with the regularization parameter $\lambda$ given by \eqref{eq:lambda},
$\x^\sRB$ is more preferable than the unregularized 
box-constrained Babai point $\x^\sBB$ 
in terms of success probability.
Furthermore, we have shown that the SP of $\x^\sRB$ 
monotonically increases with respect to the magnitude of the diagonal entries of the factor $\R$ obtained from the QR factorization of the model matrix $\A$, 
and that the ratio of 
the SP of $\x^\sRB$ to that of $\x^\sBB$
monotonically decreases as $p$ increases within its domain.
Additionally, a bound on the SP of $\x^\sRB$ has been derived, 
which acts as a lower or upper bound 
depending on the conditions satisfied. 

When columns of the model matrix $\A$ are re-ordered,
$\x^\sRB$ changes, so does its SP.
We have shown that applying the LLL-P permutation strategy guarantees 
to increase or decrease the SP depending on 
the conditions satisfied. 
% Then, we have given some numerical results that
% shows the different effects of V-BLAST, SQRD 
% and LLL-P strategies on the SP of $\x^\sRB$. 
Then three SP based permutation strategies have been proposed,
aiming to increase the SP of $\x^\sRB$. 
The first strategy LSP chooses the ordering greedily 
according to the local SP value at each level,
which produces the same ordering as the existing V-BLAST strategy. 
LSP can have great performance, however a major disadvantage 
is that it can decrease the SP.
The second strategy GSP, on the other hand, 
chooses the ordering greedily according to the global SP
and guarantees the SP not to decrease.
The third strategy MSP combines the advantages of LSP and GSP, 
having performance close to LSP when LSP performs well, 
while guaranteeing the SP not to decrease. 
% The numerical experiments have shown that GSP can
% be especially useful when the noise is large, while the existing permutation strategies tend to decrease the SP. 
% The MSP strategy tends to perform not as well as GSP when noise is large, however it has an advantage over GSP when the noise is small. 

The idea of the proposed approach for column permutations is quite general. It can be applied to other Babai estimation problems, such as the box constrained Babai estimation,  where each entry of the true integer parameter vector is 
uniformly distributed over $\calX/\{0\}$, i.e., the alphabet set $\calX$ without 0 included.

% For the Babai estimation without constraints, a randomized version of Babai's nearest plane algorithm is proposed
% by Klein \cite{Kle00}. 
% Some theoretical analyses of Klein's algorithm are presented in \cite{ChaCX20}.
% Modified Klein's algorithms have been developed recently, see e.g.,  
% \cite{LiuLS11, BaiLLYC16, Wang19, WanHL19, WanL18, WanLL21}.
% We are developing 
% some randomized versions of the $L_0$-regularized Babai point $\x^\sRB$
% and give theoretical analyses. 

\section*{Acknowledgment}
We acknowledge the support of the Natural Sciences and Engineering Research Council of Canada (NSERC) [funding reference number 97463].

\appendix
\section*{Proofs of lemmas}
\subsection{Proof of Lemma \ref{le:akj}}\label{app:le:akj}
\begin{proof} 
It is obvious that $\alpha_k^{(1)} \geq -1/2$ and 
$\alpha_k^{(j)}$ is strictly decreasing when $j$ is increasing from 1 to $M$.

If $j_k \geq 2$, from \eqref{eq:alphakj} and \eqref{eq:j0}
\begin{align*}
\alpha_k^{(j_k-1)} 
 =  \frac{1}{2j_k-3}\frac{\lambda}{r_{kk}^2} - \frac{1}{2}(2j_k-3)  
  =  \frac{ 2j_k-1}{2j_k-3} \alpha_k^{(j_k)} + \frac{4j_k-4}{2j_k-3}  
  \geq -\frac{ 2j_k-1}{2j_k-3} +  \frac{4j_k-4}{2j_k-3}  
  =  1.
\end{align*}
Then  \eqref{eq:akjg1} follows by monotonicity.

If $j_k\leq M-1$, obviously the second inequality in \eqref{eq:akjl1} holds by the definition of $j_k$,
and in particular $\alpha_k^{(j_k+1)}<-1$. Then
\begin{align*}
\alpha_k^{(j_k)} 
  =  \frac{1}{2j_k-1}\frac{\lambda}{r_{kk}^2} - \frac{1}{2}(2j_k-1)  
  = \frac{ 2j_k+1}{2j_k-1} \alpha_k^{(j_k+1)}  - \frac{4j_k}{2j_k-1} 
  \leq   -\frac{2j_k+1}{ 2j_k-1} +  \frac{4j_k}{2j_k-1}  
  = 1,
\end{align*}
completing the proof.
\end{proof}

\subsection{Proof of Lemma \ref{le:ddF_properties}} \label{app:le:ddF_properties}
\begin{proof} 
To simplify notation, denote 
\begin{equation}\label{eq:theta}
\theta_\gamma := \frac{\blam}{2(2j_\gamma-1)\gamma} + \frac{1}{2}(2j_\gamma-1)\gamma  .
\end{equation}
In Lemma \ref{le:rkcts} we showed that $\rho(\gamma)$ and $\rho'(\gamma)$ is continuous on $(0,\infty)$. By \eqref{eq:dF}, the continuity of $F'(\zeta)$ on $(-\infty,\infty)$ can be directly obtained from these results, completing the proof for Lemma \ref{le:ddF_properties}-\ref{le:F1der}.

Let $\gamma = e^\zeta\in (0,\infty)$. 
Then, by \eqref{eq:ddF}, to prove our second claim, 
it is sufficient to prove that 
\al{
& F''(\ln \gamma) \to 0^+ \text{ as } \gamma \to 0^+,   \label{eq:ddFg1} \\
& F''(\ln \gamma) \to 0^- \text{ as } \gamma \to \infty.  \label{eq:ddFg2}
}
where 
\be \label{eq:F2der}
F''(\ln \gamma)= \frac{\rho''(\gamma)}{\rho(\gamma)}\gamma^2 + \frac{\rho'(\gamma)}{\rho(\gamma)} \gamma - \left( \frac{\rho'(\gamma)}{\rho(\gamma)} \gamma \right)^2
\ee

When $\gamma \in (\gamma^{(i-1)}, \gamma^{(i)})$ for any $i=1,\ldots,M$, 
from \eqref{eq:drkg}
the first derivative of $\rho(\gamma)$ is given by 
\al{
\rho'(\gamma)= \frac{2}{\sqrt{\pi}} \left[ \frac{M-j_\gamma}{M}pe^{-\gamma^2} + (1-p)(2j_\gamma-1)e^{-\theta_\gamma^2} \right], \label{eq:drho_thetas}
}
and the second derivative  is given by
\al{
  \rho''(\gamma) = \frac{2}{\sqrt{\pi}}\Bigg[ -\frac{2(M-j_\gamma)}{M}p\gamma e^{-\gamma^2}   
  + \frac{1}{2}(1-p)(2j_\gamma-1)   
     e^{-\theta_\gamma^2} \left( \frac{\blam^2}{(2j_\gamma-1)^2 \gamma^3} - (2j_\gamma-1)^2\gamma \right)\Bigg] \label{eq:ddrho_thetas}.
}

First we consider $\gamma\to 0^+$.
By Corollary \ref{cor:rbarlimit} we have 
\be \label{eq:rholim0}
\lim_{\gamma \to 0^+} \rho(\gamma) = 1-p.
\ee
From \eqref{eq:j0exp}, when $\gamma$ is small enough, we have
$\lim_{\gamma \to 0^+} j_\gamma = M.$
From \eqref{eq:theta} we have $\theta_\gamma \to \infty $, leading to $e^{-\theta_\gamma^2} \to 0$.
Therefore from \eqref{eq:drho_thetas} we observe that as $\gamma\to 0^+$,
\be \label{eq:drholim0}
\rho'(\gamma) \to 0^+.
\ee
It is easy to see that $\rho''(\gamma)>0$ when $\gamma$ is small enough. As $\gamma \to 0^+$, $e^{-\theta_\gamma^2}$ decays exponentially, which dominates over the explosion of $\frac{\blam^2}{(2j_\gamma - 1)^2 \gamma^3}$. Therefore from \eqref{eq:ddrho_thetas} we observe that as $\gamma\to 0^+$, 
\be \label{eq:ddrholim0}
\rho''(\gamma) \to 0^+
\ee
Combining \eqref{eq:rholim0}, \eqref{eq:drholim0} and \eqref{eq:ddrholim0}, from \eqref{eq:F2der} we can conclude that \eqref{eq:ddFg1} holds.

Now we consider
$\gamma\to \infty$.
By Corollary \ref{cor:rbarlimit} we have 
\be \label{eq:rholiminf}
\lim_{\gamma\to \infty} \rho(\gamma) = 1.
\ee
From \eqref{eq:j0exp}, we observe that when $\gamma$ is big enough, 
$j_\gamma = 1.$
% From \eqref{eq:theta}, $\theta_\gamma = \frac{\blam}{2\gamma}+\frac{1}{2}\gamma$.
Therefore, when $\gamma$ is big enough, we have
\al{
    \theta_\gamma & = \frac{\blam}{2\gamma} + \frac{1}{2}\gamma, \notag \\
    \rho'(\gamma)\gamma & =
    \frac{2}{\sqrt{\pi}}\left[ \frac{M-1}{M}p\gamma e^{-\gamma^2} 
    + (1-p)\gamma e^{-\theta_\gamma^2} \right],  \label{eq:drholiminf1} \\
    % &\to \frac{2}{\sqrt{\pi}}\gamma\left[\frac{M-1}{M}pe^{-\gamma^2} + (1-p)e^{-\frac{1}{4}\gamma^2 - \frac{\blam}{2}}\right]  \\
    % &\to 0^+, \\
    \rho''(\gamma)\gamma^2 
    & = \frac{2}{\sqrt{\pi}}
    \Bigg[ \frac{1}{2}(1-p)\left( \frac{\blam^2}{\gamma} - \gamma^3 \right)e^{-\theta_\gamma^2}  
 -\frac{2(M-1)}{M}p\gamma^3 e^{-\gamma^2} \Bigg], \notag 
    % \to & -\frac{2}{\sqrt{\pi}}\gamma^3\left[ \frac{2(M-1)}{M}pe^{-\gamma^2} + \frac{1}{2}(1-p)e^{-\frac{1}{4}\gamma^2 - \frac{\blam}{2}} \right]. \notag
}
leading to
\aln{
\rho''(\gamma)\gamma^2 + \rho'(\gamma)\gamma =&
 \frac{2}{\sqrt{\pi}}\Bigg[ \frac{M-1}{M}p(\gamma -2\gamma^3)e^{-\gamma^2}   
  + (1-p) \left(\gamma -\frac{1}{2}\gamma^3 + \frac{\blam^2}{\gamma}\right)  e^{-\theta_\gamma^2}\Bigg].
}
When $\gamma$ is large enough, obviously 
$\rho''(\gamma)\gamma^2 + \rho'(\gamma)\gamma<0$.
Note that $e^{-\theta_\gamma^2}$ decays at an exponential rate with respect to $\gamma$.
Since the exponential decay is much faster than polynomial explosion,
the above limit gives
\be \label{eq:sumliminf}
\rho''(\gamma)\gamma^2 + \rho'(\gamma)\gamma \to 0^-.
\ee
From \eqref{eq:drholiminf1}, it is easy to see that when $\gamma\to\infty$,
\be \label{eq:drholiminf}
\rho'(\gamma)\gamma \to 0^+.
\ee
% By Corollary \ref{cor:rbarlimit}, $\rho(\gamma)\to 1$. 
Combining \eqref{eq:rholiminf}, \eqref{eq:sumliminf} and \eqref{eq:drholiminf}, from \eqref{eq:F2der}  
it is easy to see that \eqref{eq:ddFg2} holds.
This completes our proof for Lemma \ref{le:ddF_properties}-\ref{le:F2derend}.

When $M=1$, $j_\gamma = 1$ for any $\gamma > 0$. 
%For any $\zeta \in (-\infty, \infty)$, $\gamma = e^\zeta \in (0,\infty)$.
From \eqref{eq:ddrho_thetas} we can obtain that
$\rho''(\gamma)$ is continuous on $(0,\infty)$. 
Using this and the continuity of 
$\rho'(\gamma)$ and $\rho(\gamma)$ shown in Lemma \ref{le:rkcts}, 
from \eqref{eq:F2der} we can obtain the continuity of $F''(\zeta)$ with $\zeta=\ln \gamma$ 
on $(-\infty, \infty)$, concluding the proof for Lemma \ref{le:ddF_properties}-\ref{le:M1F2der}.

% We now show Lemma \ref{le:ddF_properties}-\ref{le:M2F2der}.
For any $i=1,\ldots,M-1$, when $\zeta \in (\ln\gamma^{(i-1)}, \ln\gamma^{(i)})$, 
$\gamma = e^\zeta \in (\gamma^{(i-1)}, \gamma^{(i)})$. 
By \eqref{eq:jkpw} the value of $j_\gamma$ is fixed for $\gamma\in (\gamma^{(i-1)}, \gamma^{(i)})$. 
Therefore by \eqref{eq:ddrho_thetas}, $\rho''(\gamma)$ is continuous in $(\gamma^{(i-1)}, \gamma^{(i)})$.
By Lemma \ref{le:rkcts}, $\rho(\gamma)$ and $\rho'(\gamma)$ are continuous 
in $(0,\infty)$.
Therefore $F''(\zeta)$ is continuous on $(\ln\gamma^{(i-1)}, \ln\gamma^{(i)})$. 

Using \eqref{eq:ddF} and the continuity of $\rho(\gamma)$ and $\rho'(\gamma)$ 
from Lemma \ref{le:rkcts}, we see that 
to show that $\lim_{\zeta \to \ln\gamma^{(i)-}} F''(\zeta) < \lim_{\zeta \to \ln\gamma^{(i)+}} F''(\zeta)$, 
it suffices to show
\begin{equation}\label{eq:ddrho_jump}
    \lim_{\gamma\to \gamma^{(i)-}} \rho''(\gamma) - \lim_{\gamma\to\gamma^{(i)+}} \rho''(\gamma) < 0.
\end{equation}
In the following we derive an inequality equivalent to 
\eqref{eq:ddrho_jump} by finding the expressions of the two limits.

Using \eqref{eq:jkpw}, from \eqref{eq:theta} we can easily verify that 
\aln{
    & \lim_{\gamma\to \gamma^{(i)-}}\theta_\gamma = \lim_{\gamma\to \gamma^{(i)+}}\theta_\gamma  
    =  \frac{1}{2} \left( \sqrt{\frac{\blam(2(M-i)-1)}{2(M-i)+1}} + \sqrt{\frac{\blam(2(M-i)+1)}{2(M-i)-1}} \right).
    % \label{eq:thetalim}
}
Therefore, when $\gamma\to \gamma^{(i)}$ from either side,
\begin{equation}\label{eq:thetasqlim}
    \theta_\gamma^2 \to \frac{4\blam(M-i)^2}{4(M-i)^2-1}.
\end{equation}
Using \eqref{eq:ddrho_thetas}, \eqref{eq:jkpw} and \eqref{eq:thetasqlim},
from \eqref{eq:ddrho_jump}
we obtain the following equivalent inequality:
\aln{
    &\left( \frac{2(i-1)}{M} - \frac{2i}{M} \right) p\gamma^{(i)} e^{-(\gamma^{(i)})^2} \\    
    &+ \frac{1}{2}(1-p)(2(M-i)+1)e^{-\frac{4\blam(M-i)^2}{4(M-i)^2-1}}  
      \times \left[ \frac{\blam^2}{(2(M-i)+1)^2 (\gamma^{(i)})^3} - (2(M-i)+1)^2 \gamma^{(i)} \right] \\
    &- \frac{1}{2}(1-p)(2(M-i)-1)e^{-\frac{4\blam(M-i)^2}{4(M-i)^2-1}}  
       \times \left[ \frac{\blam^2}{(2(M-i)-1)^2 (\gamma^{(i)})^3} - (2(M-i)-1)^2 \gamma^{(i)} \right]  
     <   0.
}
After we substitute \eqref{eq:giexp} into the above inequality 
and do some algebraic manipulations, it becomes
\begin{equation*}
    \frac{p}{2M}e^{-\frac{\blam}{4(M-i)^2-1}} - 4(M-i)^2(1-p)e^{-\frac{4\blam(M-i)^2}{4(M-i)^2-1}} < 0 .
\end{equation*}
Using $\frac{1-p}{p/(2M)} = e^\blam$ (see \eqref{eq:lambda}), we can rewrite the above inequality as
\begin{equation*}
    4(M-i)^2 > e^{ \blam\left[ \frac{4(M-i)^2}{4(M-i)^2-1} -\frac{1}{4(M-i)^2-1} -1 \right] } = 1.
\end{equation*}
% It is easy to verify the right hand side equals to 1. 
Obviously, the above inequality holds for any $i=1,\ldots,M-1$.
This concludes the proof for Lemma \ref{le:ddF_properties}-\ref{le:M2F2der}.

Finally, we prove the claim in Lemma \ref{le:ddF_properties}-\ref{le:dFmono}.
Recall that  $\ln\mu_1$ is the smallest root of $F''(\zeta)$.
When $M=1$, by Lemma \ref{le:ddF_properties}-\ref{le:F2derend} and  \ref{le:ddF_properties}-\ref{le:M1F2der}  
it follows that $F''(\zeta)>0$ for any $\zeta \in (-\infty, \ln\mu_1)$.
When $M\geq 2$, by the first part of  Lemma \ref{le:ddF_properties}-\ref{le:M2F2der},
we observe that $F''(\zeta)$ does not change sign over the interval 
$(\ln \gamma^{(i-1)},\ln \gamma^{(i)}) \cap (-\infty, \ln \mu_1)$,
otherwise $\ln\mu_1$ would not be the smallest root of $F''(\zeta)$.
Then, by the first part of Lemma \ref{le:ddF_properties}-\ref{le:F2derend} and the second part of 
Lemma \ref{le:ddF_properties}-\ref{le:M2F2der}
we can conclude that $F''(\zeta)>0$ for any 
$\zeta \in  (\ln \gamma^{(i-1)},\ln \gamma^{(i)}) \cap (-\infty, \ln \mu_1)$. 
Therefore, by Lemma \ref{le:ddF_properties}-\ref{le:F1der}
we can conclude that $F'(\zeta)$ is strictly increasing on 
$(-\infty, \ln \mu_1)$.

A symmetrical argument can be used to show that 
$F'(\zeta)$ is strictly decreasing on $(\ln \mu_2, \infty)$.
This concludes our proof for Lemma \ref{le:ddF_properties}-\ref{le:dFmono}.
\end{proof}

\subsection{Proof of Lemma \ref{le:phimono}}\label{app:le:phimono}
\begin{proof}
From \eqref{eq:Phi} and \eqref{eq:dF}, we obtain
\al{
%\hspace*{-3mm}
\frac{\partial}{\partial \gamma}\Phi(\gamma,\beta)  &= \frac{1}{\gamma} \rho(\gamma)  \rho\bigg(\frac{\beta}{\gamma}\bigg)  \left[ \frac{\gamma \rho'(\gamma)}{ \rho(\gamma)} - \frac{\left(\frac{\beta}{\gamma}\right)\rho'\left(\frac{\beta}{\gamma}\right) }{ \rho \left(\frac{\beta}{\gamma}\right)} \right]  
 = \frac{1}{\gamma} \rho(\gamma)  \rho\bigg(\frac{\beta}{\gamma}\bigg) \left[ F'(\ln \gamma) - F'\left( \ln \frac{\beta}{\gamma} \right) \right]. \label{eq:dPhizb} 
}
where the function $F'$ is as in \eqref{eq:dF}.
To prove our claim, it suffices to show that $F'(\ln\gamma) < F'(\ln(\beta/\gamma))$ when either $\beta/\mu_1 < \gamma < \sqrt{\beta}$ or $\sqrt{\beta} < \gamma < \beta/\mu_2$. 

Since the function $\ln(\cdot)$ is strictly increasing on $(0,\infty)$,
by Lemma \ref{le:ddF_properties}-\ref{le:dFmono}
we have that $F'(\ln \gamma)$ is strictly increasing when $\gamma \in (0,\mu_1)$ and strictly decreasing when $\gamma \in (\mu_2, +\infty)$. 

First, we assume $\beta/\mu_1 < \gamma < \sqrt{\beta}$
holds. 
Then we have that $\gamma<\beta/\gamma<\mu_1$. 
Since 
$F'(\ln\gamma)$ is strictly increasing on $(0,\mu_1)$,
it is easy to obtain $F'(\ln \gamma) < F'(\ln (\beta/\gamma))$. 

Next we assume $\sqrt{\beta} < \gamma < \beta/\mu_2$
holds. 
Then we have that 
% $\gamma > \beta/\gamma > \mu_2$.
$\mu_2 < \beta/\gamma < \gamma$.
Since 
$F'(\ln\gamma)$ is strictly decreasing on $(\mu_2,+\infty)$,
we have $F'(\ln \gamma) < F'(
\ln(\beta/\gamma))$.
\end{proof}

\bibliographystyle{IEEEtran}
\bibliography{ref} 
   
\end{document}